\documentclass{article}
\usepackage{amsmath,amssymb,amsthm}
\usepackage[dvipdfm]{graphicx}

\newtheorem{theorem}{Theorem}[section]

\newtheorem{proposition}[theorem]{Proposition}
\newtheorem{lemma}[theorem]{Lemma}
\newtheorem{remark}[theorem]{Remark}

\newcommand{\arc}[1]{\mathrm{arc}(#1)}


\begin{document}

\title{Long-time asymptotics for the defocusing \\
integrable 
discrete nonlinear Schr\"odinger equation}

\author{Hideshi  Yamane}%

\maketitle

\abstract{
We investigate the long-time asymptotics for the 
defocusing integrable discrete nonlinear Schr\"odinger equation 
of Ablowitz-Ladik 
by means of the inverse scattering transform and the  Deift-Zhou nonlinear steepest descent method. 
The leading term is a sum of two terms that oscillate 
with decay of order $t^{-1/2}$. 

AMS subject classifications: Primary 35Q55; 
 Secondary 35Q15, 
 
Keywords: discrete nonlinear Schr\"odinger equation, Ablowitz-Ladik model, asymptotics, 
inverse scattering transform, nonlinear steepest descent
}

\tableofcontents
\bigskip
\hrule
\markboth{Hideshi Yamane}{Defocusing integrable 
discrete nonlinear Schr\"odinger equation}

\section{Introduction}

In this article we study the long-time behavior of the defocusing integrable 
discrete nonlinear Schr\"odinger equation (IDNLS) introduced by  
Ablowitz and Ladik (\cite{AblowitzLadik, AblowitzLadik2, APT}) on the doubly infinite lattice 
(i.e. $n\in\mathbb{Z}$)
\begin{equation}
   i \frac{ d}{ dt}R_n +
 (R_{n+1}-2R_n+R_{n-1})-|R_n|^2 (R_{n+1}+R_{n-1})=0.
\label{eq:IDNLS}
\end{equation}
It is a discrete version of the defocusing nonlinear Schr\"odinger 
equation (NLS) 
\begin{equation}
\label{eq:NLS}
  i u_t+u_{xx}-2u|u|^2=0  \mbox{ or }  
  i v_t-v_{xx}+2v|v|^2=0  \qquad (v=\bar u). 
\end{equation}
Although there are other ways to discretize \eqref{eq:NLS}, 
we have chosen   \eqref{eq:IDNLS}  because of the 
striking fact that it is \textit{integrable}: it can be solved by 
\textit{the inverse scattering transform} (IST). 
Here we employ the Riemann-Hilbert formalism of IST, rather than that based on 
integral equations. 
Knowledge of the IDNLS can give insight for the non-integrable versions, 
especially when one is interested in  asymptotics.

Significant works have been done on the long-time 
behavior of   integrable equations, pioneers being 
\cite{AN, Ma, ZMa}.  
The epoch-making work by Deift and Zhou in \cite{DZ} on the MKdV equation 
developed the inverse scattering technique and established 
\textit{the nonlinear steepest descent method}. 
It was used to  study   
the defocusing nonlinear Schr\"odinger equation 
by Deift, Its and Zhou in \cite{Important} 
and the Toda lattice 
in \cite{Kamvissis, KTsolitonregion, KTrevisited}. 
A detailed bibliography about the focusing/defocusing nonlinear  
Schr\"odinger equations on the (half-)line or  an interval 
is  found in \cite{Fokasbook}.

Following the above mentioned results, we employ 
  the Deift-Zhou  nonlinear steepest 
descent method and obtain the long-time asymptotics of  \eqref{eq:IDNLS}. 
Roughly speaking, the result is as follows.  
(See  \S\ref{sec:Main result} for details.)
If $|n/t|<2$, there exist $C_j=C_j(n/t)\in\mathbb{C}$ and 
$p_j=p_j(n/t),  q_j=q_j(n/t)\in\mathbb{R}$ ($j=1, 2$)
depending only on the ratio $n/t$ such that 
\begin{equation}
\label{eq:introstatement}
 R_n(t)=\sum_{j=1}^2 C_j t^{-1/2}  e^{- i (p_j t+q_j \log t)}
 +O(t^{-1}\log t)
 \quad\mbox{as }t\to\infty. 
\end{equation}
The quantities $C_j, p_j$  and $q_j$ are 
defined in terms of \textit{the reflection coefficient}  
that appears in the inverse scattering formalism. 	
The behavior of each term in the sum  is \textit{decaying oscillation} 
of order $t^{-1/2}$. 
Notice that in the case of the continuous defocusing NLS 
\eqref{eq:NLS}, the asymptotic 
behavior is expressed by a single term, not a sum, 
with decaying oscillation of order $t^{-1/2}$. 
Notice that the defocusing NLS and IDNLS are 
without  solitons vanishing rapidly at infinity. 
(Dark solitons do not vanish at infinity.)

In  \cite{Michor}, 
Michor studied the spatial asymptotics ($n\to\infty$, $t$: fixed) 
of solutions of  \eqref{eq:IDNLS} (and  its generalization called 
the Ablowitz-Ladik hierarchy). 
She proved that the leading term is $a/n^\delta \, (\delta\ge 0)$ 
in sharp contrast to \eqref{eq:introstatement} 
under a certain assumption on the  initial value. 
A natural remaining problem is to determine the asymptotics in 
$|n/t| \ge 2$, which will be a subject of future research.

Another interesting problem is to find the long-time asymptotics  for 
the {focusing} IDNLS. It is more difficult than the defocusing one 
because the associated   Riemann-Hilbert problems may have poles corresponding to 
solitons vanishing at infinity.

\begin{remark}
The term $-2R_n$ in \eqref{eq:IDNLS} can be removed by 
a simple transformation $e^{2it}R_n(t)=\tilde R_n(t)$ and 
some authors prefer this formulation. 
\end{remark}

\section{Inverse scattering transform for the defocusing IDNLS}
In this section we explain some known facts about 
inverse scattering transform for the defocusing IDNLS 
following \cite[Chap. 3]{APT}, which is a refined version of 
\cite{AblowitzLadik, AblowitzLadik2}.

First we discuss unique solvability of the 
Cauchy problem for \eqref{eq:IDNLS}. 

\begin{proposition}
Assume that the initial value $R(0)=\{R_n(0)\}_{n\in\mathbb{Z}}$ satisfies 
\begin{align}
 &\|R(0)\|_1=\sum_{ n=-\infty}^{\infty}|R_n(0)|<\infty,
 \label{eq:l1}
 \\
  &\|R(0)\|_\infty=\sup_n |R_n(0)|<1 \quad\mbox{\upshape (smallness condition)}.
 \label{eq:smallnesscondition}
\end{align}
Then \eqref{eq:IDNLS} has a unique solution 
in $\ell^1=\{\{c_n\}_{n=-\infty}^\infty\colon \sum |c_n|<\infty\}$ for $0\le t<\infty$. 
\end{proposition}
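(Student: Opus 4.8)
The plan is to establish local existence via a fixed-point argument and then extend to a global solution using a conserved quantity tied to the smallness condition.
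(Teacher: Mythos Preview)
Your plan matches the paper's approach: local well-posedness as an ODE in a Banach space (the paper first works in $\ell^\infty$, then checks the solution remains in $\ell^1$ via a Gronwall estimate), followed by global extension using the conserved quantity $c_{-\infty}=\prod_n(1-|R_n|^2)>0$, which yields the uniform bound $\|R(t)\|_\infty\le(1-c_{-\infty})^{1/2}<1$. The smallness condition is exactly what makes this product positive, and hence gives a lower bound on the local existence time that is independent of the current data, allowing iteration to $t=\infty$.
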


\begin{proof}
We can regard \eqref{eq:IDNLS}  as an ODE in the Banach space 
$\ell^1\subset \ell^\infty$. 
First we solve it in 
$\ell^\infty$ in view of 
\eqref{eq:smallnesscondition}.  
Set $c_{-\infty}=\prod_{n=-\infty}^\infty (1-|R_n|^2)>0$, 
$\rho=(1-c_{-\infty})^{1/2}$.   
Since $1-|R_n(0)|^2\ge c_{-\infty}$ for each $n$, we have $\|R(0)\|_\infty \le \rho$. 
Set 
\(
  B:=\{R=  \{R_n\}\in\ell^\infty; 
            \|R -R (0)\|_\infty \le \rho 
       \}
\). 
Since the right-hand side 
is Lipschitz continuous and bounded if $R=\{R_n\}\in B$,  
\eqref{eq:IDNLS}  can be solved  in $B$ locally in time,  
say up to $t=t_1=t_1(\rho)$. 
By a standard argument about ODEs in a Banach space,  
$t_1$ is determined by $\rho$ only and is 
independent of $R(0)=\{R_n(0)\}$ as long as $ \sup_n |R_n(0)|\le \rho$. 
Since  it is known that $c_{-\infty}$ and $\rho$ are conserved quantities, we have 
\(\|R(t)\|_\infty=
 \sup_n |R_n(t)|\le \rho
\) for $0\le t<t_1$.  
Then we solve  \eqref{eq:IDNLS} again with the initial value at  
$t=t_1/2$. The solution can be extended up to $t=3t_1/2$. 
Repetition of this process enables us to extend the 
solution $\{R_n(t)\} \in \ell^\infty$ up to $t=\infty$ 
and it satisfies \( \sup_n |R_n(t)|\le \rho\)  for $0\le t<\infty$.  
We have 
$\|\frac{d}{dt}R_n(t)\|_1\le \textrm{const.} \|R_n(t)\|_1$. Therefore 
$ \|R_n(t)\|_1$ grows at most exponentially and $\{R_n(t)\}$ belongs  
to  $\ell^1$ for any $t<\infty$. 
\end{proof}

Next we explain a concrete representation formula of the solution 
based on  inverse scattering transform. 
Let us introduce the associated Ablowitz-Ladik scattering problem 
(a difference equation, not a differential equation) 
\begin{equation}
 X_{n+1}=
 \begin{bmatrix}
   z & \overline{R}_n  \\ R_n & z^{-1} 
 \end{bmatrix}
 X_n. 
\label{eq:Ablowitz-Ladik}
\end{equation}
It has no discrete eigenvalues (\cite[p.66]{APT}, 
\cite[Appendix]{Abetal}) 
and  \eqref{eq:IDNLS} has no soliton solution vanishing at infinity.

\begin{remark}
We adopt the formulation of \cite{APT}. If one follows that of 
\cite{Important}, the matrix on the right-hand side of 
\eqref{eq:Ablowitz-Ladik} 
should be replaced by 
\(
 \begin{bmatrix}
   z &  R_n  \\ \overline{R}_n & z^{-1} 
 \end{bmatrix}
\) and it leads to minor changes at many places. 
\end{remark}

The time-dependence equation is 
\begin{equation}
 \frac{\, d}{\, d t}X_n
 =
 \begin{bmatrix} 
    i R_{n-1}\overline R_n -\frac{ i}{2}(z-z^{-1})^2 
   & - i (z \overline R_n-z^{-1}\overline R_{n-1})
   \\
    i (z^{-1} R_n- z R_{n-1}) 
   & - i R_n \overline R_{n-1} +\frac{ i}{2}(z-z^{-1})^2
 \end{bmatrix}
 X_n 
 \label{eq:timedependence}
\end{equation}
and  \eqref{eq:IDNLS}  is equivalent to \textit{the  
compatibility condition} 
\(
 \frac{\, d}{\, d t}X_{n+1}=(\frac{\, d}{\, d t}X_m)_{m=n+1}
\)    if we substitute \eqref{eq:Ablowitz-Ladik} and 
 \eqref{eq:timedependence}  into the left and right-hand sides 
 respectively.

The conditions \eqref{eq:l1} and \eqref{eq:smallnesscondition} 
are preserved for $t<\infty$. 
We can construct eigenfunctions (\cite[pp.49-56]{APT}) 
satisfying \eqref{eq:Ablowitz-Ladik} for  any fixed $t$.  
More specifically, one can define the eigenfunctions (depending on $t$)
\(
  \phi_n(z, t), \psi_n(z, t)   \in 
  \mathcal{O}(|z|>1) \cap \mathcal{C}^0 (|z|\ge 1)
\)  and 
\(
  \bar\psi_n(z, t)   \in 
  \mathcal{O}(|z|<1) \cap \mathcal{C}^0 (|z|\le 1)
\) 
such that 
\begin{align}
 &\phi_n(z, t)\sim z^n 
 \begin{bmatrix} 1 \\ 0 \end{bmatrix} 
 & \mbox{ as } n\to-\infty,
 \\
 & \psi_n(z, t)\sim z^{-n} 
 \begin{bmatrix} 0 \\ 1 \end{bmatrix}, 
 \quad
 \bar\psi_n(z, t)\sim z^n 
 \begin{bmatrix} 1 \\ 0 \end{bmatrix} 
 & \mbox{ as } n\to \infty.
\end{align}
On the circle $C\colon |z|=1$,  
there exist unique functions 
$a(z, t)$ and $b(z, t)$ for which 
\begin{equation}
 \phi_n(z, t)=b(z, t)\psi_n(z, t)+a(z, t)\bar\psi_n(z, t)
\end{equation}
holds. They can be represented as Wronskians of the 
eigenfunctions. The characterization equation 
\begin{equation}
 |a(z, t)|^2-|b(z, t)|^2=c_{-\infty}>0
\label{eq:characterization}
\end{equation} 
implies $a(z, t)\ne 0$. Hence one can define the 
\textit{reflection coefficient} 
\footnote{It is denoted by $\rho$ in the notation of  \cite{APT}. }
\begin{equation}
 r(z, t)=\frac{b(z, t)}{a(z, t)}. 
\end{equation}
In our notation,  $r(z)$ is short for $r(z, 0)$, not for $r(z, t)$. 
It has the property 
$r(-z, t)=-r(z, t), 0\le |r(z, t)|<1$, the latter being a consequence of 
\eqref{eq:characterization}. 
If $\{R_n(0)\}$ is rapidly decreasing in the sense that 
\begin{equation}
 \sum_{n\in\mathbb{Z}} |n|^s |R_n(0)|<\infty
 \text{ for any } s\in\mathbb{N}, 
\label{eq:rapidlydecreasing}
\end{equation}
then 
$\phi_n, \psi_n$ and $\bar\psi_n$ are smooth on $C$, hence 
so are $a, b$ and $r$.

The time evolution of $r(z)$ according to \eqref{eq:timedependence} 
is given by 
\begin{equation}
 r(z, t)=r(z)\exp\left(it (z-z^{-1})^2\right)
 =r(z)\exp\left(it (z-\bar z)^2\right), 
\end{equation}
where $r(z)=r(z, 0)$. 

Let us formulate the  following Riemann-Hilbert problem:
\begin{align}
   & m_+(z)=m_-(z) v(z) \;\mbox{\,on\,}\;   C\colon |z|=1, 
  \label{eq:originalRHP1}
  \\
  & m(z)\to I   \;\mbox{\,as\,} \; z\to \infty,   
  \label{eq:originalRHP2}
  \\
  &  
  v(z)=v(z, t)=
	   \begin{bmatrix}
	    1-|r(z, t)|^2 \;&\; -z^{2n}\bar{r}(z, t)
	    \\
	    z^{-2n}r(z, t) \;&\; 1
	   \end{bmatrix}     	   \nonumber
  \\
  &	\hspace{1.7em}
  	   = e^{ -\frac{ it}{2}(z-z^{-1})^2 \mathrm{ad\,}\sigma_3}
	   \begin{bmatrix}
	    1-|r(z)|^2 \;&\; -z^{2n}\bar r(z)
	    \\
	    z^{-2n}r(z) \;&\; 1
	   \end{bmatrix}.   
   \label{eq:originalRHP3}
\end{align}
Here $m_+$ and $m_-$  are the boundary values from the 
\textit{outside} and \textit{inside} of $C$ respectively 
of the unknown matrix-valued  analytic function $m(z)=m(z; n, t)$ 
in $|z|\ne 1$. 
We employ the usual notation 
$\sigma_3=\mathrm{diag\,}(1, -1)$, 
$a^{\mathrm{ad\,}\sigma_3}Q=a^{\sigma_3} Q a^{-\sigma_3}$ 
($a$: a scalar, $Q$: a $2\times2$ matrix). 
The inconsistency with the usual counterclockwise orientation 
(the inside being the plus side) is irrelevant 
because later we will choose different orientations on different parts 
of the circle for a technical reason. 

The uniqueness of the solution to the problem above is derived 
by a Liouville argument. If $m$ and $m'$ are solutions, then 
$mm'^{-1}$ is equal to $I$ because it is entire and tends to 
$I$ as $z\to\infty$. The existence of the solution follows from the 
Fredholm argument in \cite{Zhou}.

The solution $\{R_n\}=\{R_n(t)\}$ to \eqref{eq:IDNLS} can be obtained 
from the $(2, 1)$-component of $m(z)$ 
by the   reconstruction formula (\cite[p.69]{APT}) 
\(
   m(z)_{21} = -z R_n(t) +O(z^2)
\), i.e., 
\begin{equation}
 R_n(t)=-\lim_{z\to 0} \frac{1}{z}m(z)_{21} 
 =-\left.\frac{\, d}{\, d z} m(z)_{21} \right|_{z=0}
 \,  (\mathrm{NB}\colon z\to 0, \mbox{not} \,\infty). 
 \label{eq:Rnreconstruction}
\end{equation}

Summing up, the inverse scattering procedure is as follows: 
\begin{itemize}
\item The initial value $\{R_n(0)\}$ determines 
  $r(z)=r(z, 0)$.
\item
 $m(z)$ is determined by $r(z)$ as the solution to  
 \eqref{eq:originalRHP1}-\eqref{eq:originalRHP3}. 
\item
 $\{R_n(t)\}$ is derived from $m(z)$ by \eqref{eq:Rnreconstruction}. 
\end{itemize}
Set 
\[
 \varphi=\varphi(z)=\varphi(z; n, t)=\frac{1}{2} i t (z-z^{-1})^2 
 - n\log z 
\] 
so that the jump matrix $v$ in \eqref{eq:originalRHP1} is given by 
\begin{equation}
 v=v(z)=\rm e^{-\varphi \mathrm{ad}\sigma_3}
 \begin{bmatrix}
 1-|r(z)|^2 & \quad -\bar r(z) \\ r(z) & \quad  1
 \end{bmatrix}.
\end{equation}
This representation is useful in that the saddle points of $\varphi$ 
play important roles in the method of nonlinear steepest descent.

\begin{remark}
Explicit expressions of some solutions are discussed in \cite{Abetal, CCX}. 
\end{remark}

\section{Main result}
\label{sec:Main result}
In the following, we will deal with the asymptotic behavior of $R_n(t)$ 
as $t\to\infty$ in the region defined by
\begin{equation}
  |n| \le (2-V_0)t, \quad
 V_0 \;\mbox{is a constant with\;} 
 0< V_0 <2.
 \label{eq:region}
\end{equation}

We have $\, d \varphi/\, d z=0$ if and only if 
$z=S_j\in C\,(j=1, 2, 3, 4)$,  where 
\begin{align}
 & S_1=  e^{-\pi  i/4}A, \, S_2= e^{-\pi  i/4}\bar{A},  \, 
 S_3=-S_1, \, S_4=-S_2, 
 \\
 &  A=2^{-1}\bigl(\sqrt{2+n/t\,}- i \sqrt{2-n/t\,}\,\bigr), 
\end{align}
and we set $S_{j\pm 4}=S_j$ by convention.

Set 
\begin{align}
 &\delta(0)=
 \exp\left(
  \frac{-1}{\pi  i}
   \int_{S_1}^{S_2} 
  \log (1-|r(\tau)|^2) \frac{\, d \tau}{\tau}
  \right), 
 \\
 &
 \beta_1
 =\frac{- e^{\pi  i/4}A}{2(4t^2-n^2)^{1/4}}, \;
   \beta_2=\frac{ e^{\pi  i/4}\bar A}{2(4t^2-n^2)^{1/4}},  
 \\
 &
 D_1=\frac{- i A}{2(4t^2-n^2)^{1/4}(A-1)}, \;
 D_2=\frac{  i \bar A}{2(4t^2-n^2)^{1/4}(\bar A-1)}. 
 \label{eq:D1D2}
 \end{align}
Moreover we define,  for $j=1, 2$, 
 \begin{align}
 &
 \chi_j (S_j)=\frac{1}{2\pi i}\int_{\exp(-\pi i/4)}^{S_j} 
                \log \frac{1-|r(\tau)|^2}{1-|r(S_j)|^2}    \frac{\, d \tau}{\tau-S_j}, 
  \\
 &
 \nu_j=-\frac{1}{2\pi} \log (1-|r(S_j)|^2), 
  \\
  &
  \widehat\delta_j (S_j) =
   \exp 
   \left(
     \frac{1}{2\pi}
     \biggl[
         (-1)^{j} \int_{ e^{-\pi  i/4}}^{S_{3-j}}-\int_{-S_1}^{-S_2} 
     \biggr]
     \frac{\log (1-|r(\tau)|^2) }{\tau-S_j}\, d \tau
   \right),
 \\
 & 
 \delta_j^0 
 = S_j^n  e^{{- i t} (S_j-S_j^{-1})^2 /{2}}
  D_j^{  (-1)^{j-1} i \nu_j}                 
         e^{  (-1)^{j-1} \chi_j(S_j)}       \widehat\delta_j (S_j), 
  \label{eq:deltaj0_easy}
\end{align}
where the contours are    minor arcs $\subset C$. 
We have $\mathrm{Re\,} D_j>0$ and $z^{(-1)^{j-1} i \nu_j}$ has 
a cut along the negative real axis. 
See \eqref{eq:defdelta},  \eqref{eq:chi_j} and \eqref{eq:hatdeltajdef} 
for $\delta(z)$,  
$\chi_j(z)$ and $\hat\delta_j(z)$ at a general point $z$ 
(not only for $j=1, 2$ but also for $j=3, 4$).  
Another expression of $\delta_j^0$ is given in \eqref{eq:deltaj0defdecember}. 
We have $\delta(0)\ge 1$ and $\nu_j\ge 0$ since $|r|<1$. 
Notice that  $A, S_j, \delta(0), \chi_j (S_j), \nu_j$ 
and $\hat\delta_j(S_j)$ are functions in $n/t$ and that 
$\beta_j$ and  $D_j$ are of the form 
$t^{-1/2}\times (\mbox{a function in }n/t)$. 
As $t\to\infty$, $\beta_j$ is decaying and $\delta_j^0$  
is oscillatory if $n/t$ is fixed.  

Now we present our main result. Its proof will be given at the 
end of \S\ref{sec:reconstscaling}. 
\begin{theorem}
\label{thm:main}Let $V_0$ be a constant with $0<V_0<2$.  
Assume that  the initial value satisfies 
the smallness condition \eqref{eq:smallnesscondition} 
and the rapid decrease condition \eqref{eq:rapidlydecreasing}. 
Then  in the region 	$|n|\le (2-V_0)t$, 
the asymptotic behavior of the solution to \eqref{eq:IDNLS}    is 
 \begin{equation}
 R_n(t)
 =-\frac{\delta(0)}{\pi i}\sum_{j=1}^2 
 \beta_j (\delta_j^0)^{-2}  S_j^{-2}  M_j
  +O(t^{-1}\log t)
  \hspace{0.5em}\mbox{as}\hspace{0.5em} t\to\infty, 
 \label{eq:main}
 \end{equation}
where 
\[ M_j
  = \frac{ \sqrt{2\pi}  \exp
     \left(
      (-1)^j 3\pi i/4 -\pi \nu_j/2
     \right)
     }
     {
            {\bar r(S_j) \Gamma( (-1)^{j-1}i \nu_j)}
     } 
     \quad \mbox{if } r(S_j)\ne 0
\]
and $M_j=0$ if $r(S_j)=0$. 
The symbol $O$ represents an asymptotic estimate which is 
uniform with respect to $(t, n)$ satisfying $|n|\le (2-V_0)t$.  
Each term in the summation exhibits a behavior of decaying 
oscillation   of order $t^{-1/2}$ as $t\to\infty$ while 
$n/t$ is fixed. 
\end{theorem}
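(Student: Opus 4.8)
The plan is to apply the Deift--Zhou nonlinear steepest descent method to the Riemann--Hilbert problem \eqref{eq:originalRHP1}--\eqref{eq:originalRHP3}, following the template established by Deift, Its and Zhou in \cite{Important} for the continuous defocusing NLS, but adapted to the four saddle points $S_1,\dots,S_4$ on the circle $C$ rather than the two saddle points one encounters on the real line. First I would analyze the sign structure of $\mathrm{Re\,}\varphi$ on $C$ to locate the regions of exponential growth and decay of the off-diagonal entries of the jump matrix $v$; the four points $S_j$ split $C$ into arcs on which one or the other factorization of $v$ is advantageous. Using the scalar function $\delta(z)$ (the solution of the attendant scalar RHP with jump $1-|r|^2$), I would conjugate $m$ by $\delta^{\sigma_3}$ to turn the jump into a form amenable to contour deformation, then open lenses through each $S_j$, splitting $v$ into upper/lower triangular factors that decay off $C$ thanks to the sign of $\mathrm{Re\,}\varphi$. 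This reduces the global problem, up to exponentially small error, to four localized problems, one near each saddle point.

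Next I would carry out the local parametrix construction at each $S_j$. After rescaling a neighborhood of $S_j$ by the local variable built from $\beta_j$ (so that $\varphi$ becomes, to leading order, a quadratic $\sim \tfrac{i}{2}(\text{const})\zeta^2$), each local RHP becomes the standard parabolic-cylinder model problem whose explicit solution is expressed through $\Gamma$-functions and the parameters $\nu_j=-\tfrac{1}{2\pi}\log(1-|r(S_j)|^2)$. The constants $\chi_j(S_j)$, $\widehat\delta_j(S_j)$ and $D_j$ are precisely the data needed to match the conjugated, delocalized jump to the model jump near $S_j$; assembling them produces $\delta_j^0$ as the total phase/amplitude factor carried into the model problem. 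The leading contribution of each local parametrix to the solution is of order $t^{-1/2}$, with the explicit coefficient $M_j$ coming from the off-diagonal entry of the parabolic-cylinder solution, which is why $M_j$ involves $\sqrt{2\pi}$, $\Gamma((-1)^{j-1}i\nu_j)$ and $\bar r(S_j)$, and vanishes when $r(S_j)=0$.

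I would then combine the four local parametrices with the global (outer) parametrix to build an approximate solution $m^{\mathrm{approx}}$, and estimate the error $m(m^{\mathrm{approx}})^{-1}$ by a small-norm RHP argument: the jump of the error is $I+O(t^{-1/2})$ on the lens boundaries and $I+O(t^{-1})$ on the matching circles around the saddles, so Cauchy-operator estimates yield an error term of size $O(t^{-1}\log t)$, uniformly in $n/t$ over the region $|n|\le(2-V_0)t$. Because the saddle points come in the symmetric pairs $(S_1,S_3)$ and $(S_2,S_4)$ related by $z\mapsto -z$, and $r(-z)=-r(z)$, the contributions from $S_3,S_4$ can be folded into those of $S_1,S_2$, leaving the two-term sum in \eqref{eq:main}. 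Finally I would apply the reconstruction formula \eqref{eq:Rnreconstruction}, extracting $-\lim_{z\to 0}z^{-1}m(z)_{21}$ from the assembled solution; the residue-type computation at $z=0$ produces the overall factor $-\delta(0)/(\pi i)$ and the per-saddle weights $\beta_j(\delta_j^0)^{-2}S_j^{-2}M_j$.

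The main obstacle I anticipate is bookkeeping the phase and conjugation factors consistently so that the reconstruction at $z=0$ yields exactly the stated $\delta_j^0$, $\beta_j$ and $S_j^{-2}$ combination. The continuous NLS case of \cite{Important} extracts the asymptotics from the behavior of $m$ as $z\to\infty$, whereas here \eqref{eq:Rnreconstruction} requires the expansion as $z\to 0$; tracking how the $\delta^{\sigma_3}$ conjugation, the branch cuts of $z^{(-1)^{j-1}i\nu_j}$, and the orientation choices on the split circle transform under $z\to 0$ (rather than $z\to\infty$) is the delicate part. A secondary difficulty is verifying that all the estimates are genuinely uniform in $n/t$ up to the edge $V_0$ of the region, since the saddle points $S_j$ coalesce as $|n/t|\to 2$ and the constant in the error bound must be controlled away from that degeneration.
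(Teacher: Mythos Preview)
Your outline is correct and follows the same Deift--Zhou strategy as the paper, but the implementation differs in one respect worth flagging. You describe the parametrix route: build local model solutions near each $S_j$, patch them with a global parametrix, and control the ratio $m(m^{\mathrm{approx}})^{-1}$ by a small-norm Riemann--Hilbert argument. The paper instead stays with the operator formulation of the original 1993 Deift--Zhou paper \cite{DZ}: it writes $R_n(t)$ as an integral against the resolvent $(1-C_{w^\sharp})^{-1}$, strips off negligible pieces of the kernel via the second resolvent identity (Propositions~\ref{prop:Qnw'} and \ref{prop:Qnw'2}), decouples the four crosses by estimating the cross-terms $A_jA_k$ (Lemma~\ref{lem:AjAkestimates}), and only then rescales each cross to the model problem on $\Sigma(0)_j$ whose large-$z$ coefficient $m_1^j$ gives $M_j$. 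Both routes are standard and give the same answer; your parametrix language is more common in recent literature, while the paper adheres to the 1993 framework.

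One point in your sketch needs tightening: you write that lens-opening leaves ``exponentially small error'', but under the hypotheses here $r$ is merely $C^\infty$ on $|z|=1$, not analytic, so the triangular factors cannot simply be continued off the circle. The paper handles this with the Fourier-type splitting $\rho=R+h_I+h_{II}$ of \cite{DZ} (Section~\ref{sec:decomposition}), which produces errors of order $O(t^{-\ell})$ for any fixed $\ell$, not exponential decay; your proposal should make this step explicit. The difficulty you anticipate about reconstructing at $z=0$ rather than $z=\infty$ is genuine, and the paper deals with it by carrying the weight $z^{-2}$ through all the resolvent identities and into the localization estimate \eqref{eq:difficultineq}; in the parametrix language this corresponds to expanding the error RHP and the local parametrices at $z=0$ rather than at infinity.
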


 Notice that $\delta_j^0$ has three oscillatory factors 
$S_j^n,   e^{{- i t} (S_j-S_j^{-1})^2 /{2}}$ and 
$D_j^{  (-1)^{j-1} i \nu_j}$. We claim that $S_j^n$ is oscillatory 
because $n$ tends to infinity together with  $t$  if the ratio $n/t$ is fixed. 
Set $\theta_j=\arg S_j\in (-\pi/2, 0)$, $a_j=\mathrm{Im\,} S_j$. 
Then we have 
\begin{align*}
  &S_j^n  e^{{- i t} (S_j-S_j^{-1})^2 /{2}} 
 D_j^{  (-1)^{j-1} i \nu_j}
 \\
 & =\exp
 \left(
  i n \theta_j+2i a_j^2 t -\frac{(-1)^{j-1}}{2}i \nu_j \log t
 \right)
 \times 
 (\mbox{a function in }n/t). 
\end{align*}
Therefore $\beta_j (\delta_j^0)^{-2}$ in 
\eqref{eq:main} 
behaves like 
\(
 \mathrm{const. }\, t^{-1/2} \exp
 (i (p_j t+q_j \log t))
\)
\(
 (p_j, q_j\in\mathbb{R})
\). 
This is an analogue of the Zakhalov-Manakov formula concerning 
the continuous defocusing NLS (\cite{DZUTYO, Important, Ma, ZMa}). 
Notice that $p_j=p_j(n/t)$ can be either positive or negative 
depending on the ratio $n/t$. This kind of change of sign is not observed 
in the case of the continuous NLS. 

\begin{remark}
A careful inspection of the proof shows that it is enough to 
assume $\sum |n|^s |R_n(0)|<\infty$ for $s=10$. It implies 
that the eigenfunctions and $r$ are in $C^{10}$ on $|z|=1$. 
Therefore $q$ and $k$ in \S\ref{sec:decomposition} 
can be so chosen that $k=9, q=2$ and we can set $\ell=1$ 
in Proposition~\ref{prop:Qnw'2}. 
\end{remark}

\section{A new Riemann-Hilbert problem}

Each $S_j$ is a saddle point of $\varphi$ with $|S_j|=1$ and 
\begin{align*}
 &
 \varphi''(S_1)=\varphi''(S_3)=2\bar{A}^2\sqrt{4t^2-n^2}
    =2 A^{-2}\sqrt{4t^2-n^2},  
  \\
 &
 \varphi''(S_2)=\varphi''(S_4)=-2A^2\sqrt{4t^2-n^2}
    =-2\bar{A}^{-2}\sqrt{4t^2-n^2}
    =-\overline{\varphi''(S_1)}.
\end{align*}

For $z=r e^{ i \theta}$, we have 
\(
 \mathrm{Re\,}\varphi=\frac{-1}{2}t (r^2-r^{-2}) \sin 2\theta - n \log r. 
\) 
It vanishes for any $\theta$ if $r=1$. 
For any other positive value of  $r$, 
the equation $\mathrm{Re\,} \varphi=0$ gives four branches of 
$\theta=\theta(r)\in[0, 2\pi[$ and represents a curve with certain symmetry. It is shown in the Figure \ref{fig:phase}, together with  the sign of  
$\mathrm{Re\,}  \varphi$
in the each region. 

\begin{figure}\centering
 \includegraphics[width=6cm]{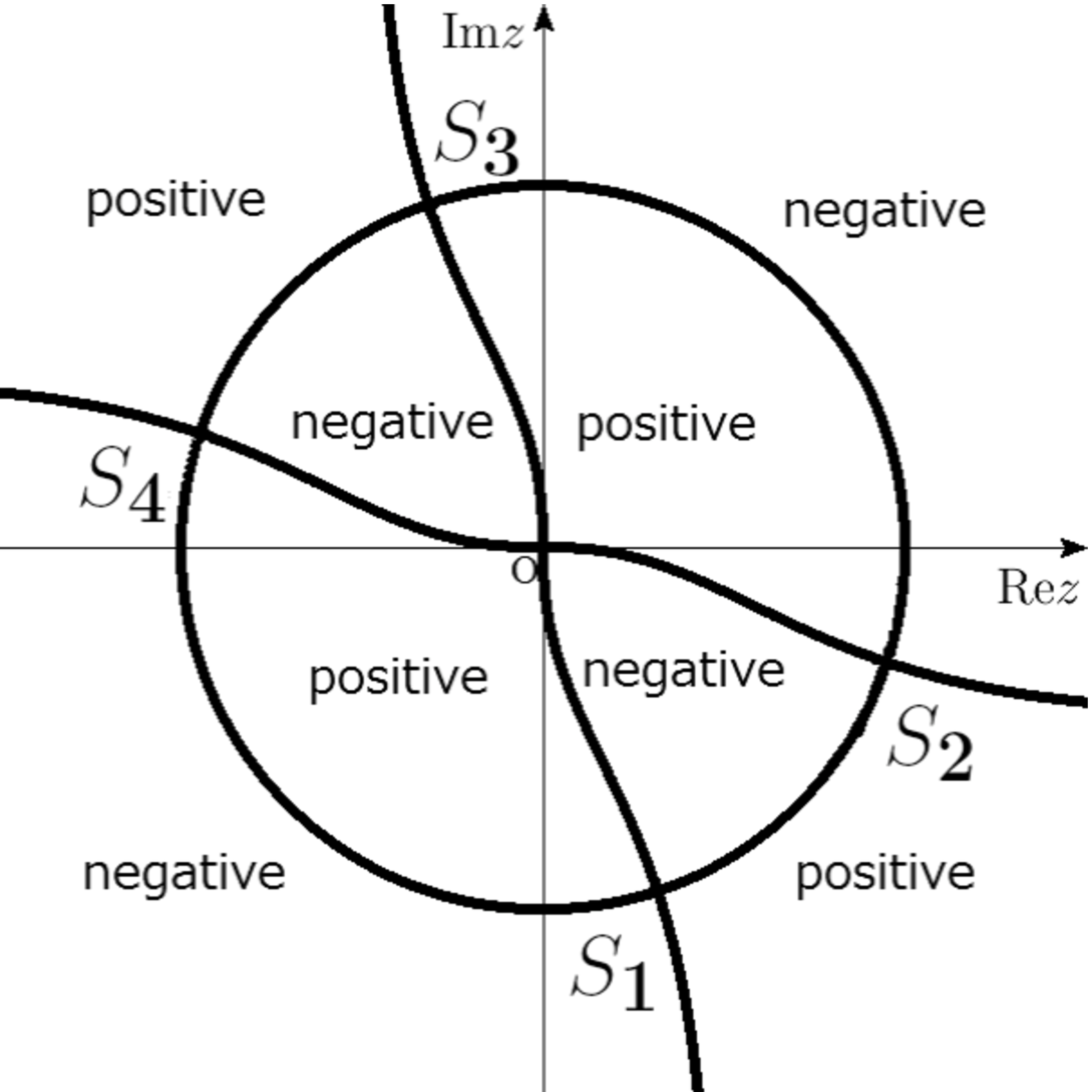}
 \caption{Signs of $\mathrm{Re\,} \varphi$}
 \label{fig:phase}
\end{figure}

Let $\delta(z)$, analytic in $|z|\ne 1$, be the solution to the  Riemann-Hilbert problem   
\begin{align}
 & \delta_+(z)=\delta_-(z) (1-|r(z)|^2) \;  
  \mbox{ on}\, \arc{S_1S_2} \;\mbox{and}\; \arc{S_3S_4}, 
  \label{eq:delta1}
 \\
 &  \delta_+(z)=\delta_-(z)  \; 
  \mbox{ on}\, \arc{S_2S_3} \;\mbox{and}\; \arc{S_4S_1}, 
  \label{eq:delta2}
 \\
 & \delta(z)\to 1  \; 
  \mbox{ as}\; z\to \infty, 
  \label{eq:delta3}
\end{align}
where   $\arc{S_jS_k}$ is the minor  arc  $\subset C$ 
joining $S_j$ and $S_k$ and the \textit{outside} of $C$ is the plus side.
On the singular locus $\arc{S_1S_2}\cup\arc{S_3S_4}$, 
$\delta_\pm(z)$ are the boundary values from 
$\pm\mathrm{Re\,} \varphi>0$ respectively,  and 
there is no distinction between them on  $\arc{S_2S_3}\cup\arc{S_4S_1}$. 

This problem can be uniquely solved by the formula
\begin{equation}
 \delta(z)=
 \exp\left(
  \frac{-1}{2\pi  i}
  \left[\int_{S_1}^{S_2}+\int_{S_3}^{S_4}\right]
   (\tau-z)^{-1} \log (1-|r(\tau)|^2) \, d \tau
  \right), 
 \label{eq:defdelta}
\end{equation}
where the contours are the arcs $\subset C$. 
We have 
$\delta(-z)=\delta(z)$ and $\delta'(0)=0$ because $r(-\tau)=-r(\tau)$.

Conjugating our original Riemann-Hilbert problem   
(\ref{eq:originalRHP1})-(\ref{eq:originalRHP3}) by 
\[ 
 \Delta(z)=
 \begin{bmatrix}
  \delta(z) & \quad  0 \\ 0 & \quad  \delta^{-1}(z)
 \end{bmatrix}
 =
 \delta^{\sigma_3}(z)
\]
leads to the factorization problem for $m\Delta^{-1}$, 
\begin{align}
 &  (m\Delta^{-1})_+ (z) = (m\Delta^{-1})_-(z) 
   	(\Delta_- v \Delta_+^{-1}), 
   	\quad z\in C, 
 \label{eq:DeltaRHP1}
 \\
 & m\Delta^{-1}\to I \quad (z\to\infty).
 \label{eq:DeltaRHP2}
\end{align}

Now, we rewrite 
(\ref{eq:DeltaRHP1})-(\ref{eq:DeltaRHP2}) 
by  choosing  the counterclockwise orientation (the \textit{inside} being the plus side) on   
$\arc{S_2 S_3}$ and $\arc{S_4 S_1}$ and the clockwise orientation  (the outside being the plus side) on $\arc{S_1 S_2}$ and $\arc{S_3 S_4}$. 
The circle $|z|=1$ with this new orientation is denoted by 
$\widetilde C$ and the new Riemann-Hilbert problem on it is
\begin{align}
 &  (m\Delta^{-1})_+ (z) = (m\Delta^{-1})_-(z) 
   	\tilde v, 
   	\quad z\in \widetilde C, 
 \label{eq:mdelta^-1_1}
 \\
 & m\Delta^{-1}\to I \quad (z\to\infty), 
 \label{eq:mdelta^-1_2}
\end{align}
for the 2$\times$2 matrix $\tilde v$. 
We have,  by (\ref{eq:delta1}) and (\ref{eq:delta2}),  
\begin{align*}
  \tilde v
  &= e^{-\varphi \mathrm{ad}\sigma_3}
  \left(
   \begin{bmatrix}
     1&\; 0 \\ r\delta_-^{-2}/(1-|r|^2)\; &\; 1
   \end{bmatrix}
   \begin{bmatrix}
    1\;&\; -\bar{r}\delta_+^2/(1-|r|^2) \\ 0\;& \;1
   \end{bmatrix}
  \right)
  &\mbox{ on } S_1S_2\cup S_3S_4, 
  \\
  &=
   e^{-\varphi \mathrm{ad}\sigma_3}
  \left(
   \begin{bmatrix}
     1\;& \;0 \\ -r\delta^{-2}\;  &\; 1
   \end{bmatrix}
   \begin{bmatrix}
    1\;&\;  \bar{r}\delta^2  \\ 0\;&\; 1
   \end{bmatrix}
  \right)
   &\mbox{ on } S_2S_3\cup S_4S_1.
\end{align*}

Set 
\begin{align}
 \rho & =-\bar{r}/(1-|r|^2) & \mbox{\quad on\;}  S_1S_2\cup S_3S_4, 
 \\
 &  =\bar{r} & \mbox{\quad on\;}  S_2S_3\cup S_4S_1, 
\end{align}
then $\tilde v$ admits a unified expression
\[
   \tilde v=   e^{-\varphi \mathrm{ad}\sigma_3} 
   \left(
   \begin{bmatrix}
     1\;&\; 0 \\ -\bar{\rho}\delta_-^{-2} \; & \; 1
   \end{bmatrix}
   \begin{bmatrix}
    1\;&\;  \rho\delta_+^2  \\ 0\;&\; 1
   \end{bmatrix}
  \right)
\]
on any of the arcs, where $\delta_+=\delta_-=\delta$ 
on $S_2S_3\cup S_4S_1$. 
We have a lower/upper factorization 
\begin{align}
   \tilde v&=  b_-^{-1}  b_+, 
   \label{eq:factorization}
 \\
 b_+&:=\delta_+^{\mathrm{ad}\sigma_3} 
  e^{-\varphi \mathrm{ad}\sigma_3} 
 \begin{bmatrix} 1 \; &\; \rho \\ 0 \;&\; 1\end{bmatrix}
 =\begin{bmatrix} 
        1 \;&\; \delta_+^2  e^{-2\varphi} \rho \\ 0\;&\; 1 
   \end{bmatrix}  , 
 \; 
 \label{eq:upper-triangular}
 \\
 b_-&:=\delta_-^{\mathrm{ad}\sigma_3} 
   e^{-\varphi \mathrm{ad}\sigma_3} 
  \begin{bmatrix} 1 \;&\; 0 \\ \bar{\rho} \;&\; 1 \end{bmatrix} 
 =\begin{bmatrix} 1 \;&\; 0 \\  
           \delta_-^{-2}  e^{2\varphi} \bar\rho \;&\; 1
   \end{bmatrix}    . 
  \label{eq:lower-triangular}
\end{align}
Later we shall use $w_\pm=\pm(b_\pm -I)$.

\section{Decomposition, analytic continuation and estimates}
\label{sec:decomposition}
From now on we assume $0< n \le (2-V_0)t$ so that 
$-\pi/4 < \arg A<0$.  
Minor  modifications are required in the  construction of the 
contour $\Sigma$ (Figure~\ref{fig:contour} below) if 
$-(2-V_0)t \le n \le 0$. See Remark~\ref{rem:negativen}. 

Set 
\[
 \psi=\varphi/( i t)=2^{-1}(z-z^{-1})^2 + i n t^{-1}\log z. 
\]
It is real-valued on $|z|=1$. 
For $z= e^{ i \theta}\,(\theta\in\mathbb{R})$, we have 
$\psi=\cos 2\theta-nt^{-1}\theta-1$ and $\psi$ is monotone 
on any of $S_1S_2$, $S_2S_3$, $S_3S_4$, $S_4S_1$. 
The monotonicity also  follows from the fact 
that  there are no other stationary points of $\varphi$ on $|z|=1$ 
than  $S_j$'s.

\subsection{Decomposition on an arc and some estimates}
\label{subsec:decomposition}

We seek  a decomposition 
$\rho=R+h_I+h_{II}$ with each term having a certain estimate. 
Set $\vartheta=\theta+\pi/4$, 
$\vartheta_0=\arg \bar{A}=\arctan\sqrt{(2t-n)/(2t+n)}$. 
Then $\arc{S_1S_2}$ corresponds to 
$-\vartheta_0\le \vartheta\le \vartheta_0$. 
We regard the function $\rho$ on $\arc{S_1 S_2}$ 
as a function in $\vartheta$ and denote it 
by $\rho(\vartheta)$ by abuse of notation. We have 
\(
 \rho(\vartheta)=H_e(\vartheta^2)+\vartheta H_o(\vartheta^2)\, 
 (-\vartheta_0\le \vartheta\le\vartheta_0)
\) 
for smooth functions $H_e$ and $H_o$. 
By Taylor's theorem, they are expressed as follows:
\begin{align*}
 H_e(\vartheta^2) =\mu_0^e+\cdots+\mu_k^e(\vartheta^2-\vartheta_0^2)^k 
 +\frac{1}{k!}
   \int_{\vartheta_0^2}^{\vartheta^2} H_e^{(k+1)}(\gamma)
   (\vartheta^2-\gamma)^k 
   \, d \gamma, 
 \\ 
  H_o(\vartheta^2) =\mu_0^o+\cdots+\mu_k^o(\vartheta^2-\vartheta_0^2)^k
 +\frac{1}{k!}
   \int_{\vartheta_0^2}^{\vartheta^2} H_o^{(k+1)}(\gamma)
   (\vartheta^2-\gamma)^k
   \, d \gamma. 
\end{align*}
Here $k$ can be any positive integer, but we assume 
$k =4q+1, \, q\in\mathbb{Z}_+$ for convenience of later calculations. 

We set
\begin{align*}
 R(\vartheta)&=R_k(\vartheta)
 =\sum_{i=0}^k \mu_i^e (\vartheta^2-\vartheta_0^2)^i 
  +\vartheta\sum_{i=0}^k \mu_i^o (\vartheta^2-\vartheta_0^2)^i, 
 \\
 \alpha(z)&=(z-S_1)^q (z-S_2)^q, 
 \\
 h(\vartheta)&=\rho(\vartheta)-R(\vartheta)
\end{align*}
and, by abuse of notation, 
\begin{equation*}
  \alpha(\vartheta)
 =\alpha( e^{ i (\vartheta-\pi/4)})
 =[ e^{ i (\vartheta-\pi/4)}- e^{ i (-\vartheta_0-\pi/4)}]^q 
   [ e^{ i (\vartheta-\pi/4)}- e^{ i (\vartheta_0-\pi/4)}]^q .
\end{equation*}
Notice that we have 
\(
 R(\pm \vartheta_0)=\rho(\pm \vartheta_0)
\). 
The function $R$ extends analytically from $\arc{S_1S_2}$ to a 
fairly large complex neighborhood. Its singularity comes only from 
that of $\log z$. 
By abuse of notation, $R(z)$ denotes the analytic function thus 
obtained, so that 
$R(\vartheta)=R(e^{i(\vartheta-\pi/4)})$ and 
$R(S_j)=\rho(S_j)$. 

We have $\, d \psi/\, d \vartheta=2\cos 2\vartheta-n/t$ 
and it has a zero of order 1 at $\vartheta=\pm \vartheta_0$. 
Since $[-\vartheta_0, \vartheta_0]\ni\vartheta\mapsto \psi\in\mathbb{R}$ is 
strictly increasing, we can consider its inverse 
$\vartheta=\vartheta(\psi)$, $\psi(-\vartheta_0)\le\psi\le\psi(\vartheta_0)$.  
We set 
\begin{align*}
 (h/\alpha)(\psi)
 &= h(\vartheta(\psi))/\alpha(\vartheta(\psi)) 
 & \mbox{in}\; 
      \psi(-\vartheta_0)\le \psi \le \psi(\vartheta_0), 
 \\
 &=0 &\mbox{otherwise}.
\end{align*}

Then $(h/\alpha)(\psi)$ is well-defined for $\psi\in\mathbb{R}$, 
and it can be shown that
 $h/\alpha\in H^{(3q+2)/2}(-\infty<\psi<\infty)$ 
and that its norm is uniformly bounded with respect to 
$(n, t)$   with (\ref{eq:region}). 
This argument is a `curved' version of 
\cite[(1.33)]{DZ}. 
Notice that  $\vartheta_0$, the counterpart of $z_0$ of \cite{DZ},  is 
trivially bounded.

Set 
\begin{align*}
 &(\widehat{h/\alpha})(s)
 =
 \int_{-\infty}^{\infty}  e^{- i s \psi} 
 (h/\alpha)(\psi) \frac{\, d \psi}{\sqrt{2\pi}},
 \\
 &h_I (\vartheta)
  =\alpha(\vartheta) \int_t^\infty  e^{ i s \psi(\vartheta)}
  (\widehat{h/\alpha})(s) \frac{\, d s}{\sqrt{2\pi}}, 
 \\
 &h_{II} (\vartheta)
  =\alpha(\vartheta) \int_{-\infty}^t  e^{ i s \psi(\vartheta)}
  (\widehat{h/\alpha})(s) \frac{\, d s}{\sqrt{2\pi}}, 
\end{align*}
then 
$h(\vartheta)=h_I(\vartheta)+h_{II}(\vartheta)$, $|\vartheta|\le\vartheta_0$ and 
\begin{equation}
 | e^{-2 i t \psi}h_I(\vartheta)|\le C/t^{(3q+1)/2}
 \label{eq:h_I}
\end{equation}
for some $C>0$ on  $\arc{S_1S_2}$. See \cite[(1.36)]{DZ}. 
The symbol $C$  always denotes   a generic positive constant.

Set $p=\sqrt{2(2t-n)}/(\sqrt{2t+n\,}-\sqrt{2t-n\,}\,)>0$.  
We consider the contour 
\(
  L_{12}=l_{12}\cup l_{21}\subset
  \{\mathrm{Re\,} \varphi\ge 0\}
  =
  \{\mathrm{Re\,} i\psi\ge 0\}
\), 
where
\begin{align*}
 l_{12}\colon& z(u)=S_1+(p-u)A
   &\; (0\le u \le p), 
 \\
 l_{21}\colon& z(u)=S_2- i u\bar{A}
  &\; (0\le u \le p).
\end{align*}
We have chosen $p$ so that  
$S_1+pA=S_2- i p \bar A$, that is, 
$l_{12}$ and $l_{21}$ are joined at a single point. 

\begin{figure}\centering
\includegraphics[width=8cm]{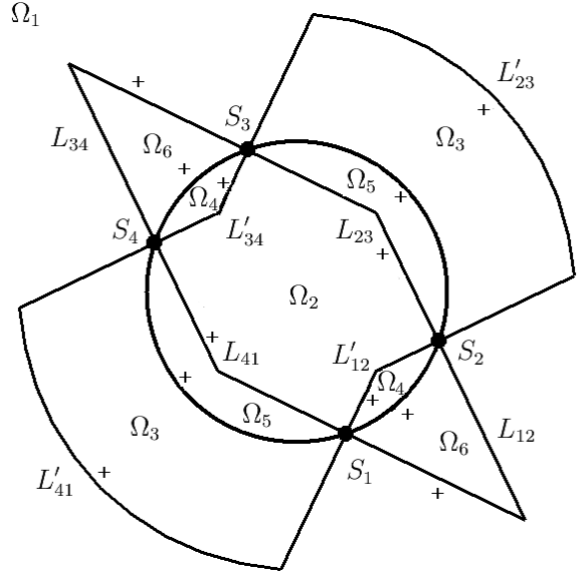}
\caption{the contour $\Sigma$}
\label{fig:contour}
\end{figure}

We can show that $h_{II}(\vartheta)$ can be analytically continued 
to $\{\mathrm{Re\,} \varphi>0\}=\{\mathrm{Re\,} i\psi> 0\}$.  
The extension is denoted by $h_{II}(z)$ 
so that $h_{II}( e^{ i(\vartheta-\pi/4)})=h_{II}(\vartheta)$ 
by abuse of notation. 
On $l_{12}$, we have for $v=p-u(=$the distance from $S_1)$, 
\begin{align*}
  {}&|\alpha(z(u))|
 \le v^q |S_1-S_2+vA|^q \le C^q v^q, 
 \\
 {} &| e^{-2 i t \psi(z(u))} h_{II}(z(u))|
 \le \frac{C^q v^q  e^{-t\mathrm{Re\,}  i \psi}}{\sqrt{2\pi}}
        \left(
        	\int_{-\infty}^t \frac{\, d s}{1+s^2}
        \right)^{1/2}
 \\
 &\hspace{10em} \times
        \left(
        	\int_{-\infty}^t
        	 (1+s^2) 
        	|(\widehat{h/\alpha})(s)|^2 \, d s  
        \right)^{1/2}.
\end{align*}
We have 
\begin{equation}
 \mathrm{Re\,}  i \psi\ge C'v^2 
\label{eq:estimatefrombelowbysquare}
\end{equation}
for some $C'>0$, because $v=0$ corresponds to the saddle point $S_1$. 
On $l_{12}$, 
\begin{equation*}
 | e^{-2 i \psi(z(u))} h_{II}(z(u))|
 \le C  v^q  e^{-C'tv^2}
 =Ct^{-q/2}(tv^2)^{q/2}  e^{-C' tv^2}
 \le C/t^{q/2}.
\end{equation*}
We have a similar estimate on $l_{21}$. Therefore, all over $L_{12}$, we have
\begin{equation}
 | e^{-2 i t \psi(z)}h_{II}(z)|
 \le C/t^{q/2}.
\label{eq:h_II}
\end{equation}

For a small  constant $\varepsilon>0$, 
let $l_{12}^\varepsilon\subset l_{12}$ and 
$l_{21}^\varepsilon\subset l_{21}$ be the segments given by
\begin{align*}
 & l_{12}^\varepsilon \colon 
    z(u)=S_1+(p-u)A
 &\quad (0\le u \le p-\varepsilon),
 \\
 & l_{21}^\varepsilon \colon 
    z(u)=S_2- i u \bar{A}
 &\quad (0<\varepsilon\le u \le p).
\end{align*}
The segment $l^\varepsilon_{jk}$ is obtained by removing the 
$\varepsilon$-neighborhood of $S_j$ from $l_{jk}$. 
On $l_{12}^\varepsilon  \cup l_{21}^\varepsilon\subset L_{12}$, 
we have 
$\mathrm{Re\,}  i \psi \ge C_\varepsilon \varepsilon^2$ for 
some $C_\varepsilon>0$. It implies 
\begin{equation}
	 | e^{-2 i t \psi(z)} R(z)|\le C 
	  e^{-C_\varepsilon \varepsilon^2 t}.
\label{eq:R}	 
\end{equation}

\subsection{Decomposition of another function on the same arc}

The function $\bar\rho$ on $\arc{S_1S_2}$ can be decomposed 
as $\bar\rho=\bar R+\bar h_I+\bar h_{II}$.

Set $p'=\sqrt{2(2t-n)}/(\sqrt{2t+n\,}+\sqrt{2t-n\,}\,)>0$. 
We construct the contour 
$L'_{12}=l'_{12}\cup l'_{21}$ in 
$\{\mathrm{Re\,} \varphi\le0\}=\{\mathrm{Re\,} i\psi\le0\}$ as follows: 
\begin{align*}
 l'_{12}\colon& z(u)=S_1+ (p'-u) i   A
   &\; (0\le u \le p'), 
 \\
 l'_{21}\colon& z(u)=S_2- u \bar{A}
  &\; (0\le u \le p').
\end{align*}
We have chosen $p'$ so that 
$l'_{12}$ and $l'_{21}$ are joined at a single point 
$S_1+ i p' A=S_2-p' \bar A$.

In the same way as (\ref{eq:h_I}) and  (\ref{eq:h_II}), we can show 
\begin{align}
 &| e^{2 i t \psi}\bar h_I(\vartheta)|\le C/t^{(3q+1)/2}
 \; \mbox{ on}\; \arc{S_1S_2}, 
 \label{eq:barh_I}
 \\
 \label{eq:barh_II}
 & | e^{2 i t \psi(z)}\bar h_{II}(z)|
 \le C/t^{q/2} \; \mbox{ on}\; L'_{12}.
\end{align}

Set 
\begin{align*}
 l_{12}^{\prime\varepsilon}\colon& z(u)=S_1+ (p'-u) i  A
   &\; (0\le u \le p'-\varepsilon), 
 \\
 l_{21}^{\prime\varepsilon}\colon& z(u)=S_2- u \bar{A}
  &\; (\varepsilon\le u \le p').
\end{align*}
In the same way as (\ref{eq:R}), we can show that 
\begin{equation}
	 | e^{2 i t \psi(z)} \overline{R}(z)|\le C 
	  e^{-C_\varepsilon \varepsilon^2 t}
	 \;\mbox{on}\;
	  l_{12}^{\prime\varepsilon}\cup l_{21}^{\prime\varepsilon}.
\label{eq:barR}
\end{equation}

\subsection{Decomposition on another arc}
\label{subsec:anotherarc}

On $\arc{S_2S_3}$, the functions 
$R, h_I, h_{II}, \bar R, \bar h_I, \bar h_{II}$ 
are constructed from $\rho$ and $\bar\rho$ 
in the same way as above. We have 
\begin{equation*}
  | e^{-2 i t \psi}h_I|\le C/t^{(3q+1)/2}, 
  \quad
  | e^{2 i t \psi}\bar h_I|\le C/t^{(3q+1)/2}.
\end{equation*}

Set $p''=\sqrt{2(2t+n)}/(\sqrt{2t+n\,}+\sqrt{2t-n\,}\,)>0$. 
Let $L_{23}$ be the contour obtained by joining 
\begin{align*}
 l_{23}\colon &z(u)=S_2+ i u \bar{A}
 &\quad (0\le u \le p''),
 \\
 l_{32}\colon &z(u)=S_3+(p''-u)A
 &\quad (0\le u \le p'').
\end{align*}
The segments $l_{jk}^\varepsilon\subset l_{jk},\, (j, k)=(2, 3), (3, 2)$ consist 
of points whose distance from $S_j$ is not less than $\varepsilon$.  
Notice that $S_2+ i p''\bar A=S_3+p'' A$ is inside the circle $|z|=1$. 
Then we can show in the same way as (\ref{eq:h_II})  and \eqref{eq:R} that 
\begin{align*}
  &| e^{-2 i t \psi(z)}h_{II}(z)|\le C/t^{q/2} \quad\mbox{on } L_{23}.
  \\
	& | e^{-2 i t \psi(z)} R(z)|\le C 
	  e^{-C_\varepsilon \varepsilon^2 t}
    \quad\mbox{on }
   l_{23}^\varepsilon\cup l_{32}^\varepsilon.
\end{align*}

Next,  let $L'_{23}$ be the contour obtained by joining
\begin{align*}
  l'_{23}\colon& z(u)=S_2+   u \bar A & (0\le u\le 1), \\
  \hat l'_{23}\colon& z(u)=
         \mathrm{Arc}(S_2+\bar A, S_3+iA), 
\;
        & \\ 
  l'_{32}\colon& z(u)=S_3 +  i (1-u) A & (0\le u\le 1). 
\end{align*}
Here $\mathrm{Arc}(S_2+\bar A, S_3+iA)$ is the minor arc 
$\subset \{|z|=(2+\sqrt{2})^{1/2}\}$ from $S_2+\bar A$ 
to $ S_3+iA$. 
Then we have estimates on $l'_{23}\cup l'_{32}$ similar to 
(\ref{eq:barh_II}). 
The arc $\hat l'_{23}$   is away from 
the  saddle points and we have  
exponential decay of $ e^{2 i t \psi}\bar h_{II}$ on it. 
Therefore we get an estimate like 
 (\ref{eq:barh_II}) on $L'_{23}$. Moreover, we  obtain an estimate like 
(\ref{eq:barR}) if we exclude 
the $\varepsilon$-neighborhood 
of $S_2$ and $S_3$. 

\subsection{Decomposition on the remaining arcs}
We construct $L_{jk}$ and $L'_{jk}$ for 
$(j, k)=(3, 4), (4, 1)$ by symmetry  and get relevant estimates. 
The results in this section lead to   Lemma~\ref{lem:wawbwcw'estimates} below.

\section{A Riemann-Hilbert problem on a new contour}
\label{sec:newcontour}

Set $L=\cup_{(j, k)}L_{jk}\subset\{\mathrm{Re\,}  \varphi\ge 0\}$, 
 $L'=\cup_{(j, k)}L'_{jk}\subset\{\mathrm{Re\,}  \varphi\le 0\}$, 
 $\Sigma=\tilde C \cup L \cup L'$.  
We define six open sets $\Omega_1, \dots, \Omega_6$ as in 
Figure~\ref{fig:contour}. 
The + signs  indicate  the plus  sides of the curves. 
If $j$ is odd (resp. even), 
$\arc{S_j S_{j+1}}$ is oriented clockwise (resp. counterclockwise)  and 
$L\cup L'$ is oriented inward (resp. outward) near $S_j$.

\begin{remark}
\label{rem:negativen}
If $n\le 0$, the contour $\Sigma$ should be slightly modified. 
\begin{enumerate}
\item[(i)] If $n=0$, we should replace $L_{12}$ and $L_{23}$ with curves like 
$L'_{23}$ and $L'_{41}$ in Figure~\ref{fig:contour}, 
each made up of two segments and an arc. 
\item[(ii)]
If $n$ is negative, the argument of $A$ is between 
$-\pi/4$ and $-\pi/2$. The parts $L'_{23}$ and $L'_{41}$ 
should each be made up of two segments, while 
$L_{12}$ and $L_{34}$ of two segments and an arc. 
\end{enumerate}

The expansion in Theorem~\ref{thm:main} is uniform with respect to 
the ratio $n/t\in [-(2-V_0), 2-V_0]$. 
The uniformity can be proved by  considering 
the following three cases   ($\epsilon>0$ is small): (a)  $\epsilon\le n/t\le 2-V_0$, 
(b)  $-\epsilon< n/t< \epsilon$, (c) $-(2-V_0) \le n/t\le  -\epsilon$. 
The uniformity in  (a) follows from the calculations explicitly 
given in the present paper. 
The contour of the type in (i) [resp. (ii)]  is useful in (b) [resp. (c)]. 
In dealing with each part of the contour, 
one has only to follow the subsection~\ref{subsec:decomposition} 
(two segments) 
or \ref{subsec:anotherarc} (two segments and an arc). 

One can show the uniformity in $-(2-V_0)\le n/t \le 2-V_0$ 
by using only one multipurpose contour of the type in  (i). 
See Figure~\ref{fig:Sigmamultipurpose}. 
We have chosen to use 
the contour as in Figure~\ref{fig:contour} in order to simplify 
the presentation in the case (a). 
\begin{figure}\centering
  \includegraphics[width=3cm]{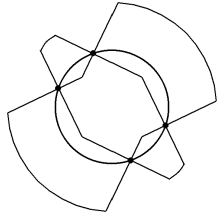} 
  \caption{$\Sigma$ (multipurpose version) }
  \label{fig:Sigmamultipurpose}
\end{figure}

\end{remark}

Notice that each of $\Omega_3, \dots, \Omega_6$ 
has two connected components and that $\Omega_1$ is unbounded. 
We introduce the following matrices:
\begin{align*}
 & 
	 b_+^0=
	 \delta_+^{\mathrm{ad}\sigma_3}  
       \rm e^{-\varphi\mathrm{ad}\sigma_3} 
	 \begin{bmatrix} 1 & \quad  h_I \\ 0 & \quad 1 \end{bmatrix},  \;
	 &
	 b_+^a=
	 \delta_+^{\mathrm{ad}\sigma_3}  
       \rm e^{-\varphi\mathrm{ad}\sigma_3} 
	 \begin{bmatrix} 1\; &\; h_{II}+R \\ 0\; &\; 1 \end{bmatrix}, 
 \\ 
 & 
	 b_-^0=
	 \delta_-^{\mathrm{ad}\sigma_3}  
       \rm e^{-\varphi\mathrm{ad}\sigma_3} 
	 \begin{bmatrix} 1 & \quad 0 \\ \bar{h}_I & \quad 1 \end{bmatrix},  \;
	 &
	 b_-^a=
	 \delta_-^{\mathrm{ad}\sigma_3}  
       \rm e^{-\varphi\mathrm{ad}\sigma_3} 
	 \begin{bmatrix} 1 \;& \; 0 \\ \bar h_{II}+\bar R \; & \; 1 \end{bmatrix}. 
\end{align*}
Then  $b_\pm=b^0_\pm b^a_\pm$, 
$\tilde v=(b_-^a)^{-1} (b_-^0)^{-1} b_+^0 b_+^a$. 
Define a new unknown matrix  $m^\sharp$ by 
\begin{align}
m^\sharp
&= m\Delta^{-1},  
&  z\in\Omega_1\cup\Omega_2, 
\label{eq:msharpdef1}
\\
&= m\Delta^{-1} (b_-^a)^{-1}, 
& z\in\Omega_3\cup\Omega_4, 
\label{eq:msharpdef2}
\\
&= m\Delta^{-1} (b_+^a)^{-1},  
&  z\in\Omega_5\cup\Omega_6 .
\label{eq:msharpdef3}
\end{align}
By (\ref{eq:mdelta^-1_1}),  (\ref{eq:mdelta^-1_2}) and 
(\ref{eq:factorization}) it  is the unique solution to the  
Riemann-Hilbert problem 
\begin{align}
 &  m^\sharp_+ (z)=m^\sharp (z)_- v^\sharp (z), 
   & z\in \Sigma, 
 \label{eq:msharp1}
 \\
 & m^\sharp(z)\to I   &\mbox{as\;} z\to\infty.
  \label{eq:msharp2}
\end{align}
Here $v^\sharp=v^\sharp(z)=v^\sharp(z; n, t)$ is defined by 
\begin{align*}
 v^\sharp(z)
 &= (b_-^0)^{-1} b_+^0, 
 & z\in\tilde C, 
 \\
 &=b_+^a, 
 &  z\in L, 
 \\
 &= (b_-^a)^{-1}, 
 &\qquad z\in L'. 
\end{align*}

Set 
\( b^\sharp_-
   = b_-^0, I, b_-^a
\)   
and 
\( b^\sharp_+
  = b_+^0,  b_+^a, I
\) 
on $\tilde C, L, L'$ respectively. 
Then, on $\Sigma$, we have
\begin{equation*}
 v^\sharp=v^\sharp(z)=(b_-^\sharp)^{-1}    b_+^\sharp.
\end{equation*}
In the next section, 
we shall employ  
\(
 w_\pm^\sharp
  = \pm (b_\pm^\sharp -I )
\), 
\(
  w^\sharp
  =w_+^\sharp+w_-^\sharp
\).  
We have 
\(
 v^\sharp=(I-w^\sharp_-)^{-1} (I+w^\sharp_+)
          =(I+w^\sharp_-) (I+w^\sharp_+)
\).

\section{Reconstruction and a resolvent}

\subsection{Reconstruction}

Since $\delta'(0)=0$, 
(\ref{eq:Rnreconstruction}) and (\ref{eq:msharpdef1}) imply 
\begin{equation}
 R_n(t)=-\lim_{z\to 0}\frac{ 1}{z}(m^\sharp_{21}\delta)
 =-\frac{ \, d}{\, d z} (m^\sharp_{21}\delta)\bigr|_{z=0}
 =-\delta(0) 
  \frac{\, d m^\sharp_{21}}{\, d z} (0). 
\label{eq:Rnreconst}
\end{equation}

Let 
\[
 (C_\pm f)(z)
 =\int_\Sigma \frac{ f(\zeta)}{\zeta-z_\pm}\frac{ \, d \zeta}{2\pi  i}, \;
 =
 \lim_{  \genfrac{}{}{0pt}{}{z'\to z}{z'\in\{\pm\textrm{\scriptsize -side of }\Sigma\}}}
 \int_\Sigma \frac{ f(\zeta)}{\zeta-z'}\frac{ \, d \zeta}{2\pi  i}, \;
   z\in\Sigma
\]
be the Cauchy operators on $\Sigma$. Define 
\(
 C_{w^\sharp}=C_{w^\sharp}^\Sigma\colon L^2(\Sigma)\to L^2(\Sigma)
\) 
by 
\begin{equation}
 C_{w^\sharp} f =C_{w^\sharp}^\Sigma f
 = C_+(f w^\sharp_-) + C_-(f w^\sharp_+)
 \label{eq:integopdef}
\end{equation}
for a $2\times 2$ matrix-valued function $f$. 
Later we will define similar operators by replacing the pair 
$(\Sigma, w^\sharp)$ with others. Even if a kernel, say $\omega_\pm$,  is 
supported by a subcontour $\Sigma_1$ of 
$\Sigma_2$, it is necessary to distinguish between 
$C_\omega^{\Sigma_1}$ and $C_\omega^{\Sigma_2}$. 

Let $\mu^\sharp$ be the solution to the equation
\begin{equation}
 \mu^\sharp=I+ C_{w^\sharp}\mu^\sharp.
\end{equation}
Then we have $\mu^\sharp=(1-C_{w^\sharp})^{-1}I$ 
(the resolvent exists),  and 
\begin{equation}
 m^\sharp(z; n, t)=I+\int_\Sigma 
                    \frac{\mu^\sharp (\zeta; n, t) w^\sharp(\zeta) }{\zeta-z}
                  \frac{ \, d \zeta}{2\pi  i}, \;
 z\in \mathbb{C}\setminus\Sigma
\label{eq:msharpdef}
\end{equation}
is the unique solution to the Riemann-Hilbert problem (\ref{eq:msharp1}), (\ref{eq:msharp2}). 
By substituting (\ref{eq:msharpdef}) into (\ref{eq:Rnreconst}), we find that 
\begin{align}
  \qquad R_n(t)
 &=-\delta(0)\int_\Sigma 
   \zeta^{-2}
  \left[
   \mu^\sharp (\zeta; n, t) w^\sharp(\zeta) 
  \right]_{21}
  \frac{ \, d \zeta}{2\pi  i}
  \\
 {} &
  =
  - \delta(0)
  \int_\Sigma 
   z^{-2}
  \left[
   \bigl((1-C_{w^\sharp})^{-1}I\bigr)(z)
    w^\sharp(z) 
  \right]_{21}
  \frac{ \, d z}{2\pi  i}.
 \label{eq:20111125}
\end{align}
In \S\ref{sec:crosses}, we will prove that the resolvent 
$(1-C_{w^\sharp})^{-1}\colon L^2(\Sigma)\to L^2(\Sigma)$ indeed 
exists for any sufficiently large $t$ and that  its norm is 
uniformly bounded.

\subsection{Partition of the matrices}

We set,  for $0<\varepsilon<\inf_{(n, t)} p'$, 
\begin{align*}
 \textstyle
 L_\varepsilon
 &=\{z\in  L  ;   |z-S_j|>\varepsilon \; \mbox{for all}\; j=1, 2, 3, 4\}, 
 \\
 L'_\varepsilon
 &=\{z\in L' ;  |z-S_j|>\varepsilon  \; \mbox{for all}\; j=1, 2, 3, 4\}.
\end{align*}
Set $w^a_\pm=w^\sharp_\pm|_{\tilde C}$  
(consisting of quantities of type $h_I$ or $\bar h_I$ together with 
$\delta$ and $ e^{\pm\varphi}= e^{\pm i t \psi}$)   
on $\tilde C$ and 
$w^a_\pm=0$ on $\Sigma\setminus \tilde C$. 
Let $w^b_\pm$ be equal to the contribution to 
$w^\sharp_\pm$ from the 
quantities involving $h_{II}$ or $\bar h_{II}$ on $L\cup L'$ 
and set $w^b_\pm=0$ on $\Sigma\setminus (L\cup L')$. 
Additionally, 
let $w^c_\pm$ be equal to the contribution to 
$w^\sharp_\pm$ from the quantities involving $R$ or $\bar R$ on 
$L_\varepsilon \cup L'_\varepsilon$ 
and set $w^c_\pm=0$ on 
$\Sigma\setminus (L_\varepsilon\cup L_\varepsilon')$. 
Finally, we set 
\(
  w^e_\pm  =w^a_\pm  +w^b_\pm+w^c_\pm
\) 
and 
\(  w'_\pm    = w^\sharp_\pm - w^e_\pm 
\). 
These matrices are all upper or lower triangular and their 
diagonal elements are zero. 
We will show that $w_\pm^e$ are small in a certain sense and that 
the main contribution is by $w'_\pm$.

We define 
\(
  w^*=w^*_+ +w^*_-
\) 
for $*=a, b, c, e, '$.
Then we have 
\(
  w^e
  =w^a+w^b+w^c
\) 
and 
$w'=w^\sharp-w^e$. 
Set 
\(
 \Sigma' 
 =\{
      \Sigma\setminus(\tilde C \cup L_\varepsilon \cup L'_\varepsilon)
    \}
   \cup     \{S_1, S_2, S_3, S_4\}
 =(L\cup L')\cap
     \cup_{j=1}^4 \{|z-S_j|\le\varepsilon\}
\), 
then it is a union of four small crosses  and 
\(
     \mathrm{supp\,}w'_\pm \subset
     \mathrm{supp\,}w'\subset
     \Sigma'
\). As is shown in Figure~\ref{fig:Sigma'}, each cross is 
oriented inward or outward. 
We have
\begin{align}
  w'_+
 &= \delta^{\mathrm{ad\,}\sigma_3}
  e^{-\varphi{\mathrm{ad\,}\sigma_3}}
 \begin{bmatrix} 0 & \; R \\ 0 & \; 0 \end{bmatrix}
 = \begin{bmatrix} 0 \;& \quad \delta^2  e^{-2 i t \psi} R \\ 0 \;& 0 \end{bmatrix}
 &\quad\mbox{on\,}L\cap\Sigma', 
 \label{eq:w'+}
 \\
  w'_-
 &= \delta^{\mathrm{ad\,}\sigma_3}
  e^{-\varphi{\mathrm{ad\,}\sigma_3}}
 \begin{bmatrix} 0 &\;0 \\ - \bar R & \; 0 \end{bmatrix}
 = \begin{bmatrix} 0 & \quad 0 \\ - \delta^{-2}  e^{2 i t \psi}\bar R & \quad 0 \end{bmatrix}
 &\quad\mbox{on\,}L'\cap\Sigma'
  \label{eq:w'-}
\end{align}
and they vanish elsewhere.

\begin{figure}\centering
  \includegraphics[width=5cm]{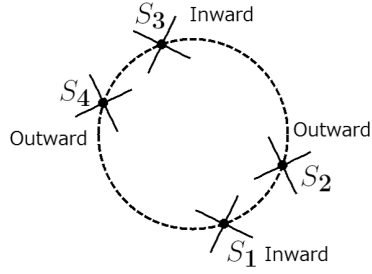}
 \caption{$\Sigma'$: four small crosses}
 \label{fig:Sigma'}
\end{figure}

\begin{lemma}
 For any positive integer $\ell$, there exist  positive constants 
$C$ and $\gamma_\varepsilon$ such that 
 \begin{align}
   |w^a_\pm|, \, 
   |w^a|\le Ct^{-\ell}
   &\;\mbox{on}\;  
   \tilde C, 
   \label{eq:w^a}
   \\
   |w^b_\pm|, \, 
  |w^b|\le Ct^{-\ell}
   & \;\mbox{on}\; 
   L\cup L',
   \label{eq:w^b}
   \\
   |w^c_\pm|, \, 
  |w^c|\le C  e^{-\gamma_\epsilon t}
   & \;\mbox{on}\; 
  L_\varepsilon \cup   L'_\varepsilon . 
  \label{eq:w^c}
 \end{align}
$L^p$ estimates are easily obtained 
since the length of $\Sigma$ 
is bounded uniformly with respect to $(n, t)$ satisfying 
(\ref{eq:region}).  
Moreover we have
\begin{align}
  & \|w'_\pm \|_{L^2(\Sigma)} \le Ct^{-1/4}, \; 
  \|w'\|_{L^2(\Sigma)} \le Ct^{-1/4},
   \label{eq:w'L2L2estimates}
  \\
  & \|w'_\pm\|_{L^1(\Sigma)} \le Ct^{-1/2}, \; 
  \|w'\|_{L^1(\Sigma)} \le Ct^{-1/2}.
  \label{eq:w'L2L2estimates2}
\end{align}
\label{lem:wawbwcw'estimates}
\end{lemma}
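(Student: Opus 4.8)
The plan is to split the work according to the four blocks. The bounds \eqref{eq:w^a}--\eqref{eq:w^c} for $w^a,w^b,w^c$ are pointwise ($L^\infty$) consequences of the decomposition carried out in \S\ref{sec:decomposition}, and only \eqref{eq:w'L2L2estimates}--\eqref{eq:w'L2L2estimates2} require genuine work. Throughout I will use that $\delta^{\pm1}$ is bounded on $\tilde C$ and on the crosses $\Sigma'$: since $0\le|r|<1$, the weight $\log(1-|r|^2)$ is bounded, and the only singularities of $\delta$ in \eqref{eq:defdelta} sit at the endpoints $S_j$, where $\delta$ behaves like a power of $(z-S_j)$ with purely imaginary exponent, so $|\delta^{\pm1}|$ stays bounded.

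For \eqref{eq:w^a} I would read off from the definition of $b^0_\pm$ that the only nonzero entries of $w^a_\pm$ on $\tilde C$ are $\delta_+^2 e^{-2it\psi}h_I$ and $-\delta_-^{-2}e^{2it\psi}\bar h_I$; bounding $\delta^{\pm1}$ and invoking \eqref{eq:h_I} and \eqref{eq:barh_I} gives $|w^a_\pm|\le C\,t^{-(3q+1)/2}$. Likewise \eqref{eq:w^b} follows from the forms of $b^a_\pm$ together with \eqref{eq:h_II} and \eqref{eq:barh_II}, giving $|w^b_\pm|\le C\,t^{-q/2}$ on $L\cup L'$, and \eqref{eq:w^c} follows from \eqref{eq:R} and \eqref{eq:barR}, giving $|w^c_\pm|\le C\,e^{-\gamma_\varepsilon t}$ with $\gamma_\varepsilon=C_\varepsilon\varepsilon^2$ on $L_\varepsilon\cup L'_\varepsilon$. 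Given $\ell$, I would simply fix $q$ large enough (e.g.\ $q\ge 2\ell$) so that both $t^{-(3q+1)/2}$ and $t^{-q/2}$ are $\le C\,t^{-\ell}$. Since the length of $\Sigma$ is bounded uniformly in $(n,t)$ satisfying \eqref{eq:region}, every $L^p$ norm is controlled by the corresponding $L^\infty$ bound, which disposes of the $L^p$ versions.

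The heart of the lemma is \eqref{eq:w'L2L2estimates}--\eqref{eq:w'L2L2estimates2}. Here $w'_\pm$ is supported on the four crosses $\Sigma'$ and is given explicitly by \eqref{eq:w'+}--\eqref{eq:w'-}, the single nonzero entries being $\delta^2 e^{-2it\psi}R$ on $L\cap\Sigma'$ and $-\delta^{-2}e^{2it\psi}\bar R$ on $L'\cap\Sigma'$. The functions $R,\bar R$ are analytic near each $S_j$, hence bounded on the crosses, and $\delta^{\pm1}$ is bounded, so everything is controlled by the exponential factor. Writing $v$ for the distance to the nearest saddle point, the crucial input is the quadratic lower bound \eqref{eq:estimatefrombelowbysquare} (and its analogue on $L'$), which forces $|\delta^2 e^{-2it\psi}R|\le C e^{-2C'tv^2}$ on $L\cap\Sigma'$ and $|\delta^{-2}e^{2it\psi}\bar R|\le C e^{-2C'tv^2}$ on $L'\cap\Sigma'$. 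Integrating the resulting Gaussian over $0\le v\le\varepsilon$ along each arm of each cross,
\[
 \int_0^\varepsilon e^{-4C'tv^2}\,dv\le C\,t^{-1/2},
 \qquad
 \int_0^\varepsilon e^{-2C'tv^2}\,dv\le C\,t^{-1/2},
\]
so that $\|w'_\pm\|_{L^2(\Sigma)}^2\le C\,t^{-1/2}$ and $\|w'_\pm\|_{L^1(\Sigma)}\le C\,t^{-1/2}$, which is \eqref{eq:w'L2L2estimates}--\eqref{eq:w'L2L2estimates2}; the bounds for $w'=w'_++w'_-$ follow by the triangle inequality since the two supports are disjoint.

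The main obstacle is not the algebra but the uniformity of the constant $C'$ in \eqref{eq:estimatefrombelowbysquare} across the whole region \eqref{eq:region}: it is precisely on the crosses, inside the excluded $\varepsilon$-neighborhoods of the $S_j$, that one has no uniform exponential decay and must rely on the Gaussian scale $v\sim t^{-1/2}$ that produces the decisive $t^{-1/4}$ (for $L^2$) and $t^{-1/2}$ (for $L^1$) rates. I would secure this by noting that the curvature at the saddles is bounded below uniformly: since $|\varphi''(S_j)|=2\sqrt{4t^2-n^2}$, one has $|\psi''(S_j)|=2\sqrt{4-(n/t)^2}\ge 2\sqrt{V_0(4-V_0)}>0$ on $|n|\le(2-V_0)t$, so $C'$ can be chosen independently of $(n,t)$ and all the estimates above are uniform.
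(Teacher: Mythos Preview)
Your proof is correct and follows essentially the same approach as the paper: the pointwise bounds \eqref{eq:w^a}--\eqref{eq:w^c} are read off from \eqref{eq:h_I}, \eqref{eq:barh_I}, \eqref{eq:h_II}, \eqref{eq:barh_II}, \eqref{eq:R}, \eqref{eq:barR} together with the boundedness of $\delta^{\pm1}$, and the $L^1$/$L^2$ bounds for $w'$ come from the quadratic lower bound \eqref{eq:estimatefrombelowbysquare} and a Gaussian integral. Your added remark on the uniformity of $C'$ via $|\psi''(S_j)|\ge 2\sqrt{V_0(4-V_0)}$ is a welcome detail that the paper leaves implicit.
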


\begin{proof} 
 The boundedness of $\delta$ and $\delta^{-1}$ will be proved 
 in \S\ref{sec:saddlepoints}. 
 The inequality (\ref{eq:w^a}) follows from (\ref{eq:h_I}), 
 (\ref{eq:barh_I}) and their analogues. 
 The   inequalities \eqref{eq:w^b} and \eqref{eq:w^c} are 
 consequences of (\ref{eq:h_II}),  (\ref{eq:R}),  (\ref{eq:barh_II}), 
 (\ref{eq:barR}) and their analogues. 
Finally in order to derive (\ref{eq:w'L2L2estimates}) and  (\ref{eq:w'L2L2estimates2}), 
we employ (\ref{eq:estimatefrombelowbysquare}) and its analogues. 
Since $R, \bar R, \delta, \delta^{-1}$ are bounded, we have only to 
calculate the Gauss type integral 
$\int_{0}^\infty  e^{-{\textrm{\scriptsize const.}}t v^2 } \, d v$. 
\end{proof}

We define the integral operators 
 $C_{w^\prime}$ and   $C_{w^e}$  from $L^2(\Sigma)$ to itself 
 of the type (\ref{eq:integopdef}). 
 We have 
 \(
   C_{w^\sharp}=
   C_{w^\prime}+C_{w^e}
 \). 
 
Later in \S\ref{sec:crosses} we will prove that 
$(1-C_{w^\prime})^{-1}$ and $(1-C_{w^\sharp})^{-1}$  
exist and are uniformly bounded for any sufficiently large $t$. 
We proceed assuming this assertion.

\subsection{Resolvents and estimates}
By the second resolvent identity, or rather 
by (\ref{eq:resolventidentity}) below, we get 
\begin{align}
& \int_\Sigma  z^{-2} \bigl((1-C_{w^\sharp})^{-1}I\bigr)(z)
 w^\sharp(z) 
 \nonumber
 \\
 & =\int_\Sigma  z^{-2} \bigl((1-C_{w^\prime})^{-1}I\bigr)(z)
 w^\prime(z) 
 +\mathrm{I}+\mathrm{II}+\mathrm{III}+\mathrm{IV}, 
 \label{eq:I_II_III_IV}
\end{align}
where
\begin{align*}
  &\mathrm{I}=\int_\Sigma z^{-2}  w^e, \quad 
 \mathrm{II}=\int_\Sigma z^{-2} 
    (1-C_{w^\prime})^{-1}
    (C_{w^e}I) w^\sharp, 
  \\
 & \mathrm{III}=\int_\Sigma z^{-2} 
    (1-C_{w^\prime})^{-1}
    (C_{w^\prime}I) w^e,
 \\
 & \mathrm{IV}=\int_\Sigma z^{-2} 
    (1-C_{w^\prime})^{-1} C_{w^e}
     (1-C_{w^\sharp})^{-1}
    (C_{w^\sharp}I) w^\sharp.
\end{align*}
Since $|z^{-2}|$ is uniformly bounded on $\Sigma$,  (\ref{eq:w^a}),  (\ref{eq:w^b}) and  (\ref{eq:w^c}) imply 
\begin{equation}
 |\mathrm{I}|, |\mathrm{II}|, |\mathrm{III}|, |\mathrm{IV}|\le Ct^{-\ell} \quad\mbox{as}\;t\to\infty. 
  \label{eq:20111125_2}
\end{equation}
\begin{proposition}
 \label{prop:Qnw'}
 We have
 \begin{equation}
  R_n(t)
  =  -\delta(0)
  \int_\Sigma 
  z^{-2}
  \left[
   \bigl((1-C_{w^\prime})^{-1}I\bigr)(z) w^\prime(z) 
  \right]_{21}
  \frac{ \, d z}{2\pi  i}+O(t^{-\ell}).
\end{equation}
\end{proposition}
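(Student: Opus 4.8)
The plan is to read off the result by combining the exact reconstruction formula \eqref{eq:20111125} with the resolvent decomposition \eqref{eq:I_II_III_IV} and the smallness bounds \eqref{eq:20111125_2}. Indeed, \eqref{eq:20111125} already expresses $R_n(t)$ as $-\delta(0)$ times the $(2,1)$-entry of $\int_\Sigma z^{-2} \bigl((1-C_{w^\sharp})^{-1}I\bigr)(z)\, w^\sharp(z)\, dz/(2\pi i)$, so the only task is to trade the pair $(C_{w^\sharp}, w^\sharp)$ for the pair $(C_{w'}, w')$ at the cost of a controllable error.

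First I would invoke the second resolvent identity in the form \eqref{eq:resolventidentity} applied to $C_{w^\sharp} = C_{w'} + C_{w^e}$; this is precisely what produces the identity \eqref{eq:I_II_III_IV}, isolating the desired main term $\int_\Sigma z^{-2} \bigl((1-C_{w'})^{-1}I\bigr)(z)\, w'(z)$ and relegating everything else to the four remainders $\mathrm{I}, \mathrm{II}, \mathrm{III}, \mathrm{IV}$. The structural point to verify is that each remainder carries at least one factor of $w^e = w^\sharp - w'$, either directly as the integrating measure (as in $\mathrm{I}$ and $\mathrm{III}$) or through the operator $C_{w^e}$ (as in $\mathrm{II}$ and $\mathrm{IV}$).

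Next I would bound the four remainders. The inputs are: the uniform boundedness of $z^{-2}$ on $\Sigma$; the pointwise and hence, since $\Sigma$ has bounded length, the $L^1$ and $L^2$ estimates of $w^a, w^b, w^c$, and so of $w^e$, from Lemma~\ref{lem:wawbwcw'estimates}; the $L^2\to L^2$ boundedness of $C_\pm$; and the uniform boundedness of $(1-C_{w'})^{-1}$ and $(1-C_{w^\sharp})^{-1}$. Term $\mathrm{I}$ is immediate from $\|w^e\|_{L^1(\Sigma)} \le Ct^{-\ell}$. For $\mathrm{II}, \mathrm{III}, \mathrm{IV}$ one writes the integrand as a product of an $L^2$ function with either $w^\sharp$ or $w^e$ and applies the Cauchy--Schwarz inequality; in each product the factor $C_{w^e}I$ or $w^e$ contributes the decay $t^{-\ell}$ while all remaining factors are bounded in $L^2$ or $L^\infty$ uniformly in $(n,t)$. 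This yields exactly \eqref{eq:20111125_2}. Taking the $(2,1)$-component of the whole identity and absorbing $\mathrm{I}+\mathrm{II}+\mathrm{III}+\mathrm{IV}$ into $O(t^{-\ell})$ then gives the claim.

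The main obstacle is not in this proposition itself, whose proof is a bookkeeping exercise once the pieces are assembled, but in the two facts it quietly borrows: the uniform boundedness of $(1-C_{w^\sharp})^{-1}$ and $(1-C_{w'})^{-1}$, deferred to \S\ref{sec:crosses}, and the off-diagonal triangular structure of the $w$'s that is recorded in \eqref{eq:w'+} and \eqref{eq:w'-}. Within the present argument, the only delicate step is to organize the resolvent identity so that every error term genuinely contains a factor of $w^e$ and to match the $L^p$ exponents in Cauchy--Schwarz so that no decay is wasted; once that is arranged, Lemma~\ref{lem:wawbwcw'estimates} does the rest.
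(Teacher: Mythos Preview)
Your proposal is correct and follows exactly the paper's approach: the paper's proof is the one-line ``Use \eqref{eq:20111125}, \eqref{eq:I_II_III_IV} and \eqref{eq:20111125_2},'' and you have simply unpacked those three citations, including the Cauchy--Schwarz bookkeeping behind \eqref{eq:20111125_2} that the paper leaves implicit. Your remarks about the deferred resolvent bounds and the triangular structure of the $w$'s are accurate contextual observations but not needed for the proposition itself.
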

\begin{proof} Use (\ref{eq:20111125}), 
\eqref{eq:I_II_III_IV} and  (\ref{eq:20111125_2}).  \end{proof}

\begin{proposition}
 \label{prop:Qnw'2}
 We have
 \begin{equation}
  R_n(t)
  =  -\delta(0)
  \int_{\Sigma'} 
  z^{-2}
  \left[
   \bigl((1-C_{w'}^{\Sigma'})^{-1}I\bigr)(z) w^\prime(z) 
  \right]_{21}
  \frac{ \, d z}{2\pi  i}+O(t^{-\ell}). 
\end{equation}
Here $C_{w'}^{\Sigma'}$ is an integral operator on $L^2(\Sigma')$ whose kernel 
is  $w'_\pm|_{\Sigma'}$. Notice that $C_{w'}$  is an integral operator 
on $L^2(\Sigma)$ whose kernel is  $w'_\pm$, matrices of functions on $\Sigma\supset\Sigma'$.  
\end{proposition}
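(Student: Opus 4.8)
The plan is to pass from the integral operator $C_{w'}$ acting on $L^2(\Sigma)$ to the restricted operator $C_{w'}^{\Sigma'}$ acting on $L^2(\Sigma')$, using the fact that the kernel $w'_\pm$ is supported on $\Sigma'$ (by \eqref{eq:w'+}, \eqref{eq:w'-} and the surrounding discussion). First I would observe that since $\mathrm{supp}\,w'_\pm\subset\Sigma'$, the Cauchy operators $C_\pm$ appearing in $C_{w'}f=C_+(fw'_-)+C_-(fw'_+)$ only integrate over $\Sigma'$, so for a function $f$ supported on $\Sigma'$ the values $(C_{w'}f)(z)$ at points $z\in\Sigma'$ depend only on $f|_{\Sigma'}$. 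This means $C_{w'}^{\Sigma'}$ is precisely the compression of $C_{w'}$ to $L^2(\Sigma')$, and one can check that the two resolvents agree on $\Sigma'$ in the sense that $\bigl((1-C_{w'})^{-1}I\bigr)\bigr|_{\Sigma'}=(1-C_{w'}^{\Sigma'})^{-1}I$.

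The key step is to justify this identity of resolvents carefully. I would argue as follows. Write $\Sigma=\Sigma'\sqcup(\Sigma\setminus\Sigma')$ and decompose $L^2(\Sigma)=L^2(\Sigma')\oplus L^2(\Sigma\setminus\Sigma')$. With respect to this splitting, $C_{w'}$ has block structure in which the block mapping into $L^2(\Sigma\setminus\Sigma')$ from $L^2(\Sigma')$ is generally nonzero, but crucially the block $C_{w'}:L^2(\Sigma\setminus\Sigma')\to L^2(\Sigma)$ annihilates its input because $w'_\pm$ vanishes off $\Sigma'$. Thus if we set $\mu=(1-C_{w'})^{-1}I$ and restrict the equation $\mu=I+C_{w'}\mu$ to $\Sigma'$, the right-hand side only ever evaluates $C_{w'}$ against $\mu$ multiplied by $w'_\pm$, which is supported on $\Sigma'$; hence $\mu|_{\Sigma'}$ satisfies $\mu|_{\Sigma'}=I+C_{w'}^{\Sigma'}(\mu|_{\Sigma'})$, giving $\mu|_{\Sigma'}=(1-C_{w'}^{\Sigma'})^{-1}I$ provided the latter resolvent exists. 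The existence and uniform boundedness of $(1-C_{w'})^{-1}$ is assumed from the end of the previous section, and an identical argument (or direct restriction) yields that of $(1-C_{w'}^{\Sigma'})^{-1}$.

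Finally I would return to the integral in Proposition~\ref{prop:Qnw'}. Its integrand carries the factor $w'(z)$, which vanishes on $\Sigma\setminus\Sigma'$, so the integral over $\Sigma$ reduces to one over $\Sigma'$ without any error term. Substituting $\bigl((1-C_{w'})^{-1}I\bigr)(z)=\bigl((1-C_{w'}^{\Sigma'})^{-1}I\bigr)(z)$ for $z\in\Sigma'$, established in the previous step, converts the expression in Proposition~\ref{prop:Qnw'} directly into the claimed formula, with the $O(t^{-\ell})$ error carried over unchanged.

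I expect the main obstacle to be the rigorous bookkeeping in the resolvent restriction: one must verify that the off-diagonal block of $C_{w'}$ does not contaminate $\mu|_{\Sigma'}$, i.e. that the fixed-point equation genuinely closes on $L^2(\Sigma')$. This rests entirely on $w'_\pm$ being supported in $\Sigma'$, which is guaranteed by the definitions in the preceding section, so while the argument is conceptually clean, the care lies in correctly handling the $C_+$ versus $C_-$ boundary values and confirming that restricting the contour of integration does not change the Cauchy kernel evaluated at points of $\Sigma'$. Once that support argument is in place, the remaining manipulations are routine.
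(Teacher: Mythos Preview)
Your proposal is correct and follows essentially the same approach as the paper: use $\mathrm{supp}\,w'_\pm\subset\Sigma'$ to shrink the integration domain, then identify $(1-C_{w'})^{-1}I\bigr|_{\Sigma'}$ with $(1-C_{w'}^{\Sigma'})^{-1}I$. The only difference is packaging: the paper dispatches the resolvent restriction in one line by invoking \cite[(2.61)]{DZ}, whereas you write out the support/block-structure argument that underlies that citation.
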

\begin{proof}
Since $\mathrm{supp}\, w'_\pm\subset\Sigma'$, 
we can replace $\int_\Sigma$ by $\int_{\Sigma'}$. Apply  \cite[(2.61)]{DZ}. 
\end{proof}

\begin{remark} 
In deriving  \eqref{eq:I_II_III_IV}, we used the following formula: 
if the operators $A, B, C$ and the matrices $f, g, h$ are 
such that $A=B+C, f=g+h$, then 
\begin{align}
 {} \{(1-A)^{-1}I\}f
 &=\{(1-B)^{-1}I\}g+h+\{(1-B)^{-1}(CI)\}f
 \nonumber
 \\
{} &\quad
 +\{(1-B)^{-1}(BI)\}h + \{(1-B)^{-1}C(1-A)^{-1}(AI)\}f. 
\label{eq:resolventidentity}
\end{align}
In  \eqref{eq:I_II_III_IV}, $f, g, h$ involve the factor $z^{-2}$, 
which is absent in the calculation of \cite{DZ}. 
Its presence complicates matters, and it will be dealt with in  \eqref{eq:difficultineq}. 
\end{remark}

\section{Saddle points and scaling operators}
\label{sec:saddlepoints}

\subsection{Some functions characterizing  arcs and their boundedness}

Set 
$T_1=T_2= e^{-\pi i/4}$,  
$T_3=T_4= e^{3\pi i/4}$ and  
\begin{align}
 & \chi_j(z)
 =\frac{1}{2\pi  i}\int_{T_j}^{S_j} 
   \log 
   \frac{1-|r(\tau)|^2}{1-|r(S_j)|^2}
   \frac{\, d \tau}{\tau-z}, 
 \label{eq:chi_j}
 \\
  & \ell_j(z)
   =\int_{T_j}^{S_j}  \frac{\, d \tau}{\tau-z}, 
  \\
 &\nu_j
 =-\frac{1}{2\pi}\log(1-|r(S_j)|^2)\ge 0 
\end{align}
for $j=1, 2, 3, 4$. 
The integral  $\int_{T_j}^{S_j}$  is  performed along the minor arc 
from $T_j$ to $S_j$, which we denote by $\arc{T_j S_j}$. 
The integral  $\chi_j(S_j)$ is well-defined because the logarithm 
vanishes at $S_j$. 
Moreover, $\chi_j$ and $\ell_j$ are analytic  in the complement 
of    $\arc{T_j S_j}$,  in particular  near $S_k\,(k\ne j)$. 
We have $\nu_1=\nu_3, \nu_2=\nu_4$ and 
\begin{equation}
 \frac{1}{2\pi i}\int_{T_j}^{S_j} 
 \log(1-|r(\tau)|^2) \frac{\, d \tau}{\tau-z}
 = i \nu_j \ell_j(z)+\chi_j(z).
\end{equation}
Set 
\begin{align}
  \delta_j(z)
 & 
  =\exp
  \Bigl(
    (-1)^{j-1}  (i \nu_j \ell_j(z)+\chi_j(z))
  \Bigr)	   
    \\
 & 
 =
  \left(
    \frac{z-S_j}{z-T_j}
  \right)^{  (-1)^{j-1}  i \nu_j}
   e^{  (-1)^{j-1} \chi_j(z)}, 
  \label{eq:delta_j}
\end{align}
where $w^{ i \nu_j}$ is cut along 
$\mathbb{R}_-$ and is positive  on $\mathbb{R}_+$. 
It is analytic in the complement of    $\arc{T_j S_j}$ 
and satisfies a Riemann-Hilbert problem similar to 
\eqref{eq:delta1}-\eqref{eq:delta3}.  
The function $\delta(z)$ in \eqref{eq:defdelta} is decomposed as 
\[
 \delta(z)=
 \prod_{j=1}^4 \delta_j(z).
\]

Since $\mathrm{Im\,} \ell_j(z)=\arg [(z-S_j)/(z-T_j)]$, 
we see that 
$\exp(\pm i \nu_j \ell_j(z))$ is bounded. 
Let $V_j, U_j \subset \arc{T_j S_j}$ be sufficiently small 
neighborhoods of $T_j$ and $S_j$ respectively. 
Then $\chi_j(z)$ and its boundary values on 
$\arc{T_j S_j}\setminus\{U_j, V_j\}$  
are bounded as is proved by the Plemelj formula (\cite{AbCl, AbFokas}). 
This formula involves a principal value integral. 
Its boundedness in $U_j\setminus\{S_j\}$ (as $z$ approaches $S_j$) 
is derived from the 
above-mentioned fact  that the logarithm in \eqref{eq:chi_j} vanishes 
at $\tau=S_j$. 

The well-definedness of  $\chi_j(S_j)$ has been explained above. 
The points $T_j$'s have been chosen just for 
simplicity, not for necessity, 
and can be replaced by any other points on 
$\arc{S_1S_2}$ or $\arc{S_3S_4}$  
(it results in another decomposition of $\delta$).   
Since there is nothing special about them 
as far as the product $\delta(z)$ is concerned, 
it is  well-defined and bounded on each $V_j$. 
Hence $\delta(z)$, $\delta^{-1}(z)$ and their boundary values
are bounded everywhere.

\begin{remark}
\label{rem:chidelta}
If $j\ne k$, then $r(-\tau)=-r(\tau)$ implies 
\(  \chi_{j+2}(S_{j+2}) =\chi_j  (S_j) \),
 \;
\(  \delta_{j+2}(S_{k+2})=\delta_{j}(S_{k})\) 
for $j=1, 2$ and $k=1, 2, 3, 4$ 
with the convention $S_{5}=S_1, S_6=S_2$.
\end{remark}

\subsection{Scaling operators}

In a neighborhood of each saddle point $S_j$, 
we have 
\begin{equation}
 \varphi(z)=\varphi(S_j)+\frac{\varphi''(S_j)}{2} (z-S_j)^2 + \varphi_j(z)
 , \quad
 \varphi_j(z)=O(|z-S_j|^3)
\label{eq:varphi_j}
\end{equation}
for some   function $\varphi_j(z)$. 
Set 
\[\beta_j=(-1)^j 2^{-1}  i (4t^2-n^2)^{-1/4}S_j \]
so that  $\varphi''(S_j)\beta_j^2=(-1)^{j-1} i/2$. See \eqref{eq:N_jexp}.

Let the \textit{infinite} crosses $\Sigma(S_j)$'s and $\Sigma(0)_j$'s be defined by 
\begin{align*}
 \Sigma(S_j)
 &=(S_j+A\mathbb{R})\cup (S_j+ i A\mathbb{R}), 
 \;\mbox{oriented inward  } &  \; (j=1, 3), 
 \\
 &=(S_j+\bar A\mathbb{R})\cup (S_j+ i \bar A\mathbb{R}), 
 \;\mbox{oriented outward  } & \; (j=2, 4), 
 \\
 \Sigma(0)_j
 &= e^{\pi i/4}\mathbb{R} \cup  e^{-\pi i/4}\mathbb{R}, 
 \;\mbox{oriented inward } &  \; (j=1, 3),
 \\
 &=\mbox{the same set as above, }
 \;\mbox{oriented outward } &  \; (j=2, 4).
\end{align*}
Note that 
$\Sigma(S_j)$ is obtained by extending one of the four small crosses 
forming $\Sigma'$ 
and  that the cut of $\delta_j$, namely $\arc{S_jT_j}$,  is 
between two rays of $\Sigma(S_j)$.

\begin{figure}\centering
  \includegraphics[width=3cm]{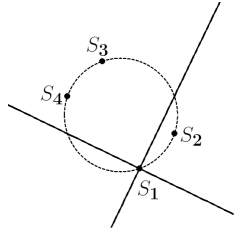}
 \caption{$\Sigma(S_1)$: an infinite cross}
 \label{fig:SigmaS1}
\end{figure}

We introduce the scaling operators with rotation 
\begin{align*}
 N_j\colon
 &
 (\mathcal{C}^0 \cup L^2)(\Sigma(S_j))\to
 (\mathcal{C}^0  \cup L^2)(\Sigma(0)_j), \quad
 \\ & f(z)\mapsto 
 (N_j f)(z)=f(\beta_j z+S_j).
\end{align*}
It is the pull-back by the mapping 
$M_j\colon \Sigma(0)_j \to \Sigma(S_j)$, 
$z\mapsto  \beta_j z+S_j$. 
The real axis  is  mapped by $M_j$ to the tangent line of 
the circle at $S_j$. 
The positive imaginary axis  is mapped  
to the outer ($j=1, 3$) or inner ($j=2, 4$) normal.  
Hence the singularity of $\delta(z)$ is,  if seen through $M_j$, to the left of $0$ 
either $j$ is even or odd.  

By (\ref{eq:delta_j}) and  (\ref{eq:varphi_j}) we have
\begin{align}
  (N_j \delta_j)(z)
 &= \left(
         \frac{\beta_j z}{\beta_j z+S_j-T_j}
      \right)^{  (-1)^{j-1}  i \nu_j} 
          e^{  (-1)^{j-1} N_j \chi_j(z)} 
 \nonumber 
 \\
 &=
  \left(
   \frac{\beta_j}{S_j-T_j}
  \right)^{  (-1)^{j-1}  i \nu_j}
  z^{  (-1)^{j-1}  i \nu_j}
  \left(
   \frac{S_j-T_j}{\beta_j z+S_j-T_j}
  \right)^{  (-1)^{j-1} i \nu_j} 
    \nonumber
  \\
  &\hspace{1.5em}\times
   e^{  (-1)^{j-1}\chi_j(S_j)}  
   e^{  (-1)^{j-1}\{N_j \chi_j(z)-\chi_j(S_j)\}}   . 
  \label{eq:N_jdelta_j}
  \\
  (N_j  e^{-\varphi})(z)
  &=S_j^n  e^{- i t (S_j-S_j^{-1})^2/2} 
    e^{(-1)^j  i z^2/4} 
    e^{- (N_j\varphi_j)(z)}.
 \label{eq:N_jexp}
\end{align}
Here the arguments of 
$\beta_j/(S_j-T_j)$ and of 
$(S_j-T_j)/(\beta_j z+S_j-T_j)$ (at least for a large positive $z$) 
are between $-\pi/2$ and $\pi/2$.

Originally, $N_j\delta_j$ has a cut along the preimage under $M_j$ of 
$\arc{S_jT_j}$. 
It is an arc%
\footnote{Here we are assuming $0<n\le (2-V_0)t$. 
If $-(2-V_0)t \le n \le 0$, the central angle is not less than $\pi/4$ 
but is less than $\pi/2$. The  consideration about 
the homotopic movement requires no change 
at all. }  
with central angle not exceeding $\pi/4$
which is tangent to the real line at the endpoint $0=M_j^{-1}(S_j)$.  
See Figure~\ref{fig:cut}. 
It  is  in the region $\pi\le \arg z \le 5\pi/4$  (if $j$ is odd) or  $3\pi/4\le \arg z \le  \pi$  (if $j$ is even). 
The factor 
$z^{ i \nu_j}$ is originally cut along the union of the preimage and 
the half-line   
$C_j\colon z(u)=\beta_j^{-1}(-S_j+T_j)-u, u\in\mathbb{R}_+$.    
See Figure~\ref{fig:cut}. 
Since we consider $z^{ i \nu_j}$ only on the 
cross $\Sigma(0)_j$, the cut can be moved homotopically 
as long as it is away from the cross. 
Hence the cut of $z^{ i \nu_j}$ is moved to $\mathbb{R}_-$ 
for $j=1, 2, 3, 4$.   
Another factor 
$\{(S_j-T_j)/(\beta_j z+S_j-T_j) \}^{ i \nu_j}$ is cut 
along the half-line $C_j$, but its singularity eventually disappears 
as $t\to\infty, \beta_j\to 0$.

\begin{figure}\centering
  \includegraphics[width=3cm]{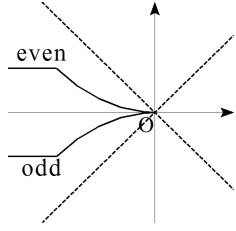}
 \caption{Cut of $N_j \delta_j$ and the cross}
 \label{fig:cut}
\end{figure}

Set 
\begin{equation}
    \widehat\delta_j(z)=\delta(z)/\delta_j(z)=\prod_{k\ne j}\delta_k(z).
   \label{eq:hatdeltajdef}
\end{equation}
By (\ref{eq:N_jdelta_j}) and (\ref{eq:N_jexp}), we have
\begin{equation}
 N_j [\delta  e^{-\varphi}]=N_j [\delta  e^{-i t \psi}](z)
 =\delta_j^0 \delta_j^1(z), 
\end{equation}
where 
\begin{align}
 & \delta_j^0 
 = S_j^n  e^{- i t (S_j-S_j^{-1})^2/2} 
      \left( \frac{\beta_j}{S_j-T_j} \right)^{  (-1)^{j-1} i \nu_j}                 
         e^{  (-1)^{j-1} \chi_j(S_j)}       \widehat\delta_j (S_j), 
 \label{eq:deltaj0defdecember}
  \\ 
  &  \delta_j^1(z) 
  = 
    z^{  (-1)^{j-1}  i \nu_j}
    e^{(-1)^j  i z^2/4 }
    \left(
   \frac{S_j-T_j}{\beta_j z+S_j-T_j}
  \right)^{  (-1)^{j-1}  i \nu_j} 
  \nonumber
  \\
  &\hspace{5em}\times
    e^{  (-1)^{j-1}\{N_j\chi_j(z)-\chi_j(S_j)\} - (N_j\varphi_j)(z)}  
    \frac{(N_j  \widehat\delta_j) (z)}{ \widehat\delta_j (S_j)}.
\end{align}
We choose the branch of the logarithm which is  real on 
$\mathbb{R}_+$ and 
cut along $\mathbb{R}_-$. The imaginary powers 
in  the definitions of  $\delta^0_j$ and $\delta^1_j(z)$ should be  
interpreted accordingly. 
We have at least a pointwise convergence 
$\delta_j^1(z)\to z^{(-1)^{j-1}  i \nu_j}  e^{(-1)^j  i z^2/4}$ 
as $t \to \infty$.
In Proposition~\ref{prop:localization} we shall show that 
this convergence is uniform in a certain sense. 
  
\begin{remark}
We have 
 \(
   \hat\delta_{j+2}(S_{j+2})=\hat\delta_j(S_j)
   ,\; 
    \delta_{j+2}^0
       =(-1)^n \delta_j^0
    \; (j=1, 2)
 \) 
because of Remark~\ref{rem:chidelta}. 
Since $T_1=T_2=\exp(-\pi i/4)$,  
the ratio  $\beta_j/(S_j-T_j)$ 
 in the definition of $\delta^0_j$ can be represented by
\begin{equation}
  \frac{\beta_1}{S_1-T_1}=
 \frac{- i A}{2(4t^2-n^2)^{1/4} (A-1)}, \quad
  \frac{\beta_2}{S_2-T_2}=
 \frac{  i \bar A}{2(4t^2-n^2)^{1/4} (\bar A-1)}.
 \label{eq:D1D2december}
\end{equation} 
These quantities are nothing but $D_1$ and $D_2$ in 
(\ref{eq:D1D2}). 
\end{remark}

The cross $\Sigma(S_j)$ is the union of the two lines 
$\Sigma(S_j, L)$ and $\Sigma(S_j, L')$ defined by 
\begin{align*}
 \Sigma(S_j, L)&=S_j+  e^{\pi i/4}S_j \mathbb{R}
                    =S_j+A \mathbb{R}
 &\quad  (j=1, 3) ,
 \\
                    &=S_j+  e^{-\pi i/4}S_j \mathbb{R}
                      =S_j +  i \bar{A} \mathbb{R}
 &\quad  (j=2, 4) ,
 \\
  \Sigma(S_j, L')&=S_j+  e^{-\pi i/4}S_j \mathbb{R}
                       =S_j +  i A  \mathbb{R}
 &\quad  (j=1, 3) ,
 \\
                    &=S_j+  e^{\pi i/4}S_j \mathbb{R}
                     =S_j +   \bar{A} \mathbb{R}
 &\quad  (j=2, 4) .
\end{align*}
Notice that each $\Sigma(S_j, L)$ [resp. $\Sigma(S_j, L')$] 
share some segment  with $L$ [resp. $L'$], but is not included in it. 
We have 
$M_j^{-1} \Sigma(S_j, L)= e^{-\pi i/4}\mathbb{R} \,(j=1, 3)$, 
\textit{the direction of steepest descent} of $ e^{- i z^2/4}$, 
and  
$M_j^{-1} \Sigma(S_j, L)= e^{\pi i/4}\mathbb{R} \,(j=2, 4)$, 
the direction of steepest descent of $ e^{ i z^2/4}$. 
Therefore  $\Sigma(S_j, L)$ is in the direction of 
steepest descent of $ e^{-\varphi}$ 
for any $j=1, 2, 3, 4$ 
in view of (\ref{eq:N_jexp}).
Similarly,  $\Sigma(S_j, L')\,(j=1, 2, 3, 4)$ is in the direction of 
steepest descent of $ e^{\varphi}$.

We introduce some sets each of which is  in a neighborhood of a saddle point. 
(In contrast, $L_\varepsilon$ and $L'_\varepsilon$ to be introduced later 
are away from the saddle points. )
\begin{align*}
   \Sigma(S_j, L)_\varepsilon
  & =\{z\in \Sigma(S_j, L); |z-S_j|\le\varepsilon\}
  \; &\mbox{(a short segment)}
   , 
   \\
    \Sigma(S_j, L')_\varepsilon
  & =\{z\in \Sigma(S_j, L'); |z-S_j|\le\varepsilon\}
    \; &\mbox{(another short segment)}
    , 
    \\
   \Sigma(S_j)_\varepsilon
  & =   \Sigma(S_j, L)_\varepsilon \cup   \Sigma(S_j, L')_\varepsilon 
  \; &\mbox{(a small cross)}. 
\end{align*} 
Then 
\(
    \Sigma(S_j)_\varepsilon=
  \{z \in \Sigma(S_j); \, |z-S_j|\le\varepsilon\}
\), 
\(
 \Sigma'=\cup_{j=1}^4 \Sigma(S_j)_\varepsilon
\), 
\( \mathrm{supp}\, w'_+\subset \Sigma'\cap L
  =\cup_{j=1}^4 
 \Sigma(S_j, L)_\varepsilon
 \) and 
\( \mathrm{supp}\, w'_-\subset \Sigma'\cap L' 
 = \cup_{j=1}^4 
 \Sigma(S_j, L')_\varepsilon
\).

\section{Crosses}
\label{sec:crosses}

Split $\Sigma'$ into the union of four disjoint small crosses: 
$\Sigma'=\cup_{j=1}^4 \Sigma(S_j)_\varepsilon$. 
We decompose $w'_\pm$ into the form 
\begin{equation}
 w'_\pm =\sum_{j=1}^4  w^j_\pm, 
 \label{eq:w'decomposition}
\end{equation}
where 
\(
 \mathrm{supp\,}  w^j_+ \subset \Sigma(S_j, L)_\varepsilon
\) and 
\(
 \mathrm{supp\,}  w^j_-  
 \subset \Sigma(S_j, L')_\varepsilon
\). 
Set 
\(
  w^j= w^j_+  +  w^j_- 
\). 
Define the operators 
$A_j=C_{w^j}\,(j=1, 2, 3, 4)$  on 
$L^2(\Sigma')$ as in (\ref{eq:integopdef}). 
We have 
$w'=\sum_j w^j$ and 
$C_{w'}^{\Sigma'}=\sum_j A_j$. 
See Proposition~\ref{prop:Qnw'2}  for the distinction between 
$C^{\Sigma'}_{w'}$ and $C_{w'}$. 

\begin{lemma}
If $j\ne k$, we have 
\begin{equation}
 \|
   A_j A_k
 \|_{L^2(\Sigma')}
 \le Ct^{-1/2}, \quad 
 \|
   A_j A_k
 \|_{L^\infty(\Sigma') \to L^2(\Sigma')}
 \le Ct^{-3/4}.
\end{equation}
\label{lem:AjAkestimates}
\end{lemma}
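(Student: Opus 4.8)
The plan is to exploit the fact that, for $j\ne k$, the supports of $w^j$ and $w^k$ are the two \emph{disjoint} crosses $\Sigma(S_j)_\varepsilon$ and $\Sigma(S_k)_\varepsilon$, centered at the distinct saddle points $S_j$ and $S_k$. In the region $|n|\le(2-V_0)t$ the quantity $A=2^{-1}(\sqrt{2+n/t}-i\sqrt{2-n/t})$ stays in a compact set bounded away from the real and imaginary axes, so that $\min_{j\ne k}|S_j-S_k|$ is bounded below by a positive constant uniformly in $(n,t)$; choosing $\varepsilon$ small enough, uniformly, keeps the crosses disjoint with $\mathrm{dist}(\Sigma(S_j)_\varepsilon,\Sigma(S_k)_\varepsilon)\ge c>0$. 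Consequently, for $z$ in $\mathrm{supp}\,w^j$ and $\zeta$ in $\mathrm{supp}\,w^k$ the Cauchy kernel $1/(\zeta-z)$ is uniformly bounded, and when $A_kf$ is evaluated on $\mathrm{supp}\,w^j$ the boundary-value operators $C_\pm$ in \eqref{eq:integopdef} collapse to a single ordinary, non-singular integral $(A_kf)(z)=\int_{\Sigma(S_k)_\varepsilon}(fw^k)(\zeta)(\zeta-z)^{-1}\,d\zeta/(2\pi i)$.

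For the $L^2$ bound I would first estimate $A_kf$ on $\mathrm{supp}\,w^j$. Applying Cauchy--Schwarz to the displayed integral together with $\|w^k\|_{L^2(\Sigma')}\le\|w'\|_{L^2(\Sigma)}\le Ct^{-1/4}$ from \eqref{eq:w'L2L2estimates} gives $\|A_kf\|_{L^\infty(\mathrm{supp}\,w^j)}\le Ct^{-1/4}\|f\|_{L^2}$. Next I would apply $A_j$ using the uniformly bounded $L^2$-operator norm of the Cauchy operators $C_\pm$ on $\Sigma'$ (the contour is a union of straight segments whose angles and lengths are controlled, so this norm does not degenerate): since $A_jg=C_+(gw^j_-)+C_-(gw^j_+)$, one has $\|A_jg\|_{L^2}\le C\|g\|_{L^\infty}\|w^j\|_{L^2}\le Ct^{-1/4}\|g\|_{L^\infty}$. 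Composing the two estimates with $g=A_kf$ yields $\|A_jA_kf\|_{L^2}\le Ct^{-1/2}\|f\|_{L^2}$.

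The $L^\infty\to L^2$ bound follows by the same scheme, replacing the $L^2$ bound on $w^k$ by the $L^1$ bound. For $f\in L^\infty$ one estimates $|(A_kf)(z)|\le C\|f\|_{L^\infty}\|w^k\|_{L^1}\le Ct^{-1/2}\|f\|_{L^\infty}$ on $\mathrm{supp}\,w^j$, using \eqref{eq:w'L2L2estimates2}; applying $A_j$ exactly as above then gives $\|A_jA_kf\|_{L^2}\le C\|A_kf\|_{L^\infty}\|w^j\|_{L^2}\le Ct^{-1/2}\cdot t^{-1/4}\|f\|_{L^\infty}=Ct^{-3/4}\|f\|_{L^\infty}$.

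The routine parts are the Cauchy--Schwarz and H\"older estimates and the $L^2$-boundedness of $C_\pm$, all of which are standard. The one point that genuinely requires care --- and the only place where the hypothesis $|n/t|\le 2-V_0$ enters --- is the uniform separation of the saddle points: I must verify that $\min_{j\ne k}|S_j-S_k|$ does not degenerate as $(n,t)$ range over the admissible region, so that both the bound on the kernel $1/(\zeta-z)$ and the admissible radius $\varepsilon$ of the crosses can be chosen independently of $(n,t)$. Granting this, the decoupling of $A_j$ and $A_k$ across the gap between the two disjoint crosses is precisely what produces the extra negative powers of $t$.
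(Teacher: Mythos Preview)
Your proposal is correct and follows essentially the same approach as the paper, which simply cites \cite[Lemma~3.5]{DZ} together with Lemma~\ref{lem:wawbwcw'estimates} and the uniform lower bound on $\mathrm{dist}(\Sigma(S_j)_\varepsilon,\Sigma(S_k)_\varepsilon)$. You have spelled out precisely the two ingredients the paper invokes --- the separation of the crosses (making the Cauchy kernel nonsingular between $\mathrm{supp}\,w^j$ and $\mathrm{supp}\,w^k$) and the $L^1/L^2$ bounds \eqref{eq:w'L2L2estimates}--\eqref{eq:w'L2L2estimates2} --- and combined them exactly as in the Deift--Zhou argument.
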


\begin{proof}
Follow the proof of \cite[Lemma~3.5]{DZ}.  
Use Lemma~\ref{lem:wawbwcw'estimates} and the fact that 
$\mathrm{dist}(\Sigma(S_j)_\varepsilon, \Sigma(S_k)_\varepsilon )$ 
is bounded from below. 
\end{proof}

In \S\ref{sec:boundedness} we will prove the existence and boundedness of 
\( 
  (1-A_j)^{-1}\colon L^2(\Sigma')\to L^2(\Sigma')\, 
  (j=1, 2, 3, 4)
\).       
In view of Lemma~\ref{lem:AjAkestimates} 
 and \cite[Lemma~3.15]{DZ}, it leads to that of 
\(
   (1-C_{w'}^{\Sigma'})^{-1}\colon L^2(\Sigma') \to L^2(\Sigma')
 \). 
By \cite[Lemma~2.56]{DZ}, we find that 
\(
   (1-C_{w'})^{-1}\colon L^2(\Sigma)\to L^2(\Sigma)
\) exists and  is bounded. 
Finally  
\( 
   (1-C_{w^\sharp})^{-1}
   \colon L^2(\Sigma) \to L^2(\Sigma)
\)
 also exists  and is  bounded 
because of the smallness of 
$C_{w^\sharp}-C_{w^\prime}=C_{w^e}$ and the second resolvent 
identity $(1-A)^{-1}=\{(1-B)^{-1}(B-A) +1\}^{-1}(1-B)^{-1} $.

We have
\begin{align}
 & 
 \|A_j A_k (1-A_k)^{-1} I   \|_{L^2(\Sigma')}
 \nonumber
 \\
 & \le 
 \|A_j A_k I\|_{L^2(\Sigma')} +
 \|A_j A_k (1-A_k)^{-1} A_k I\|_{L^2(\Sigma')}. 
\end{align}
Here $ \|A_j A_k I\|_{L^2(\Sigma')} \le Ct^{-3/4}$ follows from 
the second inequality of   Lemma~\ref{lem:AjAkestimates}. 
On the other hand, we have 
\(
 \|A_k I\|_{L^2(\Sigma')} \le \|w^k\|_{L^2(\Sigma')} 
  \le  \|w'\|_{L^2(\Sigma')} \le Ct^{-1/4}
\) 
by (\ref{eq:w'L2L2estimates}). 
This estimate, together with the first inequality of  Lemma~\ref{lem:AjAkestimates} and the 
boundedness of $(1-A_j)^{-1}$,  yields 
\begin{equation}
 \|A_j A_k (1-A_k)^{-1} I   \|_{L^2(\Sigma')}
 \le Ct^{-3/4}.
 \label{eq:20111127}
\end{equation}
Following (repeatedly)   \cite[pp.338-339]{DZ}, 
we obtain by \eqref{eq:20111127} and Lemma~\ref{lem:AjAkestimates} 
\begin{align}
& \int_{\Sigma'} z^{-2}
 \left( (1-C_{w'}^{\Sigma'})^{-1} I\right)
 w'
 \nonumber
 \\
& =
 \int_{\Sigma'} z^{-2}
 \Bigl(  
    \bigl\{
      {\textstyle   1+\sum_j A_j (1-A_j)^{-1}  }
    \bigr\}
 I\Bigr) w'
 +O(t^{-1})
 \nonumber
 \\
 &=
 \int_{\Sigma'} z^{-2} w' 
  +\sum_{j, k=1}^4 z^{-2} 
 \left(
  A_j (1-A_j)^{-1} I
 \right)
 w^k
 +O(t^{-1}).
 \label{eq:1004_1}
\end{align}
We have sixteen 
quantities involving the pairs  
$(A_j (1-A_j)^{-1}, w^k)\,(j, k\in\{1, 2, 3,  4\})$. We claim that the main contribution 
is by the `diagonal' pairs ($j=k$). Let us estimate 
the `off diagonal' terms. 

Since $(1-A_j)^{-1}=1+(1-A_j)^{-1} A_j$, we have
\begin{align}
 &
 \biggl|
   \int_{\Sigma'} z^{-2}\bigl(A_j (1-A_j)^{-1} I\bigr)w^k
 \biggr| 
 \nonumber
 \\
 &\le 
 \biggl| 
   \int_{\Sigma'} z^{-2} (A_j I) w^k 
 \biggr|
 +
  \biggl| 
   \int_{\Sigma'} z^{-2}
   \bigl\{A_j (1-A_j)^{-1}  A_j     I \bigr\} 
   w^k 
 \biggr|.
 \label{eq:offdiag}
\end{align}
Notice that 
the distance of    $z\in\Sigma(S_k)_\varepsilon$ and 
 $\eta\in\Sigma(S_j)_\varepsilon$ and that of  
$z\in\Sigma'$ and $0$ are  bounded from below. These facts, 
 combined with (\ref{eq:w'L2L2estimates2}),  
 lead to the following Fubini type estimate of the iterated integral in 
 the first term: 
\begin{align}
 \qquad{}&
 \biggl| 
   \int_{\Sigma'} z^{-2} (A_j I) w^k 
 \biggr|
 =
 \biggl| 
   \int_{\Sigma (S_k)_\varepsilon} z^{-2} 
    \biggl(  
      \int_{\Sigma (S_j)_\varepsilon} 
      \frac{w^j (\eta)}{\eta-z}
       \frac{\, d\eta}  {2\pi i}
    \biggr)
   w^k (z) \, d z
 \biggr|
 \nonumber
 \\
 {}\qquad &\le 
 C\|w^j (\eta)\|_{L^1(\Sigma(S_j)_\varepsilon)}
 \|w^k (z)\|_{L^1(\Sigma(S_k)_\varepsilon)}
 \le Ct^{-1}.
 \label{eq:Fubini}
\end{align} 
The second term in (\ref{eq:offdiag}) is estimated in 
a slightly different way. 
By (\ref{eq:w'L2L2estimates}),  (\ref{eq:w'L2L2estimates2}) 
and the Schwarz inequality, 
\begin{align}
 {}\qquad&
  \biggl| 
   \int_{\Sigma'} z^{-2}
   \bigl\{A_j (1-A_j)^{-1}  A_j     I \bigr\} 
   w^k 
 \biggr|
 \nonumber
 \\
 {}\qquad& = 
  \biggl| 
   \int_{\Sigma (S_k)_\varepsilon} 
   \Bigl\{
     \int_{\Sigma (S_j)_\varepsilon} 
     \frac{\bigl((1-A_j)^{-1} A_j I\bigr)(\eta) w^j(\eta)}
     {\eta-z}  
     \frac{\, d \eta}{2\pi i}
   \Bigr\} 
   z^{-2} w^k (z)
 \biggr|
 \nonumber
 \\
 {}\qquad& \le C\|(1-A_j)^{-1} A_j I\|_{L^2 (\Sigma(S_j)_\varepsilon)}
 \|
   w^j (\eta)
 \|_{L^2 (\Sigma(S_j)_\varepsilon)}
 \|
   z^{-2} w^k (z)
 \|_{L^1 (\Sigma(S_k)_\varepsilon)}
 \nonumber
 \\
 {} \qquad& 
 \le C \|w^j\|^2_{L^2 (\Sigma(S_j)_\varepsilon)} 
 \|
     w^k 
 \|_{L^1 (\Sigma(S_k)_\varepsilon)}
 \le Ct^{-1}. 
 \label{eq:Fubini2}
\end{align}
By using (\ref{eq:1004_1})-(\ref{eq:Fubini2}) 
and 
\(
 w'=\sum_{j=1}^4 \left( (1-A_j) (1-A_j)^{-1} I\right) w^j
\),  we   get
\begin{align*}
 &  
 \int_{\Sigma'} z^{-2}
 \left( (1-C_{w'}^{\Sigma'})^{-1} I\right)
 w'
 =
 \sum_{j=1}^4 \int_{\Sigma(S_j)_\varepsilon}
 z^{-2} \bigl[(1-A_j)^{-1} I\bigr] w^j
 +O(t^{-1})   .
\end{align*}
Owing to  \cite[(2.61)]{DZ}, 
\( 
 A_j=C_{w^j}\colon L^2(\Sigma')\to L^2(\Sigma')
\) 
in the above formula 
can be replaced by 
\(
 C^\varepsilon_{w^j}\colon L^2 (\Sigma(S_j)_\varepsilon)\to  L^2 (\Sigma(S_j)_\varepsilon)
\), 
which is defined as in (\ref{eq:integopdef}) for the pair 
$(\Sigma(S_j)_\varepsilon, w^j)$.
Combining this fact with 	Proposition~\ref{prop:Qnw'2} ($\ell\ge 1$), 
we get the following result, which shows that 
the contributions of the four small crosses can be 
separated out. 
\begin{proposition} We have
\[{}\qquad
  R_n(t)
  =-\delta(0) \sum_{j=1}^4 \int_{\Sigma(S_j)_\varepsilon} z^{-2}
  \Bigl[
   \bigl(      (1-C^\varepsilon_{w^j})^{-1}                 I\bigr) 
    (z)
   w^j  (z)
  \Bigr]_{21}  
  \frac{\, d z}{2\pi i}
  +O(t^{-1}).
\]
\label{prop:Rn4crosses}
\end{proposition}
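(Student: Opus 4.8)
The plan is to assemble the chain of estimates already established in this section and to show that the resolvent $(1-C_{w'}^{\Sigma'})^{-1}$ living on all four crosses decouples into four independent local resolvents $(1-C^\varepsilon_{w^j})^{-1}$, one per cross, at the cost of an error of order $t^{-1}$. The decoupling is driven entirely by the smallness of the cross-interaction operators $A_jA_k$ with $j\ne k$ recorded in Lemma~\ref{lem:AjAkestimates}, which in turn reflects the uniform separation of the crosses $\Sigma(S_j)_\varepsilon$.

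First I would invoke Proposition~\ref{prop:Qnw'2} with $\ell\ge 1$, which already writes $R_n(t)=-\delta(0)\int_{\Sigma'}z^{-2}\bigl[\bigl((1-C_{w'}^{\Sigma'})^{-1}I\bigr)w'\bigr]_{21}\,dz/(2\pi i)$ modulo $O(t^{-1})$, so that everything is localized to $\Sigma'=\cup_{j=1}^4\Sigma(S_j)_\varepsilon$. Writing $C_{w'}^{\Sigma'}=\sum_{j=1}^4 A_j$ and applying the operator expansion of \cite[pp.338--339]{DZ} displayed in \eqref{eq:1004_1}, I would split the integral into the trivial contribution $\int_{\Sigma'}z^{-2}w'$ together with the sixteen pairings $\int_{\Sigma'}z^{-2}\bigl(A_j(1-A_j)^{-1}I\bigr)w^k$. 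The passage to this expansion uses the uniform boundedness of $(1-A_j)^{-1}$ (assumed here, to be established in \S\ref{sec:boundedness}) and the estimate \eqref{eq:20111127}.

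The heart of the argument is to discard the twelve off-diagonal pairings $j\ne k$. For these I would use the two bounds \eqref{eq:Fubini} and \eqref{eq:Fubini2}: when $\eta\in\Sigma(S_j)_\varepsilon$ and $z\in\Sigma(S_k)_\varepsilon$ with $j\ne k$ the Cauchy kernel $1/(\eta-z)$ is bounded by the uniform separation of distinct crosses, the weight $z^{-2}$ is bounded because $\Sigma'$ stays uniformly away from the origin, and the smallness $\|w^j\|_{L^1}\le Ct^{-1/2}$, $\|w^j\|_{L^2}\le Ct^{-1/4}$ from Lemma~\ref{lem:wawbwcw'estimates} then forces each such term to be $O(t^{-1})$ (the first term via an $L^1\times L^1$ Fubini estimate, the second via $\|w^j\|_{L^2}^2\|w^k\|_{L^1}$ and boundedness of the local resolvent). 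What survives are the four diagonal terms, which I would recombine using $w'=\sum_j\bigl((1-A_j)(1-A_j)^{-1}I\bigr)w^j$ to obtain $\sum_{j=1}^4\int_{\Sigma(S_j)_\varepsilon}z^{-2}\bigl[(1-A_j)^{-1}I\bigr]w^j$.

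Finally I would replace the global operator $A_j=C_{w^j}$ on $L^2(\Sigma')$ by the purely local operator $C^\varepsilon_{w^j}$ on $L^2(\Sigma(S_j)_\varepsilon)$, which is legitimate by \cite[(2.61)]{DZ} since $w^j$ is supported on the single cross $\Sigma(S_j)_\varepsilon$; collecting the $O(t^{-1})$ errors from Proposition~\ref{prop:Qnw'2} and from the off-diagonal terms then yields the claimed formula. I expect the main obstacle to be carrying the extra factor $z^{-2}$ uniformly through the Fubini-type arguments, since this factor is absent from the model computation of \cite{DZ} and is exactly the complication flagged in the remark following Proposition~\ref{prop:Qnw'2}. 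Here, however, it is benign precisely because the crosses sit near $|z|=1$ while the reconstruction formula probes $z=0$, so $z^{-2}$ remains a bounded weight on the entire region of integration and may be absorbed harmlessly into the constants.
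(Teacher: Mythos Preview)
Your proposal is correct and follows essentially the same route as the paper: the argument in \S\ref{sec:crosses} leading up to Proposition~\ref{prop:Rn4crosses} proceeds exactly by invoking Proposition~\ref{prop:Qnw'2}, expanding via \eqref{eq:1004_1}, killing the off-diagonal pairings with \eqref{eq:Fubini}--\eqref{eq:Fubini2}, recombining the diagonal terms, and then trading $A_j$ for $C^\varepsilon_{w^j}$ via \cite[(2.61)]{DZ}. Your handling of the $z^{-2}$ weight is also the same observation used in the paper.
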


\section{Infinite crosses and localization}
\label{sec:infinitecrosses}

We introduce  $\hat w^j_\pm$ and $\hat w^j$ on  
the infinite cross $\Sigma(S_j)$ given by
\begin{align}
 \hat w^j_\pm 
  &  =w^j_\pm, \quad & z\in \Sigma(S_j)_\varepsilon, 
   \nonumber 
   \\
  \hat w^j_\pm   &  =0, \quad & z\in \Sigma(S_j)\setminus\Sigma(S_j)_\varepsilon,  
   \nonumber 
   \\
   \hat w^j  
  &  = \hat w^j_+ + \hat w^j_-, \quad & z\in \Sigma(S_j).
  \nonumber
\end{align}
Define the operator $\hat A_j\colon L^2(\Sigma(S_j))\to L^2(\Sigma(S_j))$ as 
in (\ref{eq:integopdef}) with the kernel $\hat w^j_\pm$. 
Set 
\(
   \Delta_j^0=(\delta_j^0)^{\sigma_3}, \, 
   \tilde \Delta_j^0\phi=\phi\Delta_j^0
\). 
The operator $\tilde \Delta_j^0$ and its inverse are  bounded. 
Define 
$\alpha_j\colon  L^2(\Sigma(0)_j)\to L^2(\Sigma(0)_j)$ by 
\begin{align}
 {}\qquad
  \alpha_j
 &=C_{(\Delta_j^0)^{-1} (N_j \hat w^j) \Delta_j^0 } 
 \nonumber
 \\
{} &=C_+ (\bullet\, (\Delta_j^0)^{-1} (N_j \hat w^j_-) \Delta_j^0)
    +C_- (\bullet\, (\Delta_j^0)^{-1} (N_j \hat w^j_+) \Delta_j^0). 
\label{eq:alpha_j}
\end{align}
Then we have 
\begin{equation}
 \alpha_j= \tilde\Delta_j^0 N_j \hat A^j N_j^{-1} (\tilde\Delta_j^0)^{-1}, 
 \quad
 \hat A^j=N_j^{-1}  (\tilde\Delta_j^0)^{-1} \alpha_j  \tilde \Delta_j^0 N_j.
\label{eq:alpha_j-scaling}
\end{equation}

On 
$M_j^{-1}\Sigma(S_j, L)_\varepsilon\setminus\{0\}\subset\Sigma(0)_j$, 
we have
\[ 
  \Bigl( (\Delta_j^0)^{-1} (N_j [z^{-2}\hat w^j_+]) \Delta_j^0 \Bigr)(z)
 =
 \begin{bmatrix} 0 \quad & (\beta_j z+S_j)^{-2}  R(\beta_j z+S_j) 
 \delta_j^1 (z)^2 \\ 0 \quad & 0
 \end{bmatrix}
\]
by \eqref{eq:w'+},  
and the left-hand side is zero  on
 $\Sigma(0)_j\setminus M_j^{-1}\Sigma(S_j, L)_\varepsilon$. 
 
On 
$M_j^{-1}\Sigma(S_j, L')_\varepsilon  \subset\Sigma(0)_j$,
we have 
\[ 
  \Bigl( (\Delta_j^0)^{-1} (N_j [z^{-2}\hat w^j_-]) \Delta_j^0 \Bigr)(z)
 =
 \begin{bmatrix} 0\quad  &  0 \\ 
 -(\beta_j z+S_j)^{-2} \bar R(\beta_j z+S_j) \delta_j^1 (z)^{-2} \quad  & 0
 \end{bmatrix}
\]
by \eqref{eq:w'-}, 
and the left-hand side is zero  on
 $\Sigma(0)_j\setminus M_j^{-1}\Sigma(S_j, L')_\varepsilon$.

Assume that  $j$ is odd ($j=1, 3$). We have 
\(
 M_j^{-1} \Sigma(S_j, L)_\varepsilon
   =  e^{-\pi i/4}\mathbb{R}\cap\{|\beta_j z|\le\varepsilon\}
\) and it is the union of  the lower right  ($z/ e^{-\pi i/4}\ge 0$) 
and upper left ($-z/ e^{-\pi i/4}\ge 0$)   parts. 
Recall that the positive imaginary axis is mapped by $M_j$ 
to the \textit{outer} normal at $S_j$ if $j$ is odd. 

\begin{proposition} 
Assume $j=1$ or $3$. 
Fix an arbitrary constant $\gamma$ with $0<2\gamma<1$. 
Then on 
\(
 M_j^{-1} \Sigma(S_j, L)_\varepsilon
 \cap
\{z; \, \pm z/ e^{-\pi i/4}>0\}
\) respectively,  we have
\begin{align}
 &
  \Bigl|    (\beta_j z +S_j)^{-2}
   R(\beta_j z +S_j) \delta_j^1 (z)^2
   - S_j^{-2} R(S_j \pm) z^{2 i \nu_j}  e^{- i z^2/2}
 \Bigr|
 \nonumber 
 \\
 &\le 
 C t^{-1/2}\log t \cdot \Bigl| e^{-\frac{ i \gamma z^2}{2}}  \Bigr|, 
 \label{eq:difficultineq}
 \\
  &
  \Bigl|  
   R(\beta_j z +S_j) \delta_j^1 (z)^2
   -  R(S_j \pm) z^{2 i \nu_j}  e^{- i z^2/2}
 \Bigr|
 \nonumber 
 \\
 &
 \le 
 C t^{-1/2}\log t \cdot \Bigl| e^{-\frac{ i \gamma z^2}{2}}  \Bigr|.
 \label{eq:easyineq}
\end{align}
We choose    $R(S_j+)=\bar r(S_j)$ on the lower right part 
and $R(S_j-)=-\bar r(S_j)/(1-|r(S_j)|^2)$  on the upper left part. 
The analytic functions  $z^{\pm 2 i \nu_j}$  and   $\delta^1_j(z)$  
have  cuts along  
$\mathbb{R}_-$. 
\label{prop:localization}
\end{proposition}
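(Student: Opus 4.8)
The plan is to factor out the limiting expression and reduce the estimate to controlling, one factor at a time, the deviation of each constituent of $\delta_j^1(z)^2$ from its $t\to\infty$ limit, with every polynomial growth in $|z|$ absorbed by a Gaussian weight. For $j$ odd write $\delta_j^1(z)^2 = z^{2i\nu_j}\,G(z)\,F(z)$, where $G(z)=e^{-iz^2/2}e^{-2N_j\varphi_j(z)}$ collects the exponential part and $F(z)$ collects the slowly varying bounded factors, namely $\bigl((S_j-T_j)/(\beta_j z+S_j-T_j)\bigr)^{2i\nu_j}$, the factor $e^{2\{N_j\chi_j(z)-\chi_j(S_j)\}}$, and $\bigl(N_j\widehat\delta_j(z)/\widehat\delta_j(S_j)\bigr)^2$. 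The target is $z^{2i\nu_j}e^{-iz^2/2}$, so I would organize the difference by an additive telescoping of $(\beta_j z+S_j)^{-2}R(\beta_j z+S_j)\,G(z)F(z)-S_j^{-2}R(S_j\pm)e^{-iz^2/2}$ into a sum of terms, in each of which exactly one factor is replaced by its limit while the rest are bounded; the additive form avoids any division and so covers the case $R(S_j\pm)=0$ as well. The estimate \eqref{eq:easyineq} is the same computation with the two rational factors $(\beta_j z+S_j)^{-2}$ and $S_j^{-2}$ deleted, hence it follows a fortiori.

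The Gaussian weight and the control of the cubic remainder come first. On $\Sigma(S_j,L)_\varepsilon$ one has $|z|\le C\varepsilon\,t^{1/2}$ since $|\beta_j|\ge c\,t^{-1/2}$, while $|\beta_j|\le C\,t^{-1/2}$ and $|\varphi'''|\le Ct$ near $S_j$ (uniformly in the region \eqref{eq:region}, because $4t^2-n^2\ge V_0(4-V_0)t^2$) give, via \eqref{eq:varphi_j}, $|N_j\varphi_j(z)|=|\varphi_j(\beta_j z+S_j)|\le Ct|\beta_j|^3|z|^3\le C't^{-1/2}|z|^3$. Combining these yields $|N_j\varphi_j(z)|\le C''\varepsilon|z|^2$, so that for $\varepsilon$ small enough (uniformly in $n,t$) $|G(z)|=e^{-|z|^2/2}\,|e^{-2N_j\varphi_j(z)}|\le e^{-(1/2-C''\varepsilon)|z|^2}$, which decays at least like $e^{-\gamma|z|^2/2}$ with room to spare. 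This is the quantitative form of the fact, recorded just above the statement, that $\Sigma(S_j,L)$ is the direction of steepest descent of $e^{-\varphi}$; it is exactly what legitimizes extending the saddle-point approximation to the whole $\varepsilon$-cross.

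Next I would estimate each factor's deviation, all uniformly in $(n,t)$: (i) $|(\beta_j z+S_j)^{-2}-S_j^{-2}|\le C|\beta_j z|\le Ct^{-1/2}|z|$; (ii) $|R(\beta_j z+S_j)-R(S_j\pm)|\le Ct^{-1/2}|z|$, using that $R$ is analytic up to $S_j$ on each adjoining arc (this is precisely why the two halves $\pm z/e^{-\pi i/4}>0$, lying on different arcs, carry the two different limits $R(S_j+)=\bar r(S_j)$ and $R(S_j-)=-\bar r(S_j)/(1-|r(S_j)|^2)$); (iii) the imaginary power and (iv) the factor $N_j\widehat\delta_j(z)/\widehat\delta_j(S_j)$ are each $1+O(\beta_j z)=1+O(t^{-1/2}|z|)$, since $\widehat\delta_j=\prod_{k\ne j}\delta_k$ is analytic near $S_j$; (v) $|e^{-2N_j\varphi_j(z)}-1|\le C|N_j\varphi_j(z)|e^{C\varepsilon|z|^2}\le Ct^{-1/2}|z|^3 e^{C\varepsilon|z|^2}$. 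The one factor producing the $\log t$ is (vi) $e^{2\{N_j\chi_j(z)-\chi_j(S_j)\}}-1$: since the logarithm in \eqref{eq:chi_j} vanishes at $\tau=S_j$, the derivative $\chi_j'$ has only a logarithmic singularity at $S_j$, giving the Lipschitz–log bound $|\chi_j(\beta_j z+S_j)-\chi_j(S_j)|\le C|\beta_j z|\bigl(1+\bigl|\log|\beta_j z|\bigr|\bigr)\le Ct^{-1/2}|z|\log t$, because $\bigl|\log|\beta_j z|\bigr|\le \tfrac12\log t+\log|z|+C\le C'\log t$ on the $\varepsilon$-cross.

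Finally I would assemble the telescoped sum: the product of bounded factors, times one deviation, times the common weight $|z^{2i\nu_j}e^{-iz^2/2}|\le Ce^{-|z|^2/2}$ (the power $z^{2i\nu_j}$ being bounded on the two rays). Each deviation is at most $Ct^{-1/2}|z|\log t$ (from (i)--(iv),(vi)) or $Ct^{-1/2}|z|^3 e^{C\varepsilon|z|^2}$ (from (v)), and in every case the elementary inequality $|z|^k e^{-(1/2-C\varepsilon)|z|^2}\le C(\gamma,k)\,e^{-\gamma|z|^2/2}$---valid for small $\varepsilon$ since $\gamma<1/2$---absorbs the polynomial (and the mild growth $e^{C\varepsilon|z|^2}$ in (v)) into the weaker weight, yielding the claimed bound $Ct^{-1/2}\log t\,|e^{-i\gamma z^2/2}|$. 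The main obstacle is the uniformity throughout in $(n,t)$ over the region \eqref{eq:region}: one must secure a single $\varepsilon$ rendering the cubic remainder subordinate to the quadratic for all admissible $n/t$, together with uniform Lipschitz and modulus-of-continuity constants for $R$, $\widehat\delta_j$, and especially $\chi_j$ near the saddle. The $\chi_j$ estimate in (vi) is the delicate one and is the genuine source of the $\log t$ loss appearing in \eqref{eq:main}.
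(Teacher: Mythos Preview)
Your proof is correct and follows the same overall architecture as the paper: factor $\delta_j^1(z)^2$ into pieces, telescope, and control each deviation while letting the Gaussian absorb polynomial growth. The paper groups the factors slightly differently (into $\mathrm{F}$, $\mathrm{E}_1$, $\mathrm{E}_2$, and the $\widehat\delta_j$-ratio) but the bookkeeping is equivalent.

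The one substantive difference is your treatment of the $\chi_j$ factor (your step (vi)). The paper integrates by parts explicitly: it writes $N_j\chi_j(z)-\chi_j(S_j)$ as a sum $L_1+L_2^1+L_2^2$ and computes $\int_{T_j}^{S_j}\log\frac{\tau-(\beta_j z+S_j)}{\tau-S_j}\,d\tau$ in closed form, isolating the term $-\beta_j z\log(-\beta_j z)$ as the exact source of the $\log t$. Your route, observing that $\chi_j'$ has only a logarithmic singularity at $S_j$ and hence $|\chi_j(\beta_j z+S_j)-\chi_j(S_j)|\le C|\beta_j z|\bigl(1+\bigl|\log|\beta_j z|\bigr|\bigr)$, is more conceptual and avoids the explicit antiderivative; it also makes the uniformity in $n/t$ transparent, since the coefficient of the logarithm is $\ell'(S_j)$ with $\ell(\tau)=\log(1-|r(\tau)|^2)$, uniformly bounded. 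The paper's computation, on the other hand, pinpoints precisely which term produces the $\log t$ loss.

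One small repair: your displayed chain $\bigl|\log|\beta_j z|\bigr|\le\tfrac12\log t+\log|z|+C\le C'\log t$ is not literally true for small $|z|$, since $\log|z|\to-\infty$. What you actually need is the two-regime split: if $|\beta_j z|\ge t^{-1/2}$ then $\bigl|\log|\beta_j z|\bigr|\le\tfrac12\log t$ and your bound $Ct^{-1/2}|z|\log t$ holds; if $|\beta_j z|<t^{-1/2}$ then $|z|$ is bounded (since $|\beta_j|\ge ct^{-1/2}$), so $|e^{-i\gamma z^2/2}|$ is bounded below, while $u(1+|\log u|)$ is increasing and hence $\le Ct^{-1/2}\log t$ directly. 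Either way the final estimate \eqref{eq:difficultineq} follows.
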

\begin{proof} We show  only \eqref{eq:difficultineq} 
because \eqref{eq:easyineq} is just an easier version of it. Moreover we can assume 
$j=1$ by symmetry. 
On 
\(
 M_1^{-1} \Sigma(S_1, L)_\varepsilon
\),  we have
\begin{align}
 &
 \Bigl| e^{\frac{ i \gamma z^2}{2}}  \Bigr|
  \Bigl|  
   (\beta_1 z +S_1)^{-2}
   R(\beta_1 z +S_1) \delta_1^1 (z)^2
   -S_1^{-2} R(S_1 \pm) z^{2 i \nu_1}  e^{- i z^2/2}
 \Bigr|
 \nonumber
 \\
 &\le 
 \Bigl| 
  e^{-\frac{ i \gamma z^2}{2}} 
 \Bigl[ 
     R (\beta_1 z +S_1)  
     \mathrm{F}      \mathrm{E}_1   \mathrm{E}_2 
    \frac{N_1 \hat\delta_1 (z)^2}{\hat\delta_1 (S_1)^2}
    -
      S_1^{-2} R(S_1\pm) z^{2 i \nu_1} 
     e^{ i (-1+2\gamma)\frac{z^2}{2}}
 \Bigr]
 \Bigr|, 
 \label{eq:manyfactors}
\end{align}
where
\begin{align}
  \mathrm{F}
  &=
    (\beta_1 z +S_1)^{-2}
  \Bigl(\frac{S_1-T_1}{\beta_1 z+S_1-T_1}\Bigr)^{2 i \nu_1}, 
   \\
  \mathrm{E}_1
  &= z^{2 i \nu_1}  \exp
        \biggl( i (-1+2\gamma)\frac{z^2}{2}
          -2N_1 \varphi_1(z)        
        \biggr), 
   \\
  \mathrm{E}_2
  &=\exp
  \Bigl( 2\{     N_1\chi_1(z)-\chi_1(S_1)  \}
  \Bigr).
\end{align}
Each factor in (\ref{eq:manyfactors}) is uniformly bounded 
with respect to $(n, t)$.

Since $z  e^{- i \gamma z^2/2}$ is bounded and 
$\beta_1=O(t^{-1/2})$, we have
\begin{align}
 & 
  \bigl|
   e^{-\frac{ i \gamma z^2}{2}}
  \bigl[
    R(\beta_1 z +S_1) - R(S_1 \pm)
  \bigr]
 \bigr|
 \nonumber
 \\
 &\le 
\bigl|  e^{-\frac{ i \gamma z^2}{2}}\bigr|  |\beta_1 z|
 \sup_{|w|\le\varepsilon}|R'(w+S_1)| 
 \le C |\beta_1|\le C t^{-1/2}.
 \label{eq:Restimate}
\end{align}
Similarly, we obtain 
\begin{align}
 &|
   e^{-\frac{ i \gamma z^2}{2}}
  (
    F -S_1^{-2}
  )
 |
 \le C t^{-1/2}, 
 \label{eq:FracEstimate}
 \\
 & \biggl|
      e^{-\frac{ i \gamma z^2}{2}}
    \biggl[
     \frac{N_1 \hat\delta_1 (z)^2}{\hat\delta_1 (S_1)^2}    -1
    \biggr]
   \biggr|
    \le C t^{-1/2}. 
 \label{eq:delta1Estimate}
\end{align}
Moreover, $N_1\varphi_1(z)=O(|\beta_1 z|^3)$  
and the boundedness of $z^{2 i \nu_1}$ and $z^3  e^{- i \gamma z^2/2}$ imply 
\begin{align}
 {} \qquad &\left|
    e^{- i\gamma z^2/2}
   \left[
     \mathrm{E}_1
     - z^{2 i \nu_1} \exp\left( i(-1+2\gamma)z^2/2\right)
   \right]
 \right|
 \nonumber
 \\
 {}&\le
 \left|  
     e^{- i\gamma z^2/2}  
 \right| 
 \sup_{0\le s \le 1}
 \left|
  \frac{ \, d}{\, d s} 
   \exp\biggl( i (-1+2\gamma)\frac{z^2}{2}
               -2s N_1 \varphi_1 (z)
       \biggr)
 \right|
 \nonumber
 \\
 {}&\le
 C 
 \left|  
     e^{- i\gamma z^2/2} 
 \right| 
 |\beta_1 z|^3
 \le  C |\beta_1|^3 \le Ct^{-3/2}\le Ct^{-1/2}.
 \label{eq:E1estimate}
\end{align}
Four $t^{-1/2}$ type estimates (\ref{eq:Restimate})-(\ref{eq:E1estimate}) have been obtained. 

Lastly we derive a $t^{-1/2}\log t$  type estimate involving $E_2$. 
We have  
\begin{align}
  &
  \left|
     e^{- i\gamma z^2/2}
   (
   \mathrm{E}_2
       -1
    )
 \right|
 \nonumber
 \\
 &\le 
 \sup_{0\le s \le 1}
 \left|
    e^{  2s\{N_1\chi_1 (z)-\chi_1(S_1)   \}  }
 \right|
 \cdot
 \left|
   2 e^{- i \gamma z^2/2}
   (N_1\chi_1 (z)-\chi_1(S_1) )
 \right| .
 \label{eq:E2}
\end{align}
Since the supremum is bounded, we have only to derive a 
$t^{-1/2}\log t$  type estimate of the second factor in (\ref{eq:E2}).
Integration by parts yields
\begin{align}
 & 2\pi  i
 [N_1 \chi_1 (z)-\chi_1(S_1)]
 \nonumber
 \\
 &=\int_{T_1}^{S_1} 
   \log \frac{1-|r(\tau)|^2}{1-|r(S_1)|^2}
   \, d \log \frac{\tau-(\beta_1 z+S_1)}{\tau-S_1}
 =-L_1-L_2, 
 \label{eq:L1andL2}
\end{align}
where 
\begin{align}
 & L_1 
 = \log  \frac{1-|r(T_1)|^2}{1-|r(S_1)|^2}
      \log\frac{T_1-(\beta_1 z+S_1)}{T_1-S_1}, 
 \\
  &L_2=\int_{T_1}^{S_1} 
      \log\frac{\tau-(\beta_1 z+S_1)}{\tau-S_1}
      g(\tau)\, d \tau, 
 \\
 & g(\tau)=\frac{\, d}{\, d \tau}\log (1-|r(\tau)|^2). 
\end{align}
The first logarithm in $L_1$ is bounded. 
The second logarithm can be estimated in the same way 
as (\ref{eq:Restimate}) etc. and we get  
\begin{equation}
 | e^{- i \gamma z^2/2} L_1|\le Ct^{-1/2}. 
 \label{eq:L1estimate}
\end{equation}
We express the   integral $L_2$  as the sum of  two terms:
\begin{align}
 L_2
 &=L_2^1+L_2^2, 
 \nonumber
 \\
 L_2^1&=
 \int_{T_1}^{S_1} \{g(\tau)-g(S_1)\} 
 \log \frac{\tau-(\beta_1 z+S_1)}{\tau-S_1}\, d\tau, 
 \\
 L_2^2
 &=
 g(S_1)\int_{T_1}^{S_1}
 \log \frac{\tau-(\beta_1 z+S_1)}{\tau-S_1}\, d\tau. 
\end{align}
We have 
\(
 |\log(1+w)|=|\int_0^w (1+z)^{-1} \,\, d z|\le C|w|
\) 
in any sector  that is away from  
the negative real axis. For $\tau$ and $z$ in $L_2^1$, 
the ratio $-\beta_1 z/(\tau-S_1)$ is in such a sector and 
\begin{equation}
 \left|
  	\log \frac{\tau-(\beta_1 z+S_1)}{\tau-S_1}
 \right|
 =
  \left|
  	\log \Bigl( 1- \frac{\beta_1 z}{\tau-S_1} \Bigr)
 \right|
 \le 
 C \left|   \frac{\beta_1 z}{\tau-S_1}\right|.
\end{equation}
It implies 
\begin{equation}
 {}\qquad | e^{- i\gamma z^2/2}  L_2^1|
   \le C| e^{- i\gamma z^2/2} |  |\beta_1 z| 
 \int_{T_1}^{S_1} 
 \left|
    \frac{g(\tau)-g(S_1)}{\tau-S_1}
 \right|
 |\, d \tau|
 \le Ct^{-1/2}. 
  \label{eq:L21estimate}
\end{equation}

Next we consider $L^2_2$. 
An elementary calculation shows 
\begin{align}
 {}\qquad
 &\int_{T_1}^{S_1} \log \frac{\tau-(\beta_1 z+S_1)}{\tau-S_1}\, d\tau
 \nonumber 
 \\
 {} \qquad
 &=
 \Bigl[
  (\tau-\beta_1 z-S_1)    \log ( \tau-\beta_1 z-S_1)  
  - 
  (\tau-S_1)\log (\tau-S_1)
 \Bigr]_{\tau=T_1}^{\tau=S_1}
 \nonumber
 \\
  {} \qquad
 &= -\beta_1 z\log(-\beta_1 z)
   -\Bigl[
      (T_1-S_1-w)\log (T_1-S_1-w)
     \Bigr]_{w=0}^{w=\beta_1z}.
  \label{eq:L22}
\end{align}
The product of $ e^{- i \gamma z^2/2}$ and the second term in the right-hand side of 
(\ref{eq:L22})  can be dealt with in the usual way, as in 
(\ref{eq:Restimate}) etc. 
The product of $ e^{- i \gamma z^2/2}$ and the first term   
enjoys an estimate involving $t^{-1/2} \log t$. 
Indeed,  if $t$ is sufficiently large,   
\begin{align}
 | e^{- i\gamma z^2/2} \beta_1 z \log(-\beta_1 z)|
 &\le
 | e^{- i\gamma z^2/2} \beta_1 z \log  z|
 +| e^{- i\gamma z^2/2}z|| \beta_1  \log(-\beta_1 )|
 \nonumber
 \\
 & \le Ct^{-1/2}+Ct^{-1/2}\log t
 \le Ct^{-1/2}\log t.
\end{align}
By \eqref{eq:L22} and these estimates, we get 
\begin{equation}
 |
  e^{-i \gamma z^2/2} L^2_2
 |
 \le C t^{-1/2} \log t . 
 \label{eq:L22bis}
\end{equation}
Combining \eqref{eq:E2}, \eqref{eq:L1andL2}, \eqref{eq:L1estimate},  
\eqref{eq:L21estimate} and \eqref{eq:L22bis}, we  obtain 
\begin{equation}
 | e^{- i\gamma z^2/2} (\mathrm{E}_2-1)|
  \le Ct^{-1/2}\log t. 
 \label{eq:E2estimate}
\end{equation}
Finally   \eqref{eq:difficultineq} follows from 
(\ref{eq:manyfactors}), (\ref{eq:Restimate}), (\ref{eq:FracEstimate}), 
(\ref{eq:delta1Estimate}), (\ref{eq:E1estimate}) and  (\ref{eq:E2estimate}). 
\end{proof}

If $j$ is odd ($j=1, 3$),  we have 
\(
 M_j^{-1} \Sigma(S_j, L')_\varepsilon
   =  e^{\pi i/4}\mathbb{R}\cap\{|\beta_j z|\le\varepsilon\}
\) and it is the union of  the upper right 
  ($z/ e^{\pi i/4}\ge 0$) 
and lower left ($-z/ e^{\pi i/4}\ge 0$)   parts. 

\begin{proposition} 
\label{prop:localization2} 
Assume $j=1$ or $3$. 
Fix an arbitrary constant $\gamma$ with $0<2\gamma<1$. 
Then on 
\(
 M_j^{-1} \Sigma(S_j, L')_\varepsilon
 \cap
\{z; \, \pm z/ e^{\pi i/4}>0\}
\),  we have
\begin{align}
 &
  \Bigl|    (\beta_j z +S_j)^{-2}
   \bar R(\beta_j z +S_j) \delta_j^1 (z)^{-2}
   - S_j^{-2} \bar R(S_j \pm) z^{-2 i \nu_j}  e^{ i z^2/2}
 \Bigr|
 \nonumber 
 \\
 &\le 
 C t^{-1/2}\log t \cdot \Bigl| e^{\frac{ i \gamma z^2}{2}}  \Bigr|, 
 \label{eq:difficultineq2}
 \\
  &
  \Bigl|  
   \bar  R(\beta_j z +S_j) \delta_j^1 (z)^{-2}
   -  \bar  R(S_j \pm) z^{-2 i \nu_j}  e^{ i z^2/2}
 \Bigr|
 \nonumber 
 \\
 &
 \le 
 C t^{-1/2}\log t \cdot \Bigl| e^{\frac{ i \gamma z^2}{2}}  \Bigr|.
 \label{eq:easyineq2}
\end{align}
We choose    $\bar R(S_j+)= r(S_j)$ on the upper right part 
and $\bar R(S_j-)=-  r(S_j)/(1-|r(S_j)|^2)$  on the lower left part. 
\label{prop:localizationL'}
\end{proposition}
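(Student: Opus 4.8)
The plan is to mirror the proof of Proposition~\ref{prop:localization} almost verbatim, exploiting that the $L'$ configuration is the $\varphi\mapsto-\varphi$ (equivalently, complex-conjugate) counterpart of the $L$ configuration treated there. As in that proof I would reduce to $j=1$ by the symmetry relations of Remark~\ref{rem:chidelta}, and establish only the sharper inequality \eqref{eq:difficultineq2}; the inequality \eqref{eq:easyineq2} then follows by discarding the uniformly bounded factors $(\beta_1 z+S_1)^{-2}$ and $S_1^{-2}$.

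The starting point is \eqref{eq:w'-}, whose relevant scalar on $L'$ is $-\delta^{-2}e^{2it\psi}\bar R=-(\delta e^{-\varphi})^{-2}\bar R$. Applying the scaling $N_1$ and the conjugation $(\Delta_1^0)^{-1}(\,\cdot\,)\Delta_1^0$ with $\Delta_1^0=(\delta_1^0)^{\sigma_3}$, and using $N_1[\delta e^{-\varphi}]=\delta_1^0\,\delta_1^1(z)$, the quantity collapses to $-(\beta_1 z+S_1)^{-2}\bar R(\beta_1 z+S_1)\,\delta_1^1(z)^{-2}$, exactly as displayed before the statement: the powers $(\delta_1^0)^{\pm2}$ coming from $N_1$ and from the conjugation cancel, leaving the reciprocal square $\delta_1^1(z)^{-2}$ in place of the $\delta_1^1(z)^2$ of Proposition~\ref{prop:localization}. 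Since $\delta_1^1(z)\to z^{i\nu_1}e^{-iz^2/4}$ pointwise, one has $\delta_1^1(z)^{-2}\to z^{-2i\nu_1}e^{iz^2/2}$, which is the asserted limit.

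I would then factor the difference precisely as in \eqref{eq:manyfactors}. The auxiliary quantities $\mathrm{F},\mathrm{E}_1,\mathrm{E}_2$ and the ratio $N_1\widehat\delta_1(z)^2/\widehat\delta_1(S_1)^2$ get replaced by their inverses, so that the power-law factors now carry $-2i\nu_1$ instead of $2i\nu_1$ and the Gaussian exponent becomes $e^{i(1-2\gamma)z^2/2}$ in place of $e^{i(-1+2\gamma)z^2/2}$; each remains uniformly bounded because $M_1^{-1}\Sigma(S_1,L')_\varepsilon=e^{\pi i/4}\mathbb{R}$ is the steepest-descent direction of $e^{\varphi}$ (equivalently of $e^{iz^2/4}$ after scaling), along which $e^{i\gamma z^2/2}$ decays, just as $e^{-i\gamma z^2/2}$ decayed along $e^{-\pi i/4}\mathbb{R}$ before. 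The deviation of each factor from its limit is then estimated exactly as in \eqref{eq:Restimate}--\eqref{eq:E2estimate}, with $|e^{-i\gamma z^2/2}|$ systematically replaced by $|e^{i\gamma z^2/2}|$. The side-values are forced by conjugation, $\bar R(S_1\pm)=\overline{R(S_1\pm)}$: from $R(S_1+)=\bar r(S_1)$ and $R(S_1-)=-\bar r(S_1)/(1-|r(S_1)|^2)$ one reads off $\bar R(S_1+)=r(S_1)$ and $\bar R(S_1-)=-r(S_1)/(1-|r(S_1)|^2)$, attached to the two half-rays of $M_1^{-1}\Sigma(S_1,L')_\varepsilon$ according to the adjacent arcs $\arc{S_4S_1}$ and $\arc{S_1S_2}$, in agreement with the statement.

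The only step that is not a purely mechanical translation, and hence the one I expect to be the main obstacle, is the $\mathrm{E}_2$-type estimate \eqref{eq:E2estimate}: the integration by parts of \eqref{eq:L1andL2}--\eqref{eq:L22} must be redone for $-(N_1\chi_1(z)-\chi_1(S_1))$, and one must check that the boundary term $\beta_1 z\log(-\beta_1 z)$ again supplies the dominant $t^{-1/2}\log t$ contribution while $L_1$, $L_2^1$ and the remaining part of $L_2^2$ stay $O(t^{-1/2})$. Since the logarithmic singularity and the cut of $z^{-2i\nu_1}$ along $\mathbb{R}_-$ are identical to those of $z^{2i\nu_1}$ (only the sign of $\nu_1$ changes), this computation carries over unchanged, and assembling the pieces yields \eqref{eq:difficultineq2}.
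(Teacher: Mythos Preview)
Your proposal is correct and is exactly the argument the paper intends: the paper gives no separate proof of this proposition, tacitly leaving it as the $L'$ analogue of Proposition~\ref{prop:localization}, and your translation (replacing $\delta_j^1(z)^2$ by $\delta_j^1(z)^{-2}$, swapping $e^{-i\gamma z^2/2}$ for $e^{i\gamma z^2/2}$ along $e^{\pi i/4}\mathbb{R}$, and reading off $\bar R(S_j\pm)$ from $\bar\rho$) is precisely that analogue. One minor simplification: the $\mathrm{E}_2$ step needs no fresh integration by parts, since $|\mathrm{E}_2^{-1}-1|$ is controlled by the same quantity $|N_1\chi_1(z)-\chi_1(S_1)|$ already estimated in \eqref{eq:L1andL2}--\eqref{eq:L22bis}.
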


\begin{remark}
\label{rem:S2S4}
If $j$ is even, we have 
\(
 M_j^{-1} \Sigma(S_j, L)_\varepsilon
   =  e^{\pi i/4}\mathbb{R}\cap\{|\beta_j z|\le\varepsilon\}
\)
and 
\(
 M_j^{-1} \Sigma(S_j, L')_\varepsilon
   =  e^{-\pi i/4}\mathbb{R}\cap\{|\beta_j z|\le\varepsilon\}
\).  
The positive imaginary axis is mapped by $M_j$ to the 
\textit{inner} normal at $S_j$. 
Roughly speaking,  we have the following:
\begin{itemize}
 \item  On $M_j^{-1} \Sigma(S_j, L)_\varepsilon$ ($j=2, 4$); \par\noindent
 	$(\beta_j z +S_j)^{-2}   R(\beta_j z +S_j) \delta_j^1 (z)^2$ tends to 
      $S_j^{-2} R(S_j\pm) z^{-2 i \nu_j}  e^{ i z^2/2}$ and 
       $R(\beta_j z +S_j) \delta_j^1 (z)^2$ tends to 
      $R(S_j\pm) z^{-2 i \nu_j}  e^{ i z^2/2}$ as $t\to\infty$.  
     Here we choose   $R(S_j+)=\bar r(S_j)$ on the upper right part 
     and $R(S_j-)=-\bar r(S_j)/(1-|r(S_j)|^2)$ on the lower left part. 
 \item  On $M_j^{-1} \Sigma(S_j, L')_\varepsilon$  ($j=2, 4$); \par\noindent
 	$(\beta_j z +S_j)^{-2}   \bar R(\beta_j z +S_j) \delta_j^1 (z)^{-2}$ tends to 
      $S_j^{-2} \bar R(S_j\pm) z^{2 i \nu_j}  e^{- i z^2/2}$ and 
       $\bar R(\beta_j z +S_j) \delta_j^1 (z)^2$ tends to 
      $\bar R(S_j\pm) z^{-2 i \nu_j}  e^{ i z^2/2}$ as $t\to\infty$.  
     Here we choose 
     $\bar R(S_j+)=  r(S_j)$ on the lower right part and 
      $\bar R(S_j-)=- r(S_j)/(1-|r(S_j)|^2)$ on the upper left part.
\end{itemize}
\end{remark}

\section{Boundedness of inverses}  
\label{sec:boundedness}
Recall that $A_j$ is an operator on $L^2(\Sigma')$ with 
the kernel $w^j$ supported by $\Sigma(S_j)_\varepsilon$ and that  
$\Sigma'=\cup_{j=1}^4 \Sigma(S_j)_\varepsilon$. 
In this section, we prove that   
 $(1-A_j)^{-1}$ exists and is bounded as an operator on
  $L^2(\Sigma')$. 
This fact was used in \S\ref{sec:crosses}. 
We make three steps of reduction (which will be followed by still 
other steps later in this section). 
It is enough to prove the existence and boundedness of: 
\begin{enumerate}
\item 
\(
   (1-C^\varepsilon_{w^j})^{-1}
   \colon 
   L^2(\Sigma(S_j)_\varepsilon) \to  L^2(\Sigma(S_j)_\varepsilon)
\). 
\item
$(1-\hat A_j)^{-1}\colon  L^2(\Sigma(S_j))\to  L^2(\Sigma(S_j))$.  
\item
$(1-\alpha_j)^{-1}\colon L^2(\Sigma(0)_j)\to L^2(\Sigma(0)_j)$. 
\end{enumerate}
The first two steps of reduction are due to \cite[Lemma 2.56]{DZ}. 
The third is due to a scaling argument. 
Indeed,  (\ref{eq:alpha_j-scaling}) implies 
\begin{equation}
  (1- \hat A^j)^{-1}
 =N_j^{-1}  (\tilde\Delta_j^0)^{-1} (1-\alpha_j)^{-1}  
 \tilde \Delta_j^0 N_j
\label{eq:DZ(3.68)}
\end{equation}
and the boundedness of $(1- \hat A^j)^{-1}$ follows from 
that of $(1-\alpha_j)^{-1}$, 
since $N_j, \tilde\Delta_j^0$ and their inverses are bounded. 

\medskip

Set 
\begin{equation}
 \omega^j_\pm = 
 (\Delta_j^0)^{-1} (N_j \hat w^j_\pm) \Delta_j^0  
 =
 (\delta_j^0)^{-\mathrm{ad\,}\sigma_3} N_j \hat w^j_\pm, 
\label{eq:omegajdef}
\end{equation}
so that by \eqref{eq:alpha_j}
\[
 \alpha_j=C_{\omega_j}
 =C_+ (\bullet\, \omega^j_-)+C_- (\bullet\, \omega^j_+). 
\]
The cross $\Sigma(0)_j$ consists of four rays: 
\[
  \Sigma(0)_j=\cup_{k=1}^4   \Sigma(0)_j^k, \quad 
   \Sigma(0)_j^k=
     e^{ i s_k \pi/4} \mathbb{R}_+ , 
\]
where $s_1=1, s_2=3, s_3=5, s_4=7$. 
Each ray is oriented inward if $j$ is odd and outward  if $j$ is even. Set
\begin{equation}
 \omega^{j, \infty}_\pm=\lim_{t\to\infty}\omega^j_\pm. 
 \label{eq:omegaconvergence}
\end{equation}
By (\ref{eq:easyineq}), \eqref{eq:easyineq2} and 
Remark~\ref{rem:S2S4},  \eqref{eq:omegaconvergence} 
holds in $L^p\, (1\le p\le \infty)$.  The concrete forms 
of $\omega^{j, \infty}_\pm$ are given below.

\subsection{Case A } 
\label{subsec:CaseA} 
Assume that $j$ is odd ($j=1, 3$). 
The contour $\Sigma(0)_j$ is oriented inward. 
Notice that $r(S_1)=-r(S_3), \nu_1=\nu_3$. 
By virtue of (\ref{eq:w'+}), (\ref{eq:w'decomposition}) and (\ref{eq:omegajdef})  we get
\begin{align*}
  \omega^{j, \infty}_+
 &= 
 \begin{bmatrix} 0 & \quad 
 -\frac{\bar r(S_j)}{  1-|r(S_j)|^2 } z^{2 i\nu_j} e^{- i z^2/2} 
  \\ 0   & \quad  0 
 \end{bmatrix} , 
 \; &z\in \Sigma(0)^2_j,  
 \\ 
  {}
 &= \begin{bmatrix} 0 & \quad  \bar r(S_j) z^{2 i\nu_j}  e^{- i z^2/2}
 \\ 0 & \quad   0\end{bmatrix} , 
 \; &z\in \Sigma(0)^4_j,  
\\
   {}
 &= 0 , 
  \; &z\in \Sigma(0)^1_j \cup \Sigma(0)^3_j.    
\end{align*}
and  (\ref{eq:w'-}), (\ref{eq:w'decomposition}) and (\ref{eq:omegajdef})  imply  
\begin{align*}
 \omega^{j, \infty}_-
 &= 
 \begin{bmatrix} 0 & \quad 0 
  \\ -r(S_j) z^{-2 i\nu_j}  e^{ i z^2/2}&   \quad  0
\end{bmatrix} , 
 &\quad z\in \Sigma(0)^1_j, 
 \\
  {}
 &= 
 \begin{bmatrix} 0 & \quad 0 \\ 
   \frac{r(S_j)}{1-|r(S_j)|^2} z^{-2 i\nu_j} e^{ i z^2/2}
   & \quad 0 
 \end{bmatrix} , 
 &\quad z\in \Sigma(0)^3_j, 
 \\
  {}
 &= 0 , 
  &\quad z\in \Sigma(0)^2_j \cup \Sigma(0)^4_j.
\end{align*}
For each $j$, either $\omega_+^{j, \infty}$ or  $\omega_-^{j, \infty}$ is $0$ and 
the associated jump matrix   is 
\begin{equation}
 (I-\omega_-^{j, \infty})^{-1} (I+\omega_+^{j, \infty})
= (I+\omega_-^{j, \infty}) (I+\omega_+^{j, \infty})
  =I+\omega_+^{j, \infty} \mbox{ or }  I+\omega_-^{j, \infty}.
\label{eq:jumpmatrix}
\end{equation}

Set 
\(
 \omega^{j, \infty}
   =\omega^{j, \infty}_+ + \omega^{j, \infty}_-
\) and 
\[
       \alpha^{j, \infty}=
      C_{\omega^{j, \infty}}
      =C_+(\bullet\,\omega^{j, \infty}_-)+C_-(\bullet\, \omega^{j,    
      \infty}_+)
      \colon L^2 (\Sigma(0)_j) \to  L^2 (\Sigma(0)_j)
      .
\]
By (\ref{eq:easyineq}), (\ref{eq:easyineq2}) and 
Remark~\ref{rem:S2S4}, we find that 
the boundedness of $(1-\alpha_j)^{-1}$ can be derived from that of 
\(
   (1-\alpha^{j, \infty})^{-1}      
   \colon L^2 (\Sigma(0)_j) \to  L^2 (\Sigma(0)_j)
\) 
if $t$ is sufficiently large. 
The proof of the boundedness of $(1-\alpha^{j, \infty})^{-1}$ 
(at least for $j=1, 3$) can 
be found in \cite{Important, DZUTYO}. 
Indeed, the matrices 
$\omega^{j, \infty}_\pm$ are  the same 
(up to inversion in the case of different orientations) as those in 
\cite[p.198]{Important}  and  \cite[p.46]{DZUTYO}. 
The  presentation in the former is   sketchy. 
A complete proof  is given in the latter, but probably 
it is not easy to find, especially at libraries outside Japan.  So here we repeat key steps of 
the calculation in  \cite{DZUTYO}. 
The method is basically the same as that in \cite{DZ}, which can be 
referred to for some details. 

Reorient and extend $\Sigma(0)_j$ to $\Sigma^{e}$ (we do without 
the subscript $j$ for simplicity) which is defined as follows:
\begin{itemize}
\item
 \(\Sigma^{e}=\Sigma(0)_j \cup \mathbb{R}
   =\mathbb{R}\cup  e^{\pi  i/4}\mathbb{R} \cup 
     e^{-\pi  i/4}\mathbb{R}
 \) as sets. 
\item
 $\mathbb{R}$ is unconventionally oriented from the  right to the left, 
 $ e^{\pi  i/4}\mathbb{R}$ from the lower left to the upper right, 
 $ e^{-\pi  i/4}\mathbb{R}$ from the upper left to the 
 lower right. 
\end{itemize}  
We have $\Sigma^{e}=\cup_{k=1}^6 \Sigma^{e}_k$, 
$\Sigma^{e}_k= e^{ i a_k \pi/4}\mathbb{R}_+$ 
($a_1=1, a_2=3, a_3=4, a_4=5, a_5=7, a_6=8$). 
Let $\Omega^{e}_k$($k=1, 2, \dots, 6$) be the sector between 
$\Sigma^{e}_{k-1}$ and $\Sigma^{e}_k$ (here $\Sigma^{e}_0=\Sigma^{e}_6$). 
The rays are so oriented that 
$C_+^{\Sigma^{e}} f$ and $C_-^{\Sigma^{e}} f$ are  analytic 
in the even- and odd-numbered sectors  respectively  
for $f\in L^2(\Sigma^{e})$.

From $\omega^{j, \infty}_\pm$, we obtain the renewed 
jump matrix 
\[v^e=v^e(z)=(b^e_-)^{-1} b^e_+
 =(I-\omega^e_-)^{-1} (I+\omega^e_+)
  =(I+\omega^e_-)  (I+\omega^e_+),
\]
where 
\begin{align*}
  b^e_+ &= B_1=
  \begin{bmatrix} 1 & \quad 0 \\ r(S_j)z^{-2 i \nu_j} e^{ i z^2/2} & \quad 1
  \end{bmatrix}
  , b^e_-=I 
  \;  &\mbox{ on }\Sigma^{e}_1, 
  \\
   b^e_+&= B_2=
  \begin{bmatrix} 1 &  
   \quad  -\frac{\bar r(S_j)}{1-|r(S_j)|^2}   z^{2 i \nu_j} e^{- i z^2/2} 
  \\ 0 & \quad  1
  \end{bmatrix}
  , b^e_-=I 
  \;  &\mbox{ on }\Sigma^{e}_2,  
  \\
    b^e_\pm &=I    \;  &\mbox{ on }\Sigma^{e}_3, 
  \\
  b^e_+&=I, 
  b^e_-=  B_3=
  \begin{bmatrix} 1 \quad & 0 \\
    -\frac{r(S_j)}{ 1-|r(S_j)|^2}   
           z^{-2 i \nu_j} e^{ i z^2/2} \quad 
    &  1
  \end{bmatrix} 
  \;  &\mbox{ on }\Sigma^{e}_4,  
  \\
  b^e_+&=I, 
  b^e_-=  B_4=
    \begin{bmatrix} 1 & \quad 
   \bar  r(S_j)z^{2 i \nu_j} e^{- i z^2/2} 
   \\ 0 & \quad 1
  \end{bmatrix}
  \;  &\mbox{ on }\Sigma^{e}_5, 
  \\
    b^e_\pm &=I   \;  &\mbox{ on }\Sigma^{e}_6, 
   \\
   \omega^e_\pm &=   \pm (b^e_\pm -I) 
    \;  &\mbox{ on }\Sigma^{e}. 
\end{align*}
Notice that we have changed orientations on 
$\Sigma^{e}_1 \cup \Sigma^{e}_5=\Sigma(0)_j^1 \cup \Sigma(0)_j^4$, 
where $\omega^e_\pm=-\omega^{j, \infty}_\mp$. 
The operator $\alpha^{j, \infty}=C_{\omega^{j, \infty}}$ of type (\ref{eq:integopdef}) associated with 
$(\Sigma(0)_j, \omega^{j, \infty}_\pm)$ coincides with 
that associated with 
\((\Sigma^{e}_1 \cup \Sigma^{e}_2 \cup \Sigma^{e}_4 \cup \Sigma^{e}_5, 
  \omega^e_\pm
   |_{\Sigma^{e}_1 \cup \Sigma^{e}_2 \cup \Sigma^{e}_4 \cup \Sigma^{e}_5}
 )
\). 
By \cite[Lemma 2.56]{DZ}, 
in order to prove the boundedness of 
$(1-\alpha^{j, \infty})^{-1}\colon L^2(\Sigma(0)_j)\to L^2(\Sigma(0)_j)$, 
we have only to prove that of 
$(1-C_{\omega^e})^{-1}$, where $C_{\omega^e}$ is
the operator 
associated with $(\Sigma^{e}, \omega^e_\pm)$.

Define a piecewise analytic matrix function $\phi(z)$ on 
$\mathbb{C}\setminus\Sigma^{e}$ as follows:
\begin{equation}
 \begin{array}{|c||c|c|c|c|c|c|}
 \hline  \mbox{}&
  \Omega^{e}_1 & \Omega^{e}_2 & \Omega^{e}_3 & 
  \Omega^{e}_4 & \Omega^{e}_5 & \Omega^{e}_6 
  \\
  \hline 
  \phi(z)&
  z^{- i\nu_j\sigma_3}B_1^{-1}
  & z^{- i\nu_j\sigma_3}
  & z^{- i\nu_j\sigma_3} B_2^{-1}
  & z^{- i\nu_j\sigma_3} B_3^{-1}
  & z^{- i\nu_j\sigma_3}
  & z^{- i\nu_j\sigma_3} B_4^{-1}
  \\
  \hline
 \end{array}
 \label{eq:phitable}
\end{equation}
On   $\Sigma^{e}$, set 
\(
   v^{e, \phi}(z)=\phi_-(z) v^e(z) \phi_+^{-1}(z)
\). 
Then we have $v^{e, \phi}(z)=I$ on 
\(
   \Sigma^{e}_1 \cup \Sigma^{e}_2 \cup \Sigma^{e}_4 \cup \Sigma^{e}_5
\). 
On $\Sigma^{e}_6$, we have 
\begin{align*}
 v^{e, \phi}(z)
    &=(z^{- i\nu_j\sigma_3} B_1^{-1})
       (z^{- i\nu_j\sigma_3} B_4^{-1})^{-1}
    =z^{- i\nu_j  \mathrm{ad}\sigma_3}(B_1^{-1}B_4)
    \\
    &= e^{-\frac{ i z^2}{4}  \mathrm{ad}\sigma_3}
                   \begin{bmatrix}
                          1   & \quad \bar r(S_j) \\ -r(S_j) & 	 \quad 1-|r(S_j)|^2 
                   \end{bmatrix}.
\end{align*}
On $\Sigma^{e}_3=\mathbb{R}_-$, its orientation implies 
$(z^{-i\nu_j})_- / (z^{-i\nu_j})_+ =  e^{2\pi \nu_j }=(1-|r(S_j)|^2)^{-1}$ 
and  
\begin{align*}
 v^{e, \phi}(z)
 &= e^{-\frac{ i z^2}{4}  \mathrm{ad}\sigma_3}
   \left\{
     \begin{bmatrix}	
      1 & \frac{\bar r(S_j)}{1-|r(S_j)|^2} \\ 0 &  1 
     \end{bmatrix}
      (z^{-i\nu_j \sigma_3})_- (z^{i\nu_j \sigma_3})_+ 
     \begin{bmatrix}	
      1 &0 \\ \frac{- r(S_j)}{1-|r(S_j)|^2}  &  1 
     \end{bmatrix}
   \right\}
 \\
 &=
    e^{-\frac{ i z^2}{4}  \mathrm{ad}\sigma_3}
                   \begin{bmatrix}
                          1   & \bar r(S_j) \\ -r(S_j) &  1-|r(S_j)|^2 
                   \end{bmatrix}.
\end{align*}

We set 
\begin{align}
  & \omega^{e, \phi}_\pm=0 \mbox{ on }  
  \Sigma^{e}_1 \cup \Sigma^{e}_2 \cup \Sigma^{e}_4 \cup \Sigma^{e}_5, 
 \\
 & 
    \omega^{e, \phi}_- = 
    \begin{bmatrix} 0&\quad 0\\-r(S_j) e^{ i z^2/2} &\quad  0 \end{bmatrix}, \;
     \omega^{e, \phi}_+ =
    \begin{bmatrix} 0&\quad  \bar r(S_j) e^{- i z^2/2}\\0 & \quad 0 \end{bmatrix}
     \mbox{ on }   \mathbb{R}, 
  \\
 &
      \omega^{e, \phi}= \omega^{e, \phi}_+ +  \omega^{e, \phi}_-  
         \mbox{ on }  \Sigma^{e}.
\end{align}
We have $v^{e, \phi}=(I- \omega^{e, \phi}_-)^{-1}(I+ \omega^{e, \phi}_+)$ 
on $\Sigma^e$.
By \cite[Lemma 2.56]{DZ}, 
the boundedness of $(1-C_{\omega^{e, \phi}})^{-1}$ on 
$L^2 (\Sigma^e)$ follows from that of 
$(1-C_{\omega^{e, \phi}}|_\mathbb{R})^{-1}$ on 
$L^2 (\mathbb{R})$. 
The latter is an immediate consequence of $|r(S_j)|<1$. 

By means of the process of \cite[pp.344-346]{DZ}
and the several steps of reduction in this section,  
we can derive the boundedness of 
$(1-C_{\omega^e})^{-1}$, 
$(1-\alpha^{j, \infty})^{-1}=(1-C_{\omega^{j, \infty}})^{-1}$, 
 $(1-\alpha^{j})^{-1}=(1-C_{\omega^j})^{-1}$, $(1-\hat A_j)^{-1}$, 
$(1-C^\varepsilon_{w^j})^{-1}$ and $(1-A_j)^{-1}$.

\subsection{Case B}  Assume that $j$ is even ($j=2, 4$). 
The contour $\Sigma(0)_j$ is oriented outward. 
Notice that $r(S_2)=-r(S_4), \nu_2=\nu_4$. 
We have 
\begin{align*}
  \omega^{j, \infty}_+
 &=\begin{bmatrix} 0 & \quad   \bar r(S_j) z^{-2 i\nu_j}  e^{ i z^2/2}
  \\ 0   & \quad  0 \end{bmatrix} , 
 &\quad z\in \Sigma(0)^1_j, 
 \\
  \omega^{j, \infty}_+
 &=\begin{bmatrix} 0 
 & \quad  -\frac{\bar r(S_j)}{  1-|r(S_j)|^2 }z^{-2 i\nu_j} e^{ i z^2/2}
 \\ 0 & \quad  0\end{bmatrix} , 
 &\quad z\in \Sigma(0)^3_j, 
\\
   \omega^{j, \infty}_+
 &= 0 , 
  &\quad z\in \Sigma(0)^2_j \cup \Sigma(0)^4_j.  
\end{align*}
and 
\begin{align*}
 \omega^{j, \infty}_-
 &=\begin{bmatrix} 0 & \quad 0 \\ 
 \frac{r(S_j)}{ 1-|r(S_j)|^2 } z^{2 i\nu_j} e^{- i z^2/2}
  & \quad 0 \end{bmatrix} , 
 &\quad z\in \Sigma(0)^2_j, 
 \\
  \omega^{j, \infty}_-
 &=\begin{bmatrix} 0 & \quad 0 \\ -r(S_j) z^{2 i\nu_j}  e^{- i z^2/2}&    \quad 
 0\end{bmatrix} , 
 &\quad z\in \Sigma(0)^4_j, 
 \\
  \omega^{j, \infty}_-
 &= 0 , 
  &\quad z\in \Sigma(0)^1_j \cup \Sigma(0)^3_j.
\end{align*}

Define 
\(
 \omega^{j, \infty}
\) and 
\(
       \alpha^{j, \infty}
\) in the same way as in Case A. 
Here again we want to show the boundedness of 
\(
   (1-\alpha^{j, \infty})^{-1}      
   \colon L^2 (\Sigma(0)_j) \to  L^2 (\Sigma(0)_j)
\).

Denote 
 $\omega_\pm^{j, \infty}$ 
by 
\(
  \omega_\pm^{\mathrm{even}, \infty}   
\) 
or 
\(
  \omega_\pm^{\mathrm{odd}, \infty}   
\) 
when $j$ is   even or odd respectively. 
Replace $r(S_j)$ with $\bar r(S_j)$ (hence 
 $\bar r(S_j)$ with $r(S_j)$) in the definition of 
$\omega_\pm^{\mathrm{odd}, \infty}$, and  
denote by $\bar \omega_\pm^{\mathrm{odd}, \infty}$ 
the matrix thus obtained. 
For example, when $j=2$, we have 
\[
  \bar\omega^{\mathrm{odd}, \infty}_+ =0, \quad
  \bar\omega^{\mathrm{odd}, \infty}_-
  =\begin{bmatrix}
     0 &\quad 0
     \\
     -\bar r(S_2) z^{-2i \nu_2} e^{iz^2/2} & \quad 0 
    \end{bmatrix}
    \quad
    \mbox{on }
    \Sigma(0)^1_2. 
\] 
Notice that $\bar r$ is evaluated at $j=2$ (an even number), 
although  the form of the matrix is borrowed from the 
odd $j$'s and the superscript contains the word `odd'.

The problem concerning  
$\bar \omega_\pm^{\mathrm{odd}, \infty}$ can be solved 
in the same way as that concerning 
$\omega_\pm^{\mathrm{odd}, \infty}$. 
We find that 
\begin{equation}
 \omega_\pm^{\mathrm{even}, \infty}
 =-{}^t \bar\omega_\mp^{\mathrm{odd}, \infty}
 =- \sigma_1 \bar \omega_\mp^{\mathrm{odd}, \infty} \sigma_1 , 
 \quad 
  \sigma_1=\begin{bmatrix} 0& 1 \\ 1& 0\end{bmatrix}. 
\label{eq:sigma1}
\end{equation}
We denote  $\alpha^{j, \infty}=C_{\omega^{j, \infty}}$  by 
$\alpha^{\mbox{\scriptsize{even}}, \infty}$ or 
$\alpha^{\mbox{\scriptsize{odd}}, \infty}$ when 
$j$ is even or odd respectively, and let 
$\bar \alpha^{\mathrm{odd}, \infty}$ be the 
operator obtained by replacing 
$\omega_\pm^{\mathrm{odd}, \infty}$ 
with $\bar \omega_\pm^{\mathrm{odd}, \infty}$ 
in $\alpha^{\mbox{\scriptsize{odd}}, \infty}$. 
Let $\hat \sigma_1$ be the right multiplication by $\sigma_1$, namely  
\(
 \hat \sigma_1 (f)=f\sigma_1 
\). 
Then (\ref{eq:sigma1}) implies 
\begin{equation}
 \alpha^{\mathrm{even}, \infty}
 = \hat \sigma_1 \circ
      \bar \alpha^{\mathrm{odd}, \infty}
   \circ   \hat \sigma_1, 
\end{equation}
because the cross $\Sigma(0)_j$ changes orientation 
in accordance with the parity of $j$. 
It  cancels out the negative sign in the 
right-hand side of  (\ref{eq:sigma1}). 
Moreover it exchanges the positive and negative sides of the rays, 
as is compatible with the $\pm$ and $\mp$ signs in  
(\ref{eq:sigma1}). 
Therefore the boundedness of 
$(1-\alpha^{\mathrm{odd}, \infty})^{-1}$  proved in the 
previous subsection   
implies that of  $(1-\bar \alpha^{\mathrm{odd}, \infty})^{-1}$  
and $(1-\alpha^{\mathrm{even}, \infty})^{-1}$.

\section{Reconstruction via scaling}
\label{sec:reconstscaling}
\subsection{Reduction to infinity}

We defined the operator 
\(
 \hat A_j \colon L^2(\Sigma(S_j)) \to  L^2(\Sigma(S_j))
\) 
with the kernel $\hat w^j_\pm$ 
at the beginning of \S\ref{sec:infinitecrosses}. 
By Proposition~\ref{prop:Rn4crosses} and \cite[Lemma 2.56]{DZ}, 
we obtain 
\begin{equation}
 R_n(t)
 = -\delta(0) \sum_{j=1}^4 \left[ R_n^j (t)\right]_{21}+O(t^{-1}), 
 \label{eq:Rnj1}
\end{equation}
where the \textit{matrix} $R_n^j(t)$ is defined by
\begin{equation}
  R_n^j (t)
 =  \int_{\Sigma(S_j)}
   \bigl((1-\hat A_j)^{-1}I\bigr)(z)  z^{-2} \hat w^j (z)
 \frac{\, d z}{2\pi  i}  .
 \label{eq:Rnj2}
\end{equation}
By (\ref{eq:DZ(3.68)}) and  $\tilde\Delta^0_j  N_j   I=\Delta^0_j$,  
we obtain
\begin{align*}
 R_n^j(t)
 &= \int_{\Sigma(S_j)}
   \bigl( N_j^{-1} (\tilde\Delta^0_j)^{-1} (1-\alpha_j)^{-1}  
   \tilde\Delta^0_j 
     N_j   I
   \bigr)(z)  z^{-2} \hat w^j (z)
   \frac{\, d z}{2\pi  i} 
  \\ 
  & 
  =  \int_{\Sigma(S_j)}
     \bigl((1-\alpha_j)^{-1}\Delta_j^0  \bigr) (z')
     (\Delta_j^0)^{-1}
     \bigl(N_j[\bullet^{-2} \hat w^j]\bigr)(z')
  \frac{\, d z}{2\pi  i}  
 \\
   &   \mbox{with} \; z'=M_j^{-1}(z)=(z-S_j)/\beta_j
  \in \Sigma(0)_j .
\end{align*}
The change of variables $z=\beta_j z'+S_j$ leads to 
\begin{equation*}
  R^j_n(t)
 =  \beta_j \int_{\Sigma(0)_j} 
     \bigl((1-\alpha_j)^{-1}\Delta_j^0  \bigr) (z')
     (\Delta_j^0)^{-1}
     \bigl(N_j[\bullet^{-2} \hat w^j]\bigr)(z')    
  \frac{\, d z'}{2\pi  i} .
\end{equation*}
The operator $\alpha_j$ is  
basically right action. It commutes with the left multiplication 
 by $\Delta_j^0=(\delta_j^0)^{\sigma_3}$ and  so does $(1-\alpha_j)^{-1}$. 
We get by (\ref{eq:omegajdef})
\begin{equation}
 R^j_n(t)
 = \beta_j  
  (\delta_j^0)^{\mathrm{ad\,}\sigma_3}
 \int_{\Sigma(0)_j} 
     \bigl((1-\alpha_j)^{-1} I \bigr) (z)
      (\beta_j z+S_j)^{-2} \omega^j (z)    
       \frac{\, d z}{2\pi  i}  .
  \label{eq:RnjSigma0j}
\end{equation}
By using Proposition~\ref{prop:localization}, \ref{prop:localization2} 
and Remark~\ref{rem:S2S4}, 
we get
\begin{align}
 & \int_{\Sigma(0)_j} 
     \bigl((1-\alpha_j)^{-1} I \bigr) (z)
      (\beta_j z+S_j)^{-2} \omega^j (z)    
 \, d z 
 \nonumber
 \\
  & =\int_{\Sigma(0)_j} 
     \bigl((1-\alpha_j^\infty)^{-1} I \bigr) (z)
      S_j^{-2} \omega^{j, \infty} (z)    
 \, d z   
 +O(t^{-1/2}\log t).
 \label{eq:reducetoinfty}
\end{align}
We substitute (\ref{eq:reducetoinfty}) into (\ref{eq:RnjSigma0j}). 
Then (\ref{eq:Rnj1}) and $\beta_j=O(t^{-1/2})$ yield  the following proposition. 
\begin{proposition}
\label{prop:1025}
\begin{align}
& R_n(t)
 =-\frac{\delta(0)}{2\pi i}\sum_{j=1}^4 \beta_j (\delta_j^0)^{-2}
  S_j^{-2}
  \left[
  \int_{\Sigma(0)_j}   
     \bigl((1-\alpha_j^\infty)^{-1} I \bigr) (z)
     \omega^{j, \infty} (z)    
     \, d z   
  \right]_{21}
  \nonumber
  \\
 &\hspace{4em}+O(t^{-1}\log t). 
 \label{eq:Rnt1024}
\end{align}
\end{proposition}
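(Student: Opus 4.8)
The plan is to track $R_n(t)$ through the three reductions already prepared above—localization to the small crosses, passage to the infinite crosses, and the scaling change of variables—and then to replace every $t$-dependent ingredient by its $t\to\infty$ limit. Concretely, I would assemble the statement by chaining together identities (\ref{eq:Rnj1}), (\ref{eq:RnjSigma0j}) and (\ref{eq:reducetoinfty}). Starting from Proposition~\ref{prop:Rn4crosses}, which already expresses $R_n(t)$ as $-\delta(0)$ times the sum over $j=1,2,3,4$ of the $(2,1)$-entries of integrals over the small crosses $\Sigma(S_j)_\varepsilon$ against the operators $C^\varepsilon_{w^j}$, I would first invoke \cite[Lemma 2.56]{DZ} to pass to the infinite crosses $\Sigma(S_j)$ and the operators $\hat A_j$; this is legitimate because $\hat w^j$ is supported in $\Sigma(S_j)_\varepsilon$, and it yields the matrix form (\ref{eq:Rnj1})--(\ref{eq:Rnj2}).

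The algebraic core is the scaling step. Using the conjugation identity (\ref{eq:DZ(3.68)}) together with $\tilde\Delta_j^0 N_j I=\Delta_j^0$, and performing the change of variables $z=\beta_j z'+S_j$ that maps $\Sigma(0)_j$ onto $\Sigma(S_j)$, the composite $N_j^{-1}(\tilde\Delta_j^0)^{-1}(1-\alpha_j)^{-1}\tilde\Delta_j^0 N_j$ collapses and the Jacobian produces the prefactor $\beta_j$. The decisive point is that $\alpha_j$ is essentially a right action, so left multiplication by $\Delta_j^0=(\delta_j^0)^{\sigma_3}$ commutes with $(1-\alpha_j)^{-1}$; this lets me pull $(\delta_j^0)^{\mathrm{ad}\sigma_3}$ out of the integral and recover (\ref{eq:RnjSigma0j}). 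Reading off the $(2,1)$-entry and using $[(\delta_j^0)^{\mathrm{ad}\sigma_3}Q]_{21}=(\delta_j^0)^{-2}Q_{21}$ produces exactly the scalar oscillatory weight $(\delta_j^0)^{-2}$ of the statement, while extracting the common factor $1/(2\pi i)$ gives the prefactor $-\delta(0)/(2\pi i)$.

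The main obstacle is identity (\ref{eq:reducetoinfty}): replacing the finite-$t$ data $\alpha_j$, $(\beta_j z+S_j)^{-2}$ and $\omega^j$ inside the resolvent-weighted integral by their limits $\alpha_j^\infty$, $S_j^{-2}$ and $\omega^{j,\infty}$ at the cost of only $O(t^{-1/2}\log t)$. Here I would use Propositions~\ref{prop:localization} and \ref{prop:localization2}, supplemented by Remark~\ref{rem:S2S4} for the even saddle points: they supply the convergence of the weighted kernels at rate $t^{-1/2}\log t$ and, crucially, retain the Gaussian factor $|e^{-i\gamma z^2/2}|$ that guarantees absolute integrability over the unbounded cross $\Sigma(0)_j$. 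The replacement of the resolvent itself I would handle by a second-resolvent-identity estimate bounding $\|(1-\alpha_j)^{-1}-(1-\alpha_j^\infty)^{-1}\|$ by a constant times $\|\alpha_j-\alpha_j^\infty\|$, which is in turn controlled by the $L^p$ convergence $\omega^j_\pm\to\omega^{j,\infty}_\pm$ recorded in (\ref{eq:omegaconvergence}) and the uniform boundedness of both resolvents established in \S\ref{sec:boundedness}. Finally, inserting (\ref{eq:reducetoinfty}) into (\ref{eq:RnjSigma0j}) and using $\beta_j=O(t^{-1/2})$ together with the boundedness of $(\delta_j^0)^{\pm 2}$, the scaling error becomes $\beta_j\cdot O(t^{-1/2}\log t)=O(t^{-1}\log t)$, which is absorbed into the remainder and yields (\ref{eq:Rnt1024}).
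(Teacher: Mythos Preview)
Your proposal is correct and follows essentially the same route as the paper: you chain (\ref{eq:Rnj1})--(\ref{eq:Rnj2}), the scaling manipulation leading to (\ref{eq:RnjSigma0j}), and the limit replacement (\ref{eq:reducetoinfty}), then absorb the error via $\beta_j=O(t^{-1/2})$. Your treatment is in fact slightly more explicit than the paper's, since you spell out the second-resolvent-identity estimate behind the replacement of $(1-\alpha_j)^{-1}$ by $(1-\alpha_j^\infty)^{-1}$ and note the boundedness of $(\delta_j^0)^{\pm2}$, points the paper leaves implicit.
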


The integral in (\ref{eq:Rnt1024}) can be calculated by 
using a Riemann-Hilbert problem. 
For $\mathbb{C}\setminus \Sigma(0)_j$, set
\[
 m^j(z)=I+\int_{\Sigma(0)_j} 
              \frac{\left((1-\alpha_j^\infty)^{-1} I\right)(\zeta)  
              \omega^{j, \infty}(\zeta)}
              {\zeta -z}
              \frac{\, d \zeta}{2\pi  i}.
\]
Then $m^j(z)$ solves (uniquely) the Riemann-Hilbert problem
\begin{align}
  & m^j_+(z)=m^j_-(z)v^j(z), & z\in\Sigma(0)_j, 
  \\
  & m^j(z)\to I                & \mbox{as}\;z\to\infty,
\end{align}
with 
\(v^j(z)
  = (I-\omega_-^{j, \infty})^{-1}  (I+\omega_+^{j, \infty}).
\)
As $z\to\infty$, $m^j(z)$ behaves like 
$m^j(z)=I-z^{-1} m_1^j +O(z^{-2})$, where
\begin{equation}
 m_1^j= \int_{\Sigma(0)_j} 
         \left((1-\alpha_j^\infty)^{-1} I  \right)
          (\zeta)  
              \omega^{j, \infty}(\zeta)
         \frac{\, d \zeta}{2\pi  i}
\end{equation}
is nothing but the integral in (\ref{eq:Rnt1024}). It implies the 
following proposition.

\begin{proposition}
\label{prop:1030}
We have 
\begin{equation}
 R_n(t)
 =-\frac{\delta(0)}{2\pi i}\sum_{j=1}^4 \beta_j (\delta_j^0)^{-2}
  S_j^{-2}
 (m^j_1)_{21}
  +O(t^{-1}\log t). 
 \label{eq:Rnt1030}
\end{equation}
\end{proposition}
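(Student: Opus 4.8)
The plan is to deduce \eqref{eq:Rnt1030} from Proposition~\ref{prop:1025} by a single substitution, after recognising that the matrix integral isolated in \eqref{eq:Rnt1024} is precisely the first inverse-moment coefficient $m^j_1$ of the solution $m^j$ to the localised Riemann--Hilbert problem on the infinite cross $\Sigma(0)_j$. No further analytic estimate is needed: the substantive work---the resolvent bounds of \S\ref{sec:boundedness}, the localisation Propositions~\ref{prop:localization} and \ref{prop:localization2}, and the reduction \eqref{eq:reducetoinfty}---has already been carried out, so what remains is to read off the behaviour of a Cauchy transform at infinity.

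First I would confirm that the Cauchy-type integral defining $m^j(z)$ genuinely solves the stated model problem. Its boundary values satisfy the jump $v^j=(I-\omega^{j,\infty}_-)^{-1}(I+\omega^{j,\infty}_+)$ across $\Sigma(0)_j$ by the Plemelj formulas, and $m^j(z)\to I$ as $z\to\infty$ since the density $\bigl((1-\alpha_j^\infty)^{-1}I\bigr)\,\omega^{j,\infty}$ is integrable on $\Sigma(0)_j$. The existence and $L^2$-boundedness of $(1-\alpha_j^\infty)^{-1}=(1-C_{\omega^{j,\infty}})^{-1}$ needed here are exactly what Case~A and Case~B of \S\ref{sec:boundedness} provide, while uniqueness follows from the usual vanishing/Liouville argument.

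Next I would expand the kernel as $(\zeta-z)^{-1}=-z^{-1}\bigl(1+\zeta z^{-1}+\cdots\bigr)$ for large $z$ and integrate term by term, obtaining $m^j(z)=I-z^{-1}m^j_1+O(z^{-2})$ with $m^j_1$ equal to the integral of $\bigl((1-\alpha_j^\infty)^{-1}I\bigr)\,\omega^{j,\infty}$ over $\Sigma(0)_j$ against $d\zeta/(2\pi i)$. This identifies $m^j_1$ with the bracketed integral in \eqref{eq:Rnt1024}, the normalising factor $2\pi i$ being the one carried by the Plemelj/Cauchy representation. Substituting this identification into \eqref{eq:Rnt1024} and extracting the $(2,1)$-entry of each summand then produces \eqref{eq:Rnt1030} term by term.

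The only delicate point, and the nearest thing to an obstacle, is the legitimacy of the term-by-term expansion of the Cauchy transform at infinity: one must know that $\bigl((1-\alpha_j^\infty)^{-1}I\bigr)\,\omega^{j,\infty}\in L^1(\Sigma(0)_j)$, so that the moment $m^j_1$ converges absolutely and the remainder is genuinely $O(z^{-2})$. Both facts follow from the Gaussian decay $e^{\pm iz^2/2}$ of the entries of $\omega^{j,\infty}$ along the steepest-descent rays of $\Sigma(0)_j$ combined with the $L^2$-boundedness of $(1-\alpha_j^\infty)^{-1}$. With this in hand the proposition reduces to bookkeeping built on the results already established.
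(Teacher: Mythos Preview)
Your approach is essentially the same as the paper's: the text immediately preceding the proposition defines $m^j(z)$ as the Cauchy transform with density $\bigl((1-\alpha_j^\infty)^{-1}I\bigr)\,\omega^{j,\infty}$, reads off the coefficient $m_1^j$ from its large-$z$ expansion, and observes that this is exactly the bracketed integral in \eqref{eq:Rnt1024}. Your additional remarks on why the Cauchy representation is legitimate (Plemelj, $L^1$ via Gaussian decay along the rays, boundedness of $(1-\alpha_j^\infty)^{-1}$ from \S\ref{sec:boundedness}) are correct and simply make explicit what the paper leaves implicit.
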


The integral $m_1^j$ is calculated in two steps depending 
on the parity of $j$. 

\subsection{Case A} 
Assume that   $j$  is odd. 
We introduce the contour $\hat \Sigma$ in the following way:
\begin{itemize}
 \item $\hat \Sigma=\Sigma^e$   as sets. 
 \item 
Set $\hat\Sigma_k=\Sigma^e_k$, $\hat\Omega_k=\Omega^e_k$ 
for $k=1, 2, \dots, 6$.  Orient each   $\hat\Sigma_k$ 
from the (upper/lower) left to the  (upper/lower) right. 
In other words, the orientation of $\hat\Sigma$ differs from that of 
$\Sigma^e$   on $\mathbb{R}$. 
\end{itemize}
Set 
\(
  H=m^j \phi^{-1} 
\), 
where $\phi$ is as in \eqref{eq:phitable}.   
Its jump matrix  on $\hat \Sigma\setminus\mathbb{R}$ 
is 
\(
 \phi_- v^j \phi_+^{-1}=\phi_- v^e \phi_+^{-1}=v^{e, \phi}=I
\). 
Denote by $\hat v$ its jump matrix on 
$\mathbb{R}=\Sigma^e_3 \cup \Sigma^e_6$.  We have 
\begin{align}
 & H_+ =H_- \hat v  \mbox{ on } \mathbb{R},    
 \label{eq:H_RH}
 \\
 & 
 H z^{- i \nu_j \sigma_3} 
 =I-z^{-1}m_1^j +O(z^{-2}) 
   \mbox{ as } z\to \infty.
   \label{eq:hatPsi_asymp}
\end{align}
We can show that  
\[
    \hat v =
   v^{e, \phi}(z)^{-1}|_{\mathbb{R}}
   = e^{- \frac{i z^2}{4} \mathrm{ad} \sigma_3}
     \begin{bmatrix}    1-|r(S_j)|^2 & \quad -\bar r(S_j) \\ r(S_j) & \quad  1
   \end{bmatrix}.
\]

\begin{remark}
\label{rem:rSj=0}
If $r(S_j)=0$, then we have $\nu_j=0, \hat v=I$. It follows 
that $H=I, m_1^j=0$. Hence $M_j$ in  Theorem~\ref{thm:main} 
vanishes. 
\end{remark}

Assume $r(S_j)\ne 0$. 
Our formulas (\ref{eq:H_RH}) and 
(\ref{eq:hatPsi_asymp}) are just the counterparts of 
\cite[(4.18),  (4.19)]{DZ}. 
We can follow the calculation in 
\cite[pp.349-352]{DZ}   and get 
\begin{equation}
  (m_1^j)_{21}
  =- \frac{ i\sqrt{2\pi}  e^{- i \pi/4}   e^{-\pi\nu_j/2}  }
            {\bar r(S_j) \Gamma( i \nu_j)}, 
     \quad
  (m_1^j)_{12}
  =\frac{ i \sqrt{2\pi}  e^{ i \pi/4}   e^{-\pi\nu_j/2}  }
            { r(S_j) \Gamma(- i \nu_j)}.
  \label{eq:mj112}
\end{equation}
It follows from $r(S_1)=-r(S_3)$ that 
\[
  (m_1^3)_{21}=- (m_1^1)_{21},    \quad (m_1^3)_{12}=- (m_1^1)_{12}. 
\]

\subsection{Case B }
Before calculating $m_1^j$ when $j$  is even,     
we give a general argument. 
Let us consider a pair of Riemann-Hilbert problems on a 
common contour: 
\begin{align}
 & M_+=M_- v, \quad 
 M \to I \mbox{  as  } z\to \infty, 
 \label{eq:RHpair1}
 \\
 & \tilde M_+=\tilde M_- (\sigma_1 v \sigma_1), \quad 
 \tilde M \to I \mbox{  as  } z\to \infty. 
 \label{eq:RHpair2}
\end{align}
The latter implies  
\[
   \sigma_1 \tilde M_+ \sigma_1
   =
  (\sigma_1 \tilde M_- \sigma_1)v, 
  \quad 
  \sigma_1 \tilde M \sigma_1\to I \mbox{  as  } z\to \infty
\]
and $\sigma_1 \tilde M \sigma_1$ satisfies  (\ref{eq:RHpair1}). 
Therefore, if (\ref{eq:RHpair1}) is uniquely solvable, so is   (\ref{eq:RHpair2}) and we have $\sigma_1 \tilde M \sigma_1=M$, 
hence 
\begin{equation}
  \tilde M=\sigma_1 M \sigma_1, \quad
  \tilde M_{21}=M_{12}, \quad \tilde M_{12}=M_{21}.
  \label{eq:m12m21}
\end{equation}

Now we come back to our specific situation. 
Recall that 
\begin{equation}
 \omega_\pm^{\mathrm{even}, \infty}
  =- \sigma_1 \bar \omega_\mp^{\mathrm{odd}, \infty} \sigma_1 .  
\label{eq:omegasigma1}
\end{equation}
We are almost in the situation described in (\ref{eq:RHpair1}) 
and (\ref{eq:RHpair2}). 
On the right-hand side of (\ref{eq:omegasigma1}), there is a 
negative sign  and the subscript $\mp$ replaces $\pm$. 
These deviations from \eqref{eq:RHpair2} are  canceled out by the fact that 
$\Sigma(0)_j$ is oriented differently 
in accordance  with the parity of $j$. See \eqref{eq:oddeven} below.

When $j$ is even,  we reverse the orientation of $\Sigma(0)_j$. 
Then  the orientation is now inward and the 
new jump matrix is 
\(
 v_{\raisebox{-1pt}{\mbox{\scriptsize{even}}}}
 =v^j(z)^{-1}
 =(I+\omega_+^{\mathrm{even}, \infty})^{-1}
   (I-\omega_-^{\mathrm{even}, \infty})
\).

Set 
\(
 v_\mathrm{odd}  (\bar r(S_j)) 
 = 
 (I-\bar\omega_-^{\mathrm{odd}, \infty})^{-1}
 (I+\bar\omega_+^{\mathrm{odd}, \infty})
\), then \eqref{eq:omegasigma1} implies
\begin{equation}
\label{eq:oddeven}
 v_{\mathrm{even}} =\sigma_1  v_\mathrm{odd}  (\bar r(S_j)) \sigma_1.
\end{equation}
Therefore the solution in Case A  with 
$r (S_j)$  and $\bar r (S_j)$ interchanged 
gives that in Case B by  the procedure in  (\ref{eq:m12m21}). 
If $j$ is even,    (\ref{eq:mj112}) and  (\ref{eq:m12m21}) lead to 
\begin{equation}
 (m^j_1)_{21} =
 \frac{ i \sqrt{2\pi}  e^{ i \pi/4}   e^{-\pi\nu_j/2}  }
            {\bar r(S_j) \Gamma(- i \nu_j)}.
 \label{eq:mj121even}
\end{equation}
It follows that 
\( (m^4_1)_{21} =- (m^2_1)_{21}\).

\begin{proposition}
\label{prop:1030B}
We have 
\begin{equation}
 R_n(t)
 =-\frac{\delta(0)}{\pi i}\sum_{j=1}^2 \beta_j (\delta_j^0)^{-2}
  S_j^{-2}
 (m^j_1)_{21}
  +O(t^{-1}\log t). 
 \label{eq:Rnt1030B}
\end{equation}
\end{proposition}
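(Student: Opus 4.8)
The plan is to reduce the four-term sum in Proposition~\ref{prop:1030} to a two-term sum by showing that the $j=3$ and $j=4$ summands coincide with the $j=1$ and $j=2$ summands respectively. This converts the prefactor $-\delta(0)/(2\pi i)$ into $-\delta(0)/(\pi i)$ and yields \eqref{eq:Rnt1030B} immediately, so the whole argument is a symmetry bookkeeping built on relations already proved.

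First I would assemble the four elementary symmetry relations linking index $j$ to index $j+2$ ($j=1,2$). From $S_{j+2}=-S_j$ we get $S_{j+2}^{-2}=S_j^{-2}$. Since $\beta_j=(-1)^j 2^{-1} i (4t^2-n^2)^{-1/4}S_j$ and $(-1)^{j+2}=(-1)^j$, the same relation $S_{j+2}=-S_j$ forces $\beta_{j+2}=-\beta_j$. The Remark following \eqref{eq:D1D2december} gives $\delta_{j+2}^0=(-1)^n\delta_j^0$, whence $(\delta_{j+2}^0)^{-2}=(\delta_j^0)^{-2}$, the sign disappearing upon squaring. Finally, the computations of Cases A and B supply $(m_1^3)_{21}=-(m_1^1)_{21}$ and $(m_1^4)_{21}=-(m_1^2)_{21}$, both flowing from $r(S_{j+2})=-r(S_j)$.

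Next I would multiply, for fixed $j\in\{1,2\}$, the four factors carrying index $j+2$ against those carrying index $j$. The factors $(\delta_j^0)^{-2}$ and $S_j^{-2}$ are invariant, while $\beta_{j+2}=-\beta_j$ and $(m_1^{j+2})_{21}=-(m_1^j)_{21}$ each contribute a minus sign, and these two cancel. Hence
\[
 \beta_{j+2}(\delta_{j+2}^0)^{-2}S_{j+2}^{-2}(m_1^{j+2})_{21}
 =\beta_j(\delta_j^0)^{-2}S_j^{-2}(m_1^j)_{21}
 \qquad (j=1,2),
\]
so that the sum over $j=1,\dots,4$ in \eqref{eq:Rnt1030} equals twice the sum over $j=1,2$. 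Substituting this identity into Proposition~\ref{prop:1030} absorbs the factor $2$ into the prefactor and produces \eqref{eq:Rnt1030B}, with the $O(t^{-1}\log t)$ error carried over unchanged.

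Because every ingredient is already in place, the content of the argument is entirely the sign accounting; the only point demanding genuine care is verifying that the two sign reversals (one from $\beta_{j+2}=-\beta_j$, one from $(m_1^{j+2})_{21}=-(m_1^j)_{21}$) cancel rather than reinforce, and that the factor $(-1)^n$ in $\delta_{j+2}^0$ is annihilated by the squaring in $(\delta_{j+2}^0)^{-2}$. No new analytic estimate is required.
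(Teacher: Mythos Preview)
Your proposal is correct and follows exactly the same approach as the paper: you invoke the four symmetry relations $\beta_{j+2}=-\beta_j$, $(\delta_{j+2}^0)^{2}=(\delta_j^0)^{2}$, $S_{j+2}^{2}=S_j^{2}$, and $(m_1^{j+2})_{21}=-(m_1^j)_{21}$ to show that the $j=3,4$ summands in \eqref{eq:Rnt1030} duplicate the $j=1,2$ summands, collapsing the four-term sum to twice a two-term sum. The paper's proof is the same argument stated in a single sentence; your version simply spells out the sign bookkeeping more carefully.
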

\begin{proof}
In (\ref{eq:Rnt1030}), the third and the fourth terms are 
identical with the first and the second respectively, 
because  
$\beta_{j+2}=-\beta_j$, $(\delta^0_{j+2})^2=(\delta^0_j)^2$, 
$S_{j+2}^2=S_j^2$ and  $(m^{j+2}_1)_{21}=- (m^j_1)_{21}$. 
Notice that $(m^j_1)_{21}$ is referred to as $M_j$ in 
Theorem~\ref{thm:main}.
\end{proof}

\subsection{Proof of Theorem~\ref{thm:main}}

\smallskip
Substitute (\ref{eq:mj112}) and (\ref{eq:mj121even}) into 
(\ref{eq:Rnt1030B}). Then we get Theorem~\ref{thm:main} 
in view of (\ref{eq:deltaj0defdecember}) and (\ref{eq:D1D2december}). 
See Remark~\ref{rem:rSj=0} for the case $r(S_j)=0$.

{Hideshi \sc{YAMANE}}
{Department of Mathematical Sciences,  Kwansei Gakuin 
University, \\Gakuen 2-1, Sanda, Hyogo 669-1337, Japan, \\
\texttt{yamane@kwansei.ac.jp}}

\label{finishpage}

\end{document}